\newtheorem{theorem}{Theorem}[section]
\newtheorem{lemma}[theorem]{Lemma}
\newtheorem{corollary}[theorem]{Corollary}
\newtheorem{observation}[theorem]{Observation}
\newtheorem{fact}[theorem]{Fact}
\theoremstyle{definition}
\newcommand{\mypar}[1]{\noindent{\sffamily\bfseries #1.}~}
\newcommand{\etal}{{et al.}\xspace}
\newcommand{\eps}{\epsilon}
\newcommand{\hrho}{\hat{\rho}}
\newcommand{\Gunw}{G^{\text{unw}}}
\newcommand{\hf}{\hat{f}}
\newcommand{\hl}{\hat{\ell}}
\newcommand{\hD}{\hat{D}}
\newcommand{\din}{d_{in}}
\newcommand{\tD}{\widetilde{D}}
\newcommand{\NN}{\mathbb{N}}
\newcommand{\RR}{\mathbb{R}}
\newcommand{\exact}{\textsc{Exact}\xspace}
\newcommand{\hwc}{\textsc{HWC}\xspace}
\newcommand{\udshp}{\textsc{Udshp}\xspace}
\newcommand{\wdshp}{\textsc{Wdshp}\xspace}
\newcommand{\EE}{\mathbb{E}}
\newcommand{\maxweight}{w_{\max}}
\newcommand{\Gp}{G_{q(\trho)}}
\newcommand{\trho}{\widetilde{\rho}}
\newcommand{\tagu}{\texttt{tag-ask-ubuntu}\xspace}
\newcommand{\tagm}{\texttt{tag-math-sx}\xspace}
\newcommand{\tagsf}{\texttt{tag-stack-overflow}\xspace}
\newcommand{\cmag}{\texttt{Coauth-MAG}\xspace}
\newcommand{\dawn}{\texttt{DAWN}\xspace}
\newcommand{\dblpall}{\texttt{DBLP-All}\xspace}
\title{A New Dynamic Algorithm for Densest Subhypergraphs\footnote{An extended abstract of this paper appears in The Web Conference (formerly WWW) 2022}}
\author{
Suman K. Bera\thanks{Katana Graph, USA. The main work was done while the author was at UC Santa Cruz and was supported by NSF TRIPODS grant CCF-1740850, NSF CCF-1813165, CCF-1909790, and ARO Award W911NF1910294.} \and Sayan Bhattacharya\thanks{University of Warwick, UK. Supported by Engineering and Physical Sciences Research Council, UK (EPSRC) Grant EP/S03353X/1.} \and Jayesh Choudhari\footnotemark[3] \and
Prantar Ghosh\thanks{Dartmouth College, USA. Supported in part by NSF
under award CCF-1907738.}
}
\date{}
\begin{document}

\maketitle

\begin{abstract}
 Computing a dense subgraph  is a fundamental problem in graph mining, with a diverse set of applications ranging from electronic commerce to community detection in social networks. In many of these applications, the underlying context is  better modelled as a weighted hypergraph that keeps evolving with time. 
 
 This motivates the problem of  maintaining the densest subhypergraph of a weighted hypergraph in a {\em dynamic setting}, where the input keeps changing via a sequence of updates (hyperedge insertions/deletions). Previously, the only known algorithm  for this problem was due to Hu et al.~\cite{hu2017maintaining}. This algorithm  worked only on unweighted hypergraphs, and had an approximation ratio of   $(1+\epsilon)r^2$ and an update time of $O(\text{poly} (r, \log n))$, where $r$ denotes the maximum rank of the input across all the updates. 
 
 We obtain a new algorithm for this problem, which works even when the input hypergraph is weighted. Our algorithm  has a significantly improved (near-optimal) approximation ratio of $(1+\epsilon)$ that is independent of $r$, and a similar update time of $O(\text{poly} (r, \log n))$. It is the first $(1+\eps)$-approximation algorithm even for the special case of weighted simple graphs.

To complement our theoretical analysis, we perform experiments
with our dynamic algorithm on large-scale, real-world data-sets.
Our algorithm significantly outperforms the state of the art \cite{hu2017maintaining}
both in terms of accuracy and efficiency.
\end{abstract}

\section{Introduction}
In the weighted densest subhypergraph (WDSH) problem, we are given a weighted hypergraph $G=(V,E,w)$ as input, where $w:E\to \RR^+$ is a weight function. The {\em density} of any subset of vertices $U \subseteq V$ in $G$ is defined as $\rho_G(U):= (\sum_{e\in E[U]} w_e)/|U|$, where $E[U]$ is the set of hyperedges induced by $U$ on~$G$. The goal is to find a subset of vertices $U \subseteq V$ in $G$ with maximum density.

We consider the {\em dynamic} WDSH problem, where the input hypergraph $G$ keeps changing via a sequence of {\em updates}. Each update either deletes a hyperedge from $G$, or inserts a new hyperedge $e$ into $G$ and specifies its weight $w_e$. In this setting, the {\em update time} of an algorithm refers to the time it takes to handle an update in~$G$. We want to design an algorithm that maintains a (near-optimal) densest subhypergraph in $G$ with  small update time. 

The {\em rank of a hyperedge} $e$  is  the number of vertices incident on~$e$. The {\em rank of a hypergraph} is the maximum rank among all its hyperedges. Let $r$ denote an upper bound on the {\em rank} of  of the input hypergraph throughout the sequence of updates. Let $n$ be the number of nodes and $m$ be an upper bound on the number of hyperedges over the sequence of updates. Our main result is summarized below.

\begin{theorem}(Informal)
\label{thm:main:informal}
There is a randomized  $(1+\epsilon)$-approximation algorithm for the dynamic WDSH problem with $O(r^2 \cdot \text{polylog} (n, m))$ worst case update time, for every sufficiently small constant $\epsilon > 0$. 
\end{theorem}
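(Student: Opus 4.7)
The plan is to employ the standard ``guess the density'' framework: maintain $O(\epsilon^{-1} \log(n \maxweight))$ parallel data structures, one per geometrically spaced guess $\trho$ of the optimal density $\optDensity$, and return the best candidate solution across all instances. Each instance is a decision procedure that, for its own guess $\trho$, either produces a subhypergraph of density at least $\trho/(1+\epsilon)$ or certifies that no subhypergraph attains density exceeding $(1+\epsilon)\trho$. Taking the largest $\trho$ whose instance succeeds yields a $(1+\epsilon)^2$-approximate solution, which after rescaling $\epsilon$ gives the claimed $(1+\epsilon)$-approximation.

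\textbf{The per-guess data structure.} Each per-guess instance is built on LP duality for Charikar's LP extended to hypergraphs: $\optDensity$ equals the minimum, over fractional hyperedge orientations $\{x_{e,v}\}_{v \in e}$ with $\sum_{v \in e} x_{e,v} = w_e$, of the maximum vertex load $\ell(v) = \sum_{e \ni v} x_{e,v}$. I would maintain an approximate low-load orientation via a level-based data structure: each vertex is assigned an integer level in $[0, L]$ with $L = O(\epsilon^{-1}\log n)$, and each hyperedge $e$ is directed so that its weight flows (mostly) to its minimum-level endpoints. A vertex is promoted when its load exceeds a level-dependent threshold scaled by $\trho$, and demoted when it drops below. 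The candidate densest subhypergraph is extracted as a ``top-prefix'' of the level hierarchy.

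\textbf{Update handling.} On a hyperedge insertion, the new hyperedge is routed to its lowest-level endpoints and any cascading level changes are propagated; on a deletion, the contribution is removed and loads are rebalanced locally. A single vertex level change costs $O(r)$ work per incident hyperedge (since each of up to $r$ endpoints must be touched to re-route), and a potential-function argument in the spirit of Bhattacharya--Henzinger--Nanongkai--Tsourakakis would bound total work at $O(r^2 \cdot \mathrm{polylog}(n,m))$ per update in an amortized sense. To reach worst-case guarantees, I would deamortize via a standard two-copy interleaving scheme: run two staggered instances and spread the rebuilding work of one across the updates handled by the other.

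\textbf{Main obstacles.} The principal difficulty is driving the approximation ratio down to $(1+\epsilon)$ for hypergraphs: the natural level-based scheme gives $(2+\epsilon)$ for simple graphs and degrades by a factor of $r$ in the hypergraph setting (as in Hu et al.). To close this gap, one must show that at the correct guess $\trho \approx \optDensity$, the top-prefix of the structure truly matches $\optDensity$ up to a $(1+\epsilon)$ factor, even though a single hyperedge may straddle many levels and thereby muddy the accounting of induced density. I expect this to need a carefully tuned invariant in the style of Sawlani--Wang, adapted to hypergraphs, with two added complications. First, weights: a single heavy hyperedge can displace a vertex across many levels simultaneously, so the level discretization and the promotion/demotion rule must accommodate such bulk moves without blowing up the worst-case time bound. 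Second, the extraction step must reconcile the fractional LP certificate with an integer subhypergraph whose weighted density realizes the bound; this is the step where a delicate threshold-rounding argument, tailored to rank-$r$ hyperedges and arbitrary weights, will be the crux of the proof.
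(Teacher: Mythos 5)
Your high-level framework---LP duality viewed as a load-balancing problem, geometric guesses of the density run in parallel---matches the paper's, but the two steps you yourself flag as ``the crux'' are exactly the ones the paper's proof supplies and your proposal does not, so there is a genuine gap rather than an alternative route. First, the approximation ratio. You correctly observe that the level-hierarchy scheme of Bhattacharya et al.\ gives $(2+\eps)$ for simple graphs and degrades by a factor of $r$ for hypergraphs, and you then defer to ``a carefully tuned invariant in the style of Sawlani--Wang'' without stating one. The paper's invariant is concrete and is the heart of the result: maintain a fully \emph{integral} orientation in which every hyperedge $e$ with head $v$ satisfies $\din(v)\le \din(u)+\eta$ for all $u\in e\setminus\{v\}$, with slack $\eta=\Theta(\eps^2\hD/\log n)$. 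The analysis (\Cref{thm:approx}) considers the prefix sets $S_i=\{v:\hl(v)\ge \hD-\eta i\}$, shows they can grow by a $(1+\gamma)$ factor for at most $O(\log n/\gamma)$ steps, and proves that the first set failing to grow has density at least $\hD\bigl(1-\sqrt{8\eta\log n/\hD}\bigr)$. The loss is independent of $r$ because the slack condition forces $f_e(v)=0$ unless \emph{every} endpoint of $e$ lies in the next prefix, so whole hyperedges are charged to $|E(S_{k+1})|$; no threshold-rounding of a fractional certificate is needed, since the slack already licenses putting each hyperedge's entire unit load on a single head. Without this lemma your scheme has no path below a ratio of roughly $(2+\eps)$ even for graphs, let alone an $r$-independent $(1+\eps)$.

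Second, weights. You note that a heavy hyperedge can displace a vertex across many levels at once, but you offer no mechanism that avoids $\Omega(\maxweight)$ work per update, which the paper explicitly identifies as the failure mode of the naive multigraph reduction. The paper's fix is orthogonal to the orientation structure: conceptually replace each weight-$w_e$ hyperedge by $w_e$ unit copies, then subsample each copy independently with probability $q(\trho)=\min\{c\eps^{-2}\log n/\trho,\,1\}$. For any guess $\trho\ge \maxweight/r$ (a valid lower bound on $\rho^*$ by \Cref{obs:maxwtdensityrel}), the sampled multiplicity of every hyperedge is $O(r\eps^{-2}\log n)$ whp (\Cref{lem:maxweight}), and the sparsification guarantee (\Cref{lem:babis}) lets the sampled density certify whether the guess is correct. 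Each weighted update therefore costs only $O(r\eps^{-2}\log n)$ unweighted updates per guess, with the number of copies drawn in $O(1)$ time from a precomputed binomial sampler. Finally, the paper obtains worst-case bounds directly---the rotation chain has length at most $\hD/\eta=O(\eps^{-2}\log n)$ and the lazy round-robin informing keeps each rotation cheap---so no deamortization is required; your proposed two-copy interleaving is an additional construction you would still have to make rigorous for a data structure whose state is a single global orientation.
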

Note that a  naive approach for  this problem would be to run a static algorithm from scratch after every update, which leads to  $\Omega(r \cdot (n+m))$ update time.  As $r$ is a small constant in most practical applications, the update time of our dynamic algorithm is {\em exponentially smaller} than the update time of this naive approach.

\subsection{Perspective}

Computing a dense subgraph is a fundamental primitive in graph mining~\cite{goldberg1984finding, LangA07, BahmaniKV12, tsourakakis2013denser, BalalauBCGS15, GionisT15, tsourakakis2015k, BoobGPSTWW20}. 
Over the course of past several decades, it has been found to be useful in a range  of different contexts, such as {\em community detection}~\cite{ChenS12} and {\em piggybacking} on social networks~\cite{GionisJLSW13}, {\em spam detection} in the web~\cite{DGibsonKT05}, {\em graph compression}~\cite{FederM95}, {\em expert team formation}~\cite{BonchiGKV14},  {\em real-time story identification}~\cite{AngelKSS12}, {\em computational biology}~\cite{SahaHKRZ10} and {\em electronic commerce}~\cite{LangA07}. There are three features that stand out from this diverse list of applications and motivate us to study the more general dynamic WDSH problem. { (I) Real-world networks are often {\em dynamic}, in the sense that they change  over time.} { (II) The underlying real-world context is often easier to capture by making the graph edges {\em weighted}.} { (III) It is often more beneficial to model the underlying network as a {\em hypergraph} rather than a standard graph.} 

In order to appreciate the significance of these three features, consider two concrete real-world examples.

\mypar{Example 1: Real-time story identification}The wide popularity of social media yields overwhelming  activity by millions of users at all times in the form of, say, tweets, status updates, or blog posts. These are often related to important current events or stories that one might seek to identify in real time. For instance, consider the recent Israel-Palestine conflict in May 2021. After the outbreak of the conflict, multiple incidents occurred in quick succession that are important to be promptly identified. An efficient technique for real-time story identification is focusing on certain ``entities'' associated with a story, e.g., famous personalities, places, organizations, products, etc. They consistently appear together in the numerous posts on the related story. In the example of the Israel-Palestine conflict, countless online posts have turned up about the several events, many of which feature the same small set of entities, e.g., {\em Israel}, {\em Palestine}, {\em Hamas}, {\em Gaza}, {\em Sheikh Jarrah}, and {\em airstrike}, or subsets thereof. This correlation can be leveraged: the set of all possible real-world entities (which can be billions) represented by nodes, with an edge connecting each pair iff they appear together in a post, define a graph that changes dynamically over time; maintaining a dense subgraph of this network helps us to identify the group of most strongly-related entities (in the example above, this group might be \{{\em Hamas}, {\em Gaza}, {\em Sheikh Jarrah}, {\em airstrike}\}), and in turn, the trending story \cite{AngelKSS12}.

Note the significance of feature (I) here: the number of posts keeps growing rapidly, thus dynamically modifying the underlying graph. Further, a large number of posts gets deleted over time. This is often driven by the proliferation of {\em fake news} and its eventual removal upon detection. Also notice that feature (II) is crucial for this task. Every minute, millions of entities get mentioned in a small number of posts. The few entities in the story of interest, however, collectively appear in a {\em massive} number of posts.
Therefore, to make them stand out, we can assign to the graph edges {\em weights} proportional to the number of posts they represent. Thus, the densest subgraph is induced by the union of the entities in the story. Finally, observe the importance of feature (III) in this context. For a post mentioning multiple entities, instead of adding an edge between each pair of them, we can simply include all of them in a single {\em hyperedge}. The standard graph formulation creates a clique among those nodes, which makes the density of the post proportional to the number of entities mentioned. This is inaccurate for several applications. In contrast, having a single hyperedge represent a post removes this bias. The task of real-time story identification thus reduces to precisely the dynamic WDSH problem.


\mypar{Example 2: Trending Topics Identification} Consider the setting where we wish to identify a set of recently trendy topics in a website like Stack Overflow. We can model this scenario as a network where each node corresponds to a tag, and there is a hyperedge containing a set of nodes iff there is a post with the corresponding set of tags. The weight of a hyperedge represents the reach of a post, captured by, say, the number of responses it generates. The set of recently trendy topics will be given by the set of tags that form the densest subhypergraph in this network. The network is dynamic: posts are added very frequently and deletions are caused not only by their actual removal but also by our interest in only the ones that appeared (say) within the last few~days. 

Other applications of the WDSH problem include identifying a group of researchers with the most impact \cite{hu2017maintaining} and analysing spectral properties of hypergraphs \cite{ChanLTZ18}.

\mypar{Previous work}In the static setting, there is a linear programming based exact algorithm for computing the densest subgraph that runs in polynomial time. In addition, a simple linear-time greedy algorithm gives $2$-approximation for this problem~\cite{Charikar00}. 

Starting with the work of Angel et al.~\cite{AngelKSS12}, in recent years a sequence of papers have dealt with the densest subgraph problem in the dynamic setting. Epasto et al.~\cite{EpastoLS15} considered a scenario where the input graph undergoes a sequence of adversarial  edge insertions and random edge deletions, and designed a dynamic $(2+\epsilon)$-approximation algorithm  with $O(\text{poly} \log n)$ update time. In the standard (adversarial) fully dynamic setting, Bhattacharya et al.~\cite{bhattacharya2015DensestSubgraph} gave a $(4+\epsilon)$-approximation algorithm with $O(\text{poly} \log n)$ update time. This latter result was recently improved upon by Sawlani and Wang~\cite{SawlaniW20}, who obtained a $(1+\epsilon)$-approximation algorithm with $O(\text{poly} \log n)$ update time. All these results, however, hold only on {\em unweighted} simple graphs (i.e., hypergraphs with rank $2$). Our algorithm, in contrast, works for {\em weighted} rank-$r$ hypergraphs and is the first $(1+\eps)$-approximation algorithm with $O(\text{poly} \log n)$ update time even for the special case of edge-weighted simple graphs.\footnote{A vertex-weighted version for simple graphs was solved by \cite{SawlaniW20}. However, it doesn't seem to generalize to the edge-weighted case.}

For general rank-$r$ hypergraphs, the only dynamic algorithm currently known in the literature was designed by Hu et al.~\cite{hu2017maintaining}: in the fully dynamic setting, their algorithm has an approximation ratio of $(1+\epsilon)r^2$ and an {\em amortized} update time of $O(\text{poly} (r, \log n))$. In direct contrast, as summarized in Theorem~\ref{thm:main:informal}, our approximation ratio is near-optimal (and independent of $r$), and our update time guarantee holds in the {\em worst case}. Furthermore, our algorithm works even when the hyperedges in the input hypergraph have large general weights, whereas the algorithm in~\cite{hu2017maintaining} needs to assume that the input hypergraph is either unweighted or has very small weights of value at most $O(\text{poly} (r, \log n))$.

As an aside, we remark that because of the fundamental importance of this problem in various graph mining applications, efficient algorithms for computing a dense subgraph have also been designed in other related computational models such as mapreduce~\cite{BahmaniKV12,BahmaniGM14} and semi-streaming~\cite{mitzenmacher2015scalable,EsfandiariHW16,McGregorTVV15}.

\mypar{Significance of our result}Given this background, let us now emphasize three aspects of our  result as stated in \Cref{thm:main:informal}.

First, the approximation ratio of our algorithm can be made arbitrarily close to $1$, and in particular, it is independent of the rank $r$ of the input hypergraph. For example, if $r = 3$, then~\cite{hu2017maintaining} can only guarantee that in the worst case, the objective value of the solution maintained by their algorithm is at least $(100/r^2)\% \simeq 11\%$ of the optimal objective value. In contrast, for {\em any} $r$, we can guarantee that the objective value of the solution maintained by our algorithm is always within $\simeq 99\%$ of the optimal objective value. In fact, since $r$ can be, in theory, as large as $n$, the improvement over the approximation ratio is massive.  

Second,  the update time of our algorithm is $O(r^2 \cdot \text{polylog}(n,m))$. Note that {\em any} dynamic algorithm for this problem will necessarily have an update time of $\Omega(r)$, since it takes $\Theta(r)$ time to even specify an update. It is not surprising, therefore, that the update time of~\cite{hu2017maintaining} also had a polynomial dependency on $r$. Since $r$ is a small constant in most practical applications, our update time is essentially $O(\text{polylog} (n, m))$ in these settings. 

Third, our dynamic algorithm works for weighted graphs, which, as noted above, are crucial for applications. Throughout the rest of this paper, we assume that the weight of every hyperedge is a positive integer. This is without loss of generality for the following reason. If the weights are positive real numbers, we can always {\em scale} them by the same multiplicative factor (without changing the optimal solution), so as to ensure that the minimum possible weight of any hyperedge is at least $1/\epsilon$. After this scaling, we can {\em round} up the weight of each hyperedge to its nearest integer. This rounding step changes the weight of every hyperedge by at most a multiplicative factor of $(1+\epsilon)$. Hence, if we can maintain a $(1+\epsilon)$-approximately optimal solution in this rounded instance (where every weight is an integer), then this gives us  a $(1+\epsilon)^2 \simeq (1+2\epsilon)$-approximation to the original input instance. Finally, if the weights of the hyperedges are known to be integers in the range $[1, W]$, then a naive approach would be to make $w_e$ copies of every hyperedge $e$ when it gets inserted, and maintain a near-optimal solution in the resulting unweighted hypergraph. This, however, leads to an update time of $\Theta(W)$. This is prohibitive when $W$ is large. In contrast, our algorithm has polylogarithmic update time for any $W$.
\subsection{An Overview of Our Techniques}

We obtain the result stated in Theorem~\ref{thm:main:informal} in two major steps. In the main body, we present these steps in reverse order for the ease of presentation.

{\bf Step I: Extending the result of~\cite{SawlaniW20} to the unweighted densest subhypergraph (UDSH) problem.}\footnote{In this version, every hyperedge has weight $= 1$.} Similar to~\cite{SawlaniW20},
 we take the primal-dual
approach. 
We observe that
the dual of the UDSH problem
can be interpreted as a load balancing 
problem, where each hyperedge needs
to distribute one unit of {\em load} among its
nodes in a way that minimizes the maximum load
received by a  vertex from all its incident 
edges. In an optimal 
solution, if a hyperedge $e$ assigns positive 
load to some node $v \in e$, then the total 
load on $v$ (from all its incident 
edges) cannot be larger than the total load on
some other node $u \in e \setminus \{v\}$; for otherwise, $e$ can transfer some 
of the load it puts on $v$ to $u$ 
without increasing the objective value. This gives us a ``local''
condition for optimality. We show that maintaining 
the local condition with some {\em slack} 
yields a near-optimal global solution. 
Interestingly, the
approximation guarantee we obtain from this analysis {\em does not} grow with the rank $r$ of the input hypergraph.

Moving on, the slack in the local condition allows us to do away with
fractional loads and have integral 
loads only. This means that a hyperedge 
can put its entire unit load on a 
single node rather than distribute 
it fractionally. Thus, the problem 
now reduces to a hypergraph 
``orientation'' problem: for each 
hyperedge $e$, select a node $v \in e$ as its ``head'' (think of 
``orienting'' the edge $e$ towards 
the node $v$) such that the total 
load on $v$ (i.e., the total number 
of hyperedges who select $v$ as 
their head) is at most the total load on $u$ {\em plus} some slack $\eta$, for all $u \in e \setminus \{ v\}$. The idea behind efficiently maintaining this orientation is simple: when a hyperedge $e$ is inserted, we select its least loaded vertex as the head. If this violates the local condition for some other hyperedge $e'$, then we ``rotate'' $e'$ towards (i.e., reassign its head to) its least loaded vertex, and so on and so forth. This process terminates when every hyperedge satisfies the local condition. Since the load goes down by an additive factor of at least $\eta$ along the head-nodes of this {\em affected chain} of hyperedges, the length of the chain can be at most $\hD/\eta$, where $\hD$ is the maximum load over all nodes. Our analysis shows that $\hD/\eta$ is roughly $\log n$, which implies that the number of hyperedge rotations we need to perform per insertion is roughly $\log n$. The deletions can be handled similarly. It now remains to design suitable data structures that can actually {\em detect} these violations and perform the rotations efficiently, leading to small update time.

The main difficulty in designing such  data structures is the following. Any change on the load of a node $v$ needs to be
 communicated to all the vertices in all the hyperedges $e$ that have $v$ as their head-node. But there can be
as many as $\hD=\Theta(m)$ such hyperedges,
implying a huge update time. Hence,  we use the  ``lazy round-robin informing'' 
technique of~\cite{SawlaniW20}, which leverages the slack 
$\eta$ to hold off communicating to certain vertices  until some later time, while still
maintaining the local condition. 



{\bf Step II: From unweighted to the weighted version.} Next, we use our algorithm for the dynamic UDSH problem as a subroutine to obtain one for the dynamic WDSH problem. For this, a naive approach would be to  view a weighted hypergraph as an
unweighted multi-hypergraph (by making as 
many unweighted copies of a hyperedge as 
its weight). Unfortunately, for each hyperedge 
insertion in WDSH, this might require us to perform 
$w_e$ insertions in UDSH, where $w_e$ is 
the weight of the edge. Thus, this approach will lead to an update time of 
$\Omega(\maxweight)$, where $\maxweight$ denotes the max-weight of an edge in the weighted hypergraph, or equivalently, the max-multiplicity of an edge in its unweighted version. Observe that, however, if $\maxweight$
is ``small'', then we have an efficient
algorithm. Accordingly, we show how to  scale down the weights of the hyperegdes in the input hypergraph so as to make $\maxweight$ small, while simultaneously scaling down the maximum density by a known factor. This leads to the desired  algorithm for the WDSH problem. 

At this point, the ``sparsification''
technique of~\cite{mitzenmacher2015scalable} comes in handy. Their result implies that if we sample each edge of the unweighted version of the hypergraph with probability roughly $\log n/\trho$, for a $(1\pm \eps)$-estimate $\trho$ of the max-density $\rho^*$, then the sampled hypergraph has max-density $\simeq \log n$. We can now make polylogarithmic many parallel guesses for $\trho$ in powers of $(1+\eps)$, and run over all of them to find the correct one that gives us a maximum density of  $\simeq \log n$ in the corresponding sampled hypergraph. This means that we can solve the WDSH problem by running the UDSH algorithm on the sampled hypergraphs corresponding to the guesses of $\trho$. Thus, our algorithm is efficient if each of them has small max-multiplicity. We show that as long as the guess $\trho$ is at least $\maxweight/r$ (a trivial lower bound on $\rho^*$, which roughly implies a sampling probability of at most $r\log n/\maxweight$), the max-multiplicity of an edge in the sampled hypergraph is  $O(r\log n)$ whp. This is small enough to give us an efficient algorithm for WDSH. 

\subsection{Experimental Evaluation}
\label{subsec:intro:exp}
We conduct extensive experiments to demonstrate the effectiveness of
our algorithm in both fully dynamic and insertion-only settings with weighted and unweighted hypergraphs. We test our algorithm on several real-world temporal hypergraph datasets (see~\Cref{sec:exp}). Based on our analysis, we report the following key findings. 

\mypar{Key Findings} 
\begin{inparaenum}[\bfseries (1)]
\item For unweighted fully dynamic settings, our algorithm significantly outperforms the state of the art of~\cite{hu2017maintaining} both in terms of accuracy and efficiency. For smaller datasets, at comparable accuracy with~\cite{hu2017maintaining}, our algorithm provides 7x-10x speed-up on the average update time. For larger datasets, for comparable update times, our algorithms offers 55\%-90\% improvement in accuracy over~\cite{hu2017maintaining}~(\Cref{fig:dyn_unweighted_time_max_avg_density}). 
\item A similar trend persists in the unweighted insertion-only settings as well. On average, our algorithm potentially achieves more than 80\% improvement in accuracy while being 2x-4x times faster compared to~\cite{hu2017maintaining}~(\Cref{fig:inc_unweighted_time_max_avg_error}).
\item In comparison against an LP solver for computing the exact solution, our algorithm offers remarkable scalability. In the insertion-only weighted and unweighted settings, our algorithms are on average at least a few hundred times faster in most cases, while incurring less than a few percentage points of relative error. For the dynamic weighted and unweighted settings, the gain is significant for larger datasets: at least 10x reduction in average update time while incurring less than 10\% relative error. 
For smaller datasets, we achieve 2x-5x speed-up while sacrificing less than 4\% in accuracy (Figures \ref{fig:avg_time_error_dyn_unweighted}, \ref{fig:avg_time_error_dyn_weighted}, and \ref{fig:avg_time_error_inc_unweighted}).
\end{inparaenum}


\section{Preliminaries and Notations}
\label{sec:prelim}
Let us fix the notations that we use throughout the paper. Our input weighted hypergraph is always a rank-$r$ hypergraph denoted by $G=(V,E,w)$, where $w: E\to \NN$ is a weight function. We denote the number of vertices $|V|$ and hyperedges $|E|$ (or an upper bound on it) by $n$ and $m$ respectively. The maximum weight of a hyperedge in $G$ is given by $\maxweight(G) := \max_{e\in E} ~w_e$. The {\em multiplicity} of an edge in a multi-hypergraph is its number of copies in the hypergraph. For a subset of nodes $U\subseteq V$, denote its density in $G$ by \[\rho_G(U):=\frac{\sum_{e\in E[U]} w_e}{|U|}\,,\]
where $E[U]$ is the set of hyperedges induced by $U$ on $G$. If the hypergraph is unweighted, then the density of $U$ is simply $\rho_G(U)=|E[U]|/|U|$. We denote the maximum density of $G$ by \[\rho^*(G):= \max_{U\subseteq V} \rho_G(U)\]
We drop the argument $G$ from each of the above when the hypergraph is clear from the context.

We use the shorthands WDSH and UDSH for weighted and unweighted densest subhypergraph respectively. For the dynamic WDSH and UDSH problems, we get two types of queries: (a) max-density query, which asks the value of the maximum density over all subsets of nodes of the hypergraph, and (b) densest-subset query, which asks for a subset of nodes with the maximum density. We say an algorithm {\em maintains} an $\alpha$-approximation $(\alpha>1)$ to either of these problems if it answers every max-density query with a value that lies in  $[\rho^*/\alpha, \rho^*]$ and every densest-subset query with a subset of nodes whose density lies in $[\rho^*/\alpha, \rho^*]$. 

Given any weighted hypergraph $G$, we denote by $\Gunw$ the unweighted multi-hypergraph obtained by replacing each edge $e$ with weight $w_e$ by $w_e$ many unweighted copies of $e$. Note that $G$ and $\Gunw$ are equivalent in terms of subset densities.

We say that a statement holds {\em whp} (with high probability) if it holds with probability at least $1-1/\text{poly}(n)$.

We use the following version of the Chernoff bound.
\begin{fact}(Chernoff bound)\label{fact:chernoff}
Let $X$ be a sum of mutually independent indicator random variables. Let $\mu$ and $\delta$ be real numbers such that $\EE[X] \le \mu$ and $0 \le \delta \le 1$. Then,
  $\Pr\left[ |X-\mu| \ge \delta\mu \right] \le \exp\left( - \mu \delta^2 / 3 \right)$.
\end{fact}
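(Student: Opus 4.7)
The plan is the classical Cram\'er--Chernoff argument based on the moment generating function (MGF). Writing $X = \sum_{i=1}^n X_i$ with the $X_i$ independent Bernoullis, $p_i = \Pr[X_i = 1]$, and $\mu_0 = \EE[X] = \sum_i p_i \le \mu$, I would attack the upper and lower tails separately and then union-bound them.

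For the upper tail $\Pr[X \ge (1+\delta)\mu]$, I fix $t > 0$ and rewrite the event as $\{e^{tX} \ge e^{t(1+\delta)\mu}\}$. Markov's inequality then gives $\Pr[X \ge (1+\delta)\mu] \le \EE[e^{tX}]/e^{t(1+\delta)\mu}$. Independence factorizes the MGF: $\EE[e^{tX}] = \prod_i(1+p_i(e^t-1))$. Applying $1+x \le e^x$ factorwise, and using $\mu_0 \le \mu$ together with the fact that $e^t-1 > 0$, I obtain $\EE[e^{tX}] \le e^{\mu(e^t-1)}$. Optimizing over $t$ by the choice $t = \ln(1+\delta)$ yields the familiar Chernoff expression $\bigl(e^\delta/(1+\delta)^{1+\delta}\bigr)^\mu$, which I then convert to $e^{-\mu\delta^2/3}$ using the one-variable inequality $(1+\delta)\ln(1+\delta) - \delta \ge \delta^2/3$ valid on $\delta \in [0,1]$. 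The lower tail $\Pr[X \le (1-\delta)\mu]$ is handled symmetrically: apply Markov to $e^{-tX}$ for $t>0$, factorize the MGF by independence, optimize at $t = -\ln(1-\delta)$, and upgrade the resulting expression $\bigl(e^{-\delta}/(1-\delta)^{1-\delta}\bigr)^\mu$ to $e^{-\mu\delta^2/2} \le e^{-\mu\delta^2/3}$ via the analogous calculus inequality. A union bound over the two events delivers the two-sided statement.

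The only genuinely nontrivial step is the one-variable calculus lemma that converts the tight Chernoff form into the clean $e^{-\mu\delta^2/3}$ bound; the constant $3$ (rather than the sharper $2$ attainable for a single tail) is chosen precisely so that a single exponent uniformly dominates both tails. Everything else follows the textbook MGF/Markov template and poses no further obstacle; in particular, having $\EE[X] \le \mu$ rather than equality only weakens the direction that matters in the intended application of this fact, namely the upper tail used for bounding the max-multiplicity of edges in the sparsified hypergraph.
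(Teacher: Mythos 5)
The paper records this Chernoff bound as a Fact with no proof, so there is no in-paper argument to compare against; what follows is a review of your proposal on its own terms. Your upper-tail half is the standard, correct MGF computation: since $e^{t}-1>0$ for $t>0$, the hypothesis $\mu_0:=\EE[X]\le\mu$ legitimately gives $\EE[e^{tX}]\le e^{\mu_0(e^{t}-1)}\le e^{\mu(e^{t}-1)}$, and the optimization and the calculus inequality $(1+\delta)\ln(1+\delta)-\delta\ge\delta^{2}/3$ on $[0,1]$ are fine.

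The lower tail is where the argument genuinely breaks. The factorized MGF gives $\EE[e^{-tX}]\le e^{\mu_0(e^{-t}-1)}$, but now $e^{-t}-1<0$ for $t>0$, so $\mu_0\le\mu$ yields $e^{\mu_0(e^{-t}-1)}\ge e^{\mu(e^{-t}-1)}$ --- the inequality points the wrong way, and you cannot replace $\mu_0$ by $\mu$ in the exponent to reach $\bigl(e^{-\delta}/(1-\delta)^{1-\delta}\bigr)^{\mu}$. This is not a presentational slip that more care would fix: the two-sided statement is false as written under the hypothesis $\EE[X]\le\mu$. Take all indicators identically zero, so $X\equiv 0$ and $\EE[X]=0\le\mu$ with, say, $\mu=100$ and $\delta=1/2$; then $|X-\mu|=100\ge\delta\mu$ with probability $1$, while the claimed bound is $e^{-25/3}$. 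Your closing remark that the relaxed hypothesis ``only weakens the direction that matters'' is the right instinct, but it amounts to conceding that you have established only the upper tail, whereas your union bound claims both. The clean resolution is to prove the two-sided bound under the reading $\mu=\EE[X]$ (where your symmetric argument does go through, with constant $2$ on the lower tail), and to note separately that only the one-sided bound $\Pr[X\ge(1+\delta)\mu]\le e^{-\mu\delta^{2}/3}$ survives when $\EE[X]\le\mu$ strictly; this matches how the paper actually uses the Fact, since its one lower-tail application (the multiplicity of the maximum-weight edge being at least $w^{\star}$) is made at a point where the expectation is known exactly.
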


In \Cref{sec:weightedmain}, we show how we can use an algorithm for UDSH as a subroutine to solve the WDSH problem, thus proving our main result in \Cref{thmmain} (formal version of \Cref{thm:main:informal}). In \Cref{sec:unweighted}, we design the algorithm for the UDSH problem. Finally, in \Cref{sec:exp}, we give an account of our experimental evaluations.

\section{Reduction to the Unweighted Case}\label{sec:weightedmain}

In this section, we show that we can use an algorithm for the dynamic UDSH problem to obtain one for the dynamic WDSH problem while incurring only a small increase in the update and query times. In \Cref{sec:wtscale}, we show the use of randomization to scale down the edge weights so as to obtain a hypergraph whose maximum edge-weight is small and the max-density is scaled down by a known factor. 
Then, in \Cref{sec:wdshp}, we use this method to design and analyse our algorithm for WDSH.

\subsection{Weight Scaling}\label{sec:wtscale}

Given a weighted hypergraph, we want to scale down the weights to make the max-weight small and simultaneously scale down the max-density by a known factor so that we can retrieve the original density value from the scaled one. Since we want to reduce the problem to the unweighted case, we work with the unweighted multi-hypergraph versions (see \Cref{sec:prelim}) of the weighted hypergraphs in question. Thus, the maximum edge-weight would correspond to the max-multiplicity of an edge in the unweighted version. Informally, given a weighted hypergraph $G$ on $n$ vertices, we want to obtain an unweighted multi-hypergraph $H$ such that (a) maximum multiplicity of an egde in $H$ is roughly $O(\log n)$ and (b) given $\rho^*(H)$, we can easily obtain an approximate value of $\rho^*(G)$. We achieve these in \Cref{lem:maxweight,lem:babis} respectively. 

Given any weighted hypergraph $G$, we define $G_q$ as the random hypergraph obtained by independently sampling each hyperedge of $\Gunw$ with probability $q$.

For a parameter $\trho$, define $$q(\trho) := \min\left\{c \eps^{-2}\cdot \frac{\log n}{\trho}, 1\right\}$$ for some large constant $c$ and an error parameter~$\eps>0$.

Our desired multi-hypergraph $H$ will be given by $\Gp$ for some appropriate $\trho$. The following lemma shows that the max-multiplicity of $H$ is indeed small.
\begin{lemma}\label{lem:maxweight}
For $\trho\geq \maxweight(G)/r$, let $H=G_{q(\trho)}$. Then, maximum multiplicity of an edge in $H$ is $O(r\eps^{-2}\log n)$ whp.
\end{lemma}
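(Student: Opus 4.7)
Fix a distinct hyperedge $e$ of $G$, and let $X_e$ denote the number of copies of $e$ in $H = \Gp$. Since $\Gunw$ contains exactly $w_e$ copies of $e$ and each of them is sampled independently with probability $q := q(\trho)$, the random variable $X_e$ follows the binomial distribution $\mathrm{Bin}(w_e, q)$, with expectation $\mu_e := w_e \cdot q$. I will show that $X_e = O(r\eps^{-2}\log n)$ with probability $1 - 1/\mathrm{poly}(n)$ for each such $e$, and then take a union bound over the at most $n^r$ distinct hyperedges of $G$ to conclude.

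I split into two cases depending on whether $q = 1$ or $q < 1$. When $q = 1$, the variable $X_e = w_e$ is deterministic; moreover, $q = 1$ forces $c\eps^{-2}\log n/\trho \geq 1$, i.e.\ $\trho \leq c\eps^{-2}\log n$. Combining this with the hypothesis $\trho \geq \maxweight(G)/r$ gives $w_e \leq \maxweight(G) \leq r\trho \leq cr\eps^{-2}\log n$, which is the target bound. When $q < 1$, we have $q = c\eps^{-2}\log n/\trho$, and the same chain of inequalities yields
\[ \mu_e \;=\; w_e \cdot q \;\leq\; \maxweight(G)\cdot \frac{c\eps^{-2}\log n}{\trho} \;\leq\; r\trho \cdot \frac{c\eps^{-2}\log n}{\trho} \;=\; cr\eps^{-2}\log n. \]
So $\mu_e$ is already $O(r\eps^{-2}\log n)$; I only need a tail bound controlling the upward deviation. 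To apply Fact~\ref{fact:chernoff} (which requires its parameter $\mu$ to be an upper bound on $\EE[X_e]$), I set $\mu := \max(\mu_e,\, 3k\log n)$ for a sufficiently large constant $k$, and take $\delta = 1$. The fact then gives
\[ \Pr\!\left[X_e \geq 2\mu\right] \;\leq\; \Pr\!\left[|X_e - \mu| \geq \mu\right] \;\leq\; \exp(-\mu/3) \;\leq\; n^{-k}, \]
while $2\mu \leq 2\max(cr\eps^{-2}\log n,\, 3k\log n) = O(r\eps^{-2}\log n)$. Union-bounding over the at most $n^r$ distinct hyperedges of $G$ and choosing $k$ large enough relative to $r$ shows that no edge has multiplicity more than $O(r\eps^{-2}\log n)$ whp.

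The main subtlety is that Fact~\ref{fact:chernoff} only delivers useful concentration when the expectation is at least $\Omega(\log n)$, whereas $w_e$ could be much smaller than $\trho$, making $\mu_e$ potentially sub-logarithmic; padding the effective expectation up to $\max(\mu_e,\, \Theta(\log n))$ is precisely what allows the high-probability bound to go through without worsening the final multiplicity bound beyond $O(r\eps^{-2}\log n)$. Other than that, the argument is essentially bookkeeping: unpack the definition of $q(\trho)$, use the lower bound on $\trho$ to cap $\mu_e$, and apply Chernoff plus a union bound.
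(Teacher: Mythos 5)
Your proposal is correct and follows essentially the same route as the paper's proof: case-split on whether $q(\trho)=1$, bound $\EE[X_e]=w_e\, q(\trho)\leq cr\eps^{-2}\log n$ using $w_e\leq \maxweight(G)\leq r\trho$, apply the Chernoff bound of \Cref{fact:chernoff} with an upper bound $\mu=\Omega(\log n)$ on the expectation, and union-bound over the $O(n^r)$ possible hyperedges. Your explicit padding of $\mu$ up to $\Theta(\log n)$ is a slightly more careful writeup of what the paper does implicitly by plugging in the (already $\Omega(r\log n)$) upper bound $rc\eps^{-2}\log n$ as the parameter $\mu$; there is no substantive difference.
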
 
\begin{proof}
First note that if $\trho \leq c \eps^{-2}\log n$, then by definition, $H= \Gunw$. Thus, the max multiplicity of an edge in $H$ is $\maxweight(G)\leq r\trho = O(r\eps^{-2}\log n)$. Otherwise, fix an edge $e=\{u_1,\ldots,u_t\}$ in $G$ with weight $w_e$. Thus, the multiplicity of $e$ in $\Gunw$ is $w_e$. Let $X(u_1,\ldots,u_t)$ be the random variable denoting the multiplicity of the hyperedge $\{u_1,\ldots,u_t\}$ in~$H$. We have 
\[\EE[X(u_1,\ldots,u_t)]=w_e\cdot q(\trho) = c \eps^{-2}\cdot \frac{w_e \log n}{\trho} \leq rc\eps^{-2}\log n\,, \]
where the last inequality follows since $w_e\leq \maxweight(G) \leq r\trho$.

Then, using the Chernoff bound (\Cref{fact:chernoff}), we get that $X(u_1,\ldots,u_t) = O(r\eps^{-2}\log n)$  with probability at least $1-1/n^{r+1}$. By a union bound over all $\sum_{t=1}^r \binom{n}{t}\leq O(n^r)$ possible hyperedges, we get that, with probability at least $1-O(1/n)$, we have $X(u_1,\ldots,u_t) = O(r\eps^{-2}\log n)$ for all subsets of nodes $\{u_1,\ldots,u_t\}$ ($t\leq r$) in $H$. Therefore, the max multiplicity of a hyperedge in $H$ is $O(r\eps^{-2}\log n)$ whp.
\end{proof}
At the same time, we also need to ensure that we can retrieve the max-density and a  densest subset of $G$ from that of $H$. The next lemma, which follows directly from Theorem 4 of \cite{mitzenmacher2015scalable}, handles this.

\begin{lemma}\label{lem:babis}
Given a weighted hypergraph $G=(V,E,w)$, let $H=\Gp$ for a parameter $\trho$. Then, following hold simultaneously whp: 

(i) $\forall U\subseteq V:$ $\rho_G(U)\geq (1+\eps)\trho \Rightarrow \rho_{H}(U)\geq c\eps^{-2}\log n$\\
(ii) $\forall U\subseteq V:$ $\rho_G(U)< (1-2\eps)\trho\Rightarrow\rho_{H}(U)< (1-\eps)c\eps^{-2}\log n$
\end{lemma}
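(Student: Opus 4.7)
The plan is to reduce both implications to a Chernoff bound on a single fixed subset $U$, and then take a union bound over all subsets. Recall that $H = G_{q(\trho)}$ is obtained by independently sampling each unweighted copy in $\Gunw$ with probability $q(\trho)$. So for any fixed $U \subseteq V$, the quantity $|E_H[U]|$ is a sum of mutually independent Bernoulli indicators, one for every unweighted copy of every hyperedge in $E_G[U]$. Since the total number of such copies is $\sum_{e \in E_G[U]} w_e = \rho_G(U)\cdot|U|$, linearity of expectation gives
\[
\EE\bigl[|E_H[U]|\bigr] \;=\; q(\trho)\cdot \rho_G(U)\cdot|U|.
\]
I will assume throughout the main case $q(\trho) = c\eps^{-2}\log n / \trho < 1$, and handle the boundary case $q(\trho)=1$ (where $H = \Gunw$ and the claims are immediate or vacuous) at the end.

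For (i), the hypothesis $\rho_G(U)\ge (1+\eps)\trho$ yields $\EE[|E_H[U]|] \ge (1+\eps)\,c\eps^{-2}\log n\cdot|U|$. I apply the lower-tail form of Fact~\ref{fact:chernoff} with $\delta = \eps/(1+\eps)$, which gives
\[
\Pr\!\left[\,|E_H[U]| < c\eps^{-2}\log n\cdot|U|\,\right] \;\le\; \exp\!\bigl(-\Omega(c|U|\log n)\bigr).
\]
For (ii), when $\rho_G(U) > 0$, the hypothesis $\rho_G(U) < (1-2\eps)\trho$ yields the upper bound $\EE[|E_H[U]|] < (1-2\eps)\,c\eps^{-2}\log n\cdot|U|$, and I apply the upper-tail form with $\delta = \eps/(1-2\eps)$ so that $(1+\delta)(1-2\eps) = 1-\eps$. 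This gives the symmetric bound
\[
\Pr\!\left[\,|E_H[U]| \ge (1-\eps)c\eps^{-2}\log n\cdot|U|\,\right] \;\le\; \exp\!\bigl(-\Omega(c|U|\log n)\bigr).
\]
(If $\rho_G(U) = 0$, then $|E_H[U]| = 0$ deterministically and (ii) holds trivially.)

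It remains to union-bound over all subsets. For each $k\in\{1,\ldots,n\}$ there are at most $\binom{n}{k}\le n^k$ subsets of size $k$, each failing either (i) or (ii) with probability at most $\exp(-\Omega(ck\log n)) = n^{-\Omega(ck)}$. Choosing the constant $c$ in the definition of $q(\trho)$ large enough, the total failure probability is bounded by $\sum_{k=1}^n n^k\cdot n^{-\Omega(ck)} = O(n^{-c'})$ for any desired constant $c'$, which is $1/\operatorname{poly}(n)$ as required. The only nontrivial piece is ensuring the exponential decay in Chernoff dominates the $n^k$ factor from counting subsets; this is precisely why the sampling rate is calibrated to give an expectation of $\Theta(c\eps^{-2}\log n\cdot|U|)$. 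Finally, for the degenerate case $q(\trho) = 1$, we have $\trho \le c\eps^{-2}\log n$, so $H = \Gunw$ and both (i) and (ii) reduce to deterministic statements that are easily verified from the hypotheses (for (i), $\rho_H(U) = \rho_G(U) \ge (1+\eps)\trho$ meets the target whenever the sampling truly was trivial, and (ii) holds because $\rho_H(U) = \rho_G(U) < (1-2\eps)\trho \le (1-\eps)c\eps^{-2}\log n$).
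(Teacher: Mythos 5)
The paper does not actually prove this lemma: it imports it wholesale as ``follows directly from Theorem~4 of \cite{mitzenmacher2015scalable}.'' What you have written is, in effect, a self-contained reconstruction of the proof of that cited theorem, and the probabilistic core is right: for a fixed $U$, the induced edge count $|E_H[U]|$ is a sum of independent indicators over the $\rho_G(U)\cdot|U|$ unweighted copies in $\Gunw[U]$, its mean is $q(\trho)\,\rho_G(U)\,|U|$, and the choices $\delta=\eps/(1+\eps)$ and $\delta=\eps/(1-2\eps)$ give failure probability $\exp(-\Omega(c\,|U|\log n))$ for (i) and (ii) respectively, which beats the $n^{|U|}$ count of subsets of that size once $c$ is large enough. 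This is exactly the calibration that makes the sparsification work, and your observation that the $\log n$ in $q(\trho)$ exists precisely to absorb the subset-counting term is the right way to see it. Two small points of hygiene: for (i) you need the lower-tail Chernoff bound with a \emph{lower} bound on the mean, whereas Fact~\ref{fact:chernoff} as stated assumes $\EE[X]\le\mu$; the argument is fine with the standard lower-tail form, but you should say so rather than cite the ``lower-tail form of Fact~\ref{fact:chernoff}.'' Also, $\delta=\eps/(1-2\eps)\le 1$ requires $\eps\le 1/3$, which is harmless but worth stating.

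The one genuine gap is your treatment of the boundary case $q(\trho)=1$ for part (i). There $H=\Gunw$ and $\rho_H(U)=\rho_G(U)$, and the hypothesis $\rho_G(U)\ge(1+\eps)\trho$ gives only $\rho_H(U)\ge(1+\eps)\trho$, which is \emph{not} $\ge c\eps^{-2}\log n$ when $\trho$ is much smaller than $c\eps^{-2}\log n$ --- and $q(\trho)=1$ is precisely the regime $\trho\le c\eps^{-2}\log n$, so the claim does not ``reduce to a deterministic statement that is easily verified''; as literally written, implication (i) can fail there (take $\trho=1$ and a set $U$ with $\rho_G(U)=2$). Your verification of (ii) in this regime is correct. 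The honest fix is to restrict part (i) to the non-degenerate regime $\trho> c\eps^{-2}\log n$ (equivalently, to note that (i) carries content only when the sampling is nontrivial), which is how the guarantee is used downstream in Section~\ref{sec:wdshp}; your parenthetical ``whenever the sampling truly was trivial'' gestures at this but does not resolve it.
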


It follows from the above lemma that $\rho^*(H) \simeq c\eps^{-2}\log n$ iff $\trho$ is very close to $\rho^*(G)$. We can now make parallel guesses $\trho$ for $\rho^*(G)$ and find the correct one by identifying the guess that gives the desired value of $\rho^*(H)$. We explain this in detail and prove it formally in the next section.

\subsection{Fully Dynamic Algorithm for WDSH using UDSH}\label{sec:wdshp}

We handle the unweighted case UDSH and obtain the following theorem in \Cref{sec:unweighted}. 

\begin{restatable}{theorem}{thmunwtd}\label{thm:unweighted}
Given an unweighted rank-$r$ (multi-)hypergraph $H$ on $n$ vertices and at most $m$ edges with max-multiplicity at least $w^\star$, there exists a fully dynamic data structure \udshp that deterministically maintains a $(1+\eps)$-approximation to the densest subhypergraph problem. The worst-case update time is $O(\max\{(64r\eps^{-2}\log n)/w^\star,1\}\cdot r\eps^{-4}\log^2 n\log m)$ per edge insertion or deletion. The worst-case query times for max-density and densest-subset queries are $O(1)$ and $O(\beta + \log n)$ respectively, where $\beta$ is the output-size.
\end{restatable}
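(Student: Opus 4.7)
\section*{Proof plan for \Cref{thm:unweighted}}

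The plan is to adapt the primal–dual framework of Sawlani–Wang~\cite{SawlaniW20} to rank-$r$ hypergraphs. First I would write down the standard LP relaxation for UDSH and its dual, which can be interpreted as a load balancing problem: every hyperedge $e$ distributes one unit of \emph{load} among its incident vertices, and the objective is to minimize the maximum total load incurred by any vertex. At an optimum, if $e$ puts positive load on $v\in e$, then the total load on $v$ is no larger than the total load on any other $u\in e$ (otherwise $e$ would reroute load from $v$ to $u$). I would then prove a \emph{slack} version of this local condition: maintaining that whenever $e$ contributes to $v$'s load, the load on $v$ exceeds the load on any $u\in e\setminus\{v\}$ by at most an additive slack $\eta$, still yields a globally $(1+\eps)$-approximate densest subgraph certificate, and crucially the approximation factor does not blow up with $r$. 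The densest subset itself can be recovered as the level set of vertices carrying load above a suitable threshold, which makes the densest-subset query time $O(\beta+\log n)$ once vertices are stored in a sorted order indexed by load.

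Next, the slack lets me discretize: instead of distributing fractional load, each hyperedge picks a single ``head'' vertex to which it assigns its entire unit load. This reduces UDSH to a dynamic hypergraph \emph{orientation} problem, namely maintain an assignment $\mathrm{head}(e)\in e$ for every $e$ such that for all $u\in e$, $\mathrm{load}(\mathrm{head}(e)) \le \mathrm{load}(u) + \eta$. Insertions are handled by pointing the new edge to its least-loaded vertex; if this pushes its head's load too high and violates the invariant for some other edge $e'$, we \emph{rotate} $e'$ to its current least-loaded vertex, and propagate. I would show that any such chain has length $O(\hat D/\eta)$, where $\hat D$ is the global maximum load, and that with the right choice of $\eta = \Theta(\eps^{-2}\log m)$ and $\hat D = O(\eps^{-2}\log n\log m)$ (these scales come out of the slack analysis and the fact that $\rho^\ast \ge w^\star$), each chain has length $O(\log n)$. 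Deletions are symmetric. Thus each update triggers at most $O(\log n)$ rotations, and what remains is to execute each rotation in $\mathrm{polylog}$ time.

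The main obstacle, and the heart of the argument, is designing the supporting data structures so that violations can be \emph{detected} cheaply. A naive implementation would require every change to $\mathrm{load}(v)$ to be propagated to all hyperedges having $v$ as head, which can be $\Theta(m)$ edges. To circumvent this I would import the \emph{lazy round-robin informing} trick of~\cite{SawlaniW20}: associate with every vertex $v$ a queue of edges headed at $v$; whenever $\mathrm{load}(v)$ changes, only the front of the queue is touched, and the slack $\eta$ absorbs the ``stale'' information held by the rest of the queue until it rotates to the front. Concretely, the data structure would store, for each vertex $v$, a self-balancing BST indexed by a notion of ``last-informed load'' of incident heads, plus a priority queue over incident hyperedges so that the least-loaded vertex incident to any edge can be found in $O(r\log m)$ time. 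Each rotation then costs $O(r\eps^{-4}\log n\log m)$ to update the $O(r)$ vertices incident to the rotated edge in all auxiliary structures; combined with the $O(\log n)$ chain length and the additional $\max\{(64r\eps^{-2}\log n)/w^\star,1\}$ factor arising from the bound on the maximum multiplicity (the number of unit-edge insertions that a single UDSH update can entail when $w^\star$ is the max-multiplicity), we get the claimed worst-case update time.

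Finally, the max-density query is answered in $O(1)$ by reading off a maintained running estimate $\rhoest = \hat D/(1-\eps)$ (or the equivalent quantity from the level-set analysis), which is updated in $O(1)$ at each rotation since only $\hat D$ changes. The densest-subset query is answered in $O(\beta+\log n)$ by binary-searching, inside the sorted-by-load vertex structure, for the first index at which the prefix has the certified density $\rhoest/(1+\eps)$, and then outputting the $\beta$ vertices with load above that threshold. Correctness of the returned set follows from the slack local-condition analysis of the first paragraph, and deterministic bounds follow because all data structure operations, rotation detection, and chain-length bounds are worst-case.
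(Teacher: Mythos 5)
Your plan follows the paper's proof essentially step for step --- the same LP dual read as load balancing, the same slack-$\eta$ local condition whose analysis gives an $r$-independent approximation ratio, the same reduction to a head-assignment (orientation) problem with rotation chains of length $\hat{D}/\eta$, the same duplication by $\max\{(64r\eps^{-2}\log n)/w^\star,1\}$ to force $\eta\ge 1$, and the same lazy round-robin informing of \cite{SawlaniW20} to make violation detection cheap. The query handling (reading off the maximum maintained load in $O(1)$, and recovering the subset as a load level set in $O(\beta+\log n)$) also matches.

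There is one genuine gap: you never say how $\eta$ is chosen when $\hat{D}$ is unknown and changing, and the numbers you give are inconsistent. The approximation loss is $\sqrt{8\eta\log n/\hat{D}}$, so $\eta$ must be set to $\Theta(\eps^{2}\hat{D}/\log n)$ --- proportional to the current maximum load --- while the chain length is $\hat{D}/\eta$; a fixed $\eta=\Theta(\eps^{-2}\log m)$ cannot keep both under control for all densities. Your supporting claim that $\hat{D}=O(\eps^{-2}\log n\log m)$ is unjustified: the theorem is stated for an arbitrary unweighted multi-hypergraph, and the duplication together with $\rho^*\gtrsim w^\star/r$ only bounds $\hat{D}$ from \emph{below}; it can be as large as $m$. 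The paper resolves this by running $O(\log m)$ parallel copies of the orientation structure, one per guess $\widetilde{D}=2^{i-1}$ of $\hat{D}$, with $\eta=\eps^{2}\widetilde{D}/(32\log n)$ in copy $i$, and maintaining an ``active'' copy satisfying $\widetilde{D}\le\hat{D}\le 2\widetilde{D}$ (with a pending-edge mechanism so that a copy only ingests edges once its threshold permits). This is also the true source of the $\log m$ factor in the stated update time --- not priority-queue overhead, as your accounting suggests. Without this layer the data structure cannot certify the approximation ratio as the density drifts over the update sequence.
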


Here, we describe a way to use the above theorem as a subroutine to efficiently solve the dynamic WDSH problem. For the input weighted hypergraph $G$, assume that we know the value of $\maxweight(G)$ and an upper bound $m$ on the number of hyperedges (across all updates) in advance. We can remove these assumptions as follows. We can keep parallel data structures for guesses of $\maxweight(G)$ in powers of $1+\eps$ for the range $[1,W]$ where the weights lie. We extract the solution from the data structure where the guess $w$ is such that the actual current value of $\maxweight(G)$ lies in $[w/2,w]$. This incurs only a factor of $1+\eps$ in the approximation ratio (still giving a near-optimal solution by setting $\eps$ sufficiently small) and a factor of $O(\eps^{-1}\log W)$ in the update time. For the upper bound on $m$, we can simply set it as $R:=\sum_{t=1}^r \binom{n}{t}$, where $r$ is the rank of the hypergraph.  

We observe the following.
\begin{observation}\label{obs:maxwtdensityrel}
In a rank-$r$ weighted hypergraph $G$ with at most $m$ edges, we have $\maxweight(G)/r\leq \rho^*(G)\leq m\maxweight(G)$.
\end{observation}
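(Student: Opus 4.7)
The plan is to establish the two inequalities separately, each by a one-shot construction or counting argument; no iteration or induction is needed.

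For the lower bound $\rho^{*}(G) \geq \maxweight(G)/r$, I would exhibit a single candidate subset witnessing this density. Let $e^{\star} \in E$ be a hyperedge of maximum weight, i.e.\ $w_{e^{\star}} = \maxweight(G)$, and let $U := e^{\star}$ be the set of vertices incident on $e^{\star}$. Since $G$ has rank $r$, we have $|U| \leq r$. Moreover, $e^{\star}$ itself is induced by $U$, so $\sum_{e \in E[U]} w_e \geq w_{e^{\star}} = \maxweight(G)$. Plugging into the definition of density yields
\[
\rho^{*}(G) \;\geq\; \rho_G(U) \;=\; \frac{\sum_{e \in E[U]} w_e}{|U|} \;\geq\; \frac{\maxweight(G)}{r}.
\]

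For the upper bound $\rho^{*}(G) \leq m \cdot \maxweight(G)$, I would bound the numerator and denominator of $\rho_G(U)$ separately for an arbitrary non-empty $U \subseteq V$. The number of induced hyperedges is trivially $|E[U]| \leq |E| \leq m$, and each has weight at most $\maxweight(G)$, so $\sum_{e \in E[U]} w_e \leq m \cdot \maxweight(G)$. Since $|U| \geq 1$, we get $\rho_G(U) \leq m \cdot \maxweight(G)$; taking the maximum over $U$ preserves this bound.

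Honestly, I do not expect any genuine difficulty here: the statement is a pair of trivial sanity bounds whose only role is to give polynomially-wide brackets for the density that can later be refined by a $(1+\eps)$-geometric search over guesses $\trho$. The only subtlety worth flagging is the convention that the maximum in $\rho^{*}(G)$ ranges over non-empty $U$ (so that the denominator is positive), which is implicit from the definition given in Section~\ref{sec:prelim}.
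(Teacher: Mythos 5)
Your proof is correct and matches the paper's argument exactly: the lower bound via the vertex set of a maximum-weight hyperedge (of size at most $r$), and the upper bound by bounding the total induced weight by $m\cdot\maxweight(G)$ with $|U|\geq 1$. No issues.
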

\begin{proof}
Setting $U$ as the nodes of the heaviest hyperedge in $G$, we have $\rho^*(G)\geq \rho_G(U)\geq \maxweight(G)/r$.

Again, let $U^*$ be the densest subset of vertices in $G$. Then, we have\\ $\rho^*(G)=\rho_G(U^*)=\left(\sum_{e\in E[U^*]}w_e\right)/|U^*|\leq m\maxweight$
\end{proof}

Our algorithm for the dynamic WDSH problem is as follows. 

\mypar{Preprocessing}We keep guesses $\trho_i = (\maxweight/r)(1+\eps)^i$ for $i=0,1,\ldots,\lceil \log_{1+\eps} (rm) \rceil$. Note that by \Cref{obs:maxwtdensityrel}, these are valid guesses for $\rho^*(G)$. For each guess $\trho_i$ and each $j\in \lceil \log_{1+\eps} \maxweight \rceil$, we construct a data structure SAMPLE$(i,j)$ that, when queried, generates independent samples from the probability distribution Bin$\left(\lfloor(1+\eps)^j\rfloor,q(\trho_i)\right)$.\footnote{Bin$(n,p)$ is the Binomial distribution with parameters $n$ and $p$.} Each such data structure can be constructed in $O(\maxweight)$ time so that each query is answered in $O(1)$ time (\cite{BringmannP17}, Theorem 1.2). Parallel to this, for each $i$, we have a copy of the data structure for the UDSH problem, given by $\udshp(i)$. The value of $w^\star$ that we set for $\udshp(i)$ is $\maxweight(G)\cdot q(\trho_i)/2$.  

\mypar{Update processing}On insertion of the edge $e$ with weight $w_e$, for each guess $\trho_i$, query SAMPLE$(i,\lceil \log_{1+\eps} w_e \rceil)$ to get a number $s$, and insert $s$ copies of the unweighted edge $e$ using the data structure $\udshp(i)$. Similarly, on deletion of edge $e$, for each $i$, use $\udshp(i)$ to delete all copies of the edge added during its insertion.

\mypar{Query processing}Denote the value of maximum density returned by $\udshp(i)$ as $\hrho_i$. Let $i^*$ be the largest $i$ such that $\hrho_i\geq (1-\eps)c\eps^{-2}\log n$. On a max-density query for the WDSH problem, we output $\frac{1-2\eps}{1+\eps}\cdot\trho_{i^*}$. For the densest-subset query, we output the densest subset returned by $\udshp(i^*)$. 

\mypar{Correctness}Observe that the hypergraph we feed to $\udshp(i)$ is $G'_{q(\trho_i)}$, where $G'$ is the hypergraph obtained by rounding up each edge weight of $G$ to the nearest power of $(1+\eps)$. Thus, $\rho^*(G)\leq \rho^*(G')\leq (1+\eps)\rho^*(G)$.

For simplicity, we write $G'_{q(\trho_i)}$ as $G'_i$. Note that the value of $w^\star$ provided to each $\udshp(i)$ satisfies the condition in \Cref{thm:unweighted} whp (by the Chernoff bound (\Cref{fact:chernoff})) since the expected value of max-multiplicity of $G'_i$ is $\maxweight(G)\cdot q(\trho_i)$. By \Cref{thm:unweighted}, $\udshp(i)$ returns value $\hrho_i$ such that
\[(1-\eps)\rho^*(G'_i)\leq \hrho_i\leq \rho^*(G'_i)\,. \]
By the definition of $i^*$, we have $\hrho_{i^*}\geq (1-\eps)c\eps^{-2}\log n$. This means $\rho^*(G'_{i^*})\geq (1-\eps)c\eps^{-2}\log n$. Then, by \Cref{lem:babis} (ii), we get $\rho^*(G')\geq (1-2\eps)\trho_{i^*}$. Therefore, we have
\begin{equation}\label{eq:rhoupper}
  \trho_{i^*}\leq \frac{\rho^*(G')}{1-2\eps}\leq \frac{1+\eps}{1-2\eps}\cdot \rho^*(G)\,.  
\end{equation}
Again, note that $\hrho_{i^*+1}< (1-\eps)c\eps^{-2}\log m$. Hence, $\rho^*(G'_{i^*+1})\leq \hrho_{i^*+1}/(1-\eps)<c\eps^{-2}\log m$. Then, by \Cref{lem:babis} (i), it follows that $\rho^*(G')< (1+\eps)\trho_{i^*+1}=(1+\eps)^2\trho_{i^*}$. Hence, we have
\begin{equation}\label{eq:rholower}
  \trho_{i^*}> \frac{\rho^*(G')}{(1+\eps)^2}\geq \frac{\rho^*(G)}{(1+\eps)^2}\,. 
\end{equation}

Thus, from \cref{eq:rhoupper,eq:rholower}, we get
\begin{equation}\label{eq:densapxguarantee}
  \rho^*(G)\geq \frac{1-2\eps}{1+\eps}\cdot\trho_{i^*} \geq \frac{1-2\eps}{(1+\eps)^3}\cdot\rho^*(G)\,.  
\end{equation}

Again, let $U^*$ be the densest subset returned by $\udshp(i^*)$. By \Cref{lem:babis} (ii), we see that \[\rho_{G'}(U^*)\geq (1-2\eps)\trho_{i^*}\geq \frac{1-2\eps}{(1+\eps)^2}\cdot\rho^*(G)\]

Therefore, by the definition of $G'$, we have
\begin{equation}\label{eq:subgapxguarantee}
  \rho^*(G)\geq \rho_G(U^*)\geq \frac{\rho_{G'}(U^*)}{1+\eps}\geq \frac{1-2\eps}{(1+\eps)^3}\cdot\rho^*(G)  
\end{equation}

Given any $0<\delta<1$, we set $\eps=\Theta(\delta)$ small enough so that $\frac{1-2\eps}{(1+\eps)^3}\geq \frac{1}{1+\delta}$.

Therefore, by \cref{eq:densapxguarantee,eq:subgapxguarantee}, the value and the subset that we return on the max-density and densest-subset queries respectively are $(1+\delta)$-approximations to $\rho^*(G)$.

\mypar{Runtime}As noted before, we feed $G'_i$ to $\udshp(i)$. Fix an $i$. Let $\omega_i$ be the max-multiplicity of an edge in $G'_i$. When a hyperedge of $G$ is inserted/deleted, we insert/delete at most $\omega_i$ unweighted copies of that edge to $\udshp(i)$. Therefore, by \Cref{thm:unweighted}, the worst case update time for $\udshp(i)$ is $O(\omega_i\cdot \max\{(64r\eps^{-2}\log n)/w^\star, 1\}\cdot r\eps^{-4}\log^2 n\log m)$. Using the Chernoff bound (\Cref{fact:chernoff}), we have $\omega_i\leq 2\maxweight(G)\cdot q(\trho_i) = 4w^\star$ whp. Also, since $\trho_i\geq \maxweight(G)/r$ for each $i$, we can apply \Cref{lem:maxweight} to get that $\omega_i = O(r\eps^{-2}\log n)$. Hence, the expression simplifies to $O(r\eps^{-2}\log n \cdot r\eps^{-4}\log^2 n\log m ) = O(r^2\eps^{-6}\log^3 n\log m)$. Finally, accounting for all the $O(\log_{1+\eps} rm)=O(\eps^{-1}\log m)$ values of $i$, the total update time is $O(r^2\delta^{-7}\log^3 n\log^2 m)$ (recall that $\delta=\Theta(\eps))$. The max-density query for WDSH is answered by binary-searching on the $O(\eps^{-1}\log m)$ copies of \udshp, which gives a query time of $O(\log \delta^{-1}+\log{\log m})$ by \Cref{thm:unweighted}. Note that the densest-subset query is made only on the relevant copy $i^*$ after we find it, and hence, by \Cref{thm:unweighted}, it takes $O(\beta + \log n)$ time, where $\beta$ is solution-size.

Therefore, we obtain the following theorem that captures our main result.

\begin{theorem}(Formal version of \Cref{thm:main:informal})\label{thmmain}
Given a weighted rank-$r$ hypergraph on $n$ vertices and at most $m$ edges, for any $0<\delta<1$, there exists a randomized fully dynamic algorithm that maintains a $(1+ \delta)$-approximation to the densest subhypergraph problem. The worst-case update time is $O(r^2\delta^{-7}\log^3 n\log^2 m)$ per hyperedge insertion or deletion. The worst-case query times for max-density and densest-subset queries are $O(\log \delta^{-1}+\log\log m)$ and $O(\beta + \log n)$ respectively, where $\beta$ is the output-size. The preprocessing time is $O(\maxweight\delta^{-2}\log m\log \maxweight)$, where $\maxweight$ is the max-weight of a hyperedge.
\end{theorem}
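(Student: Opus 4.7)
The plan is to reduce the dynamic WDSH problem to polylogarithmically many parallel instances of the dynamic UDSH problem, leveraging the weight-scaling lemma (\Cref{lem:maxweight}) and the density-preservation lemma (\Cref{lem:babis}) developed in \Cref{sec:wtscale}. Since we are told (and can cheaply assume via parallel guesses in powers of $1+\eps$) that $\maxweight(G)$ and an edge-count bound $m$ are known, \Cref{obs:maxwtdensityrel} pins $\rho^\star(G)$ inside the interval $[\maxweight/r,\, m\maxweight]$; I would cover this range by geometric guesses $\trho_i = (\maxweight/r)(1+\eps)^i$ and run one \udshp instance per guess on the sampled hypergraph $G'_{q(\trho_i)}$, where $G'$ is $G$ with each weight rounded up to the nearest power of $(1+\eps)$ (this rounding only costs a $(1+\eps)$ factor in the density).

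The operational details I would set up next. For each guess $i$ and each possible rounded weight $\lfloor (1+\eps)^j\rfloor$, I would preprocess a SAMPLE$(i,j)$ data structure that draws from $\mathrm{Bin}(\lfloor(1+\eps)^j\rfloor, q(\trho_i))$ in $O(1)$ per query via the sampler of \cite{BringmannP17}. On the insertion of a hyperedge $e$ of weight $w_e$, I query SAMPLE$(i,\lceil \log_{1+\eps} w_e\rceil)$ to obtain a multiplicity $s$ and insert $s$ unweighted copies of $e$ into $\udshp(i)$; deletions undo exactly the inserted copies. Queries are resolved by locating the largest $i^\star$ whose \udshp reports $\hrho_{i^\star} \ge (1-\eps) c\eps^{-2}\log n$ (binary search over the $O(\eps^{-1}\log m)$ parallel copies), reporting $\frac{1-2\eps}{1+\eps}\trho_{i^\star}$ for max-density and the densest subset returned by $\udshp(i^\star)$.

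For correctness, I would chain three inequalities. (i) From the definition of $i^\star$ and \Cref{thm:unweighted}, $\rho^\star(G'_{i^\star}) \ge (1-\eps) c\eps^{-2}\log n$, so the contrapositive of \Cref{lem:babis}(ii) yields $\rho^\star(G') \ge (1-2\eps)\trho_{i^\star}$, giving the upper bound $\trho_{i^\star} \le \frac{1+\eps}{1-2\eps}\rho^\star(G)$. (ii) From the maximality of $i^\star$, $\hrho_{i^\star+1} < (1-\eps) c\eps^{-2}\log n$, so the contrapositive of \Cref{lem:babis}(i) yields $\rho^\star(G') < (1+\eps)\trho_{i^\star+1}$, giving the lower bound $\trho_{i^\star} > \rho^\star(G)/(1+\eps)^2$. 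Combining these sandwich bounds (and the analogous argument applied to the subset $U^\star$ returned by $\udshp(i^\star)$, using \Cref{lem:babis}(ii) on $U^\star$ itself) shows that both the reported density value and the reported subset density are at least $\frac{1-2\eps}{(1+\eps)^3}\rho^\star(G)$; choosing $\eps = \Theta(\delta)$ small enough makes this ratio $\ge 1/(1+\delta)$.

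The runtime analysis is where I would apply \Cref{lem:maxweight}: since every guess $\trho_i \ge \maxweight/r$, the max-multiplicity $\omega_i$ of $G'_{i}$ is $O(r\eps^{-2}\log n)$ whp, so each WDSH update triggers at most $O(r\eps^{-2}\log n)$ UDSH updates per instance. A Chernoff bound also certifies that $\omega_i \le 2\maxweight\cdot q(\trho_i) = 4w^\star$ whp, validating the $w^\star$ fed to \udshp. Multiplying by the UDSH per-update cost from \Cref{thm:unweighted} gives $O(r^2\eps^{-6}\log^3 n\log m)$ per instance; summing over the $O(\eps^{-1}\log m)$ guesses yields the stated $O(r^2\delta^{-7}\log^3 n\log^2 m)$ bound. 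The main obstacle, and the delicate point I would be most careful about, is that the guarantee involves a cascade of probabilistic statements—one invocation of \Cref{lem:maxweight}, one of \Cref{lem:babis}, and one Chernoff bound per guess $i$—so I would take a union bound over all $O(\eps^{-1}\log m)$ guesses (absorbed into the whp constants) to ensure a single high-probability event simultaneously certifies low max-multiplicity, correct sandwiching, and validity of $w^\star$ throughout the update sequence.
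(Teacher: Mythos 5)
Your proposal follows essentially the same route as the paper's own proof: the same geometric guesses $\trho_i=(\maxweight/r)(1+\eps)^i$ justified by \Cref{obs:maxwtdensityrel}, the same SAMPLE$(i,j)$ binomial samplers and per-guess \udshp instances, the identical sandwich argument via \Cref{lem:babis}(i) and (ii) yielding the ratio $\frac{1-2\eps}{(1+\eps)^3}$, and the same runtime accounting via \Cref{lem:maxweight} and a Chernoff bound on $w^\star$. The explicit union bound over the $O(\eps^{-1}\log m)$ guesses is a point the paper leaves implicit, but it is a correct and welcome clarification rather than a divergence.
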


Now all it remains is to solve the unweighted case and prove \Cref{thm:unweighted}. We do this in \Cref{sec:unweighted}.

\newif\ifkdd
\kddtrue

\section{Fully Dynamic Algorithm for UDSH}\label{sec:unweighted}

We first design and analyse our algorithm for the dynamic UDSH problem. Then, we design the data structures that actually implement the algorithm efficiently. 

\mypar{Our Algorithm and Analysis}We extend the techniques of \cite{SawlaniW20} for the densest subgraph problem and take the primal-dual approach to solve the UDSH problem. Recall that the input is an unweighted multi-hypergraph $H=(V,E)$ and we want to find the approximate max-density as well as an approximately densest subset of $H$. As is standard, we associate a variable $x_v\in \{0,1\}$ with each vertex $v$ and $y_e\in \{0,1\}$ with each hyperedge $e$ such that $x_v=1$ and $y_e=1$ respectively denote that we include $v$ and $e$ in the solution subset. Relaxing the variables, the primal LP for UDSH (Primal$(H)$) is given below. Following notations similar to \cite{SawlaniW20}, for each vertex $u$ and edge $e$, let $f_e(u)$ and $D$ be the dual variables corresponding to constraints (\ref{const:yexu}) and (\ref{const:primal}) respectively. Then, the dual program Dual$(H)$ is as follows.

\begin{minipage}{0.4\textwidth}
\begin{align}
    &{\bf Primal}(H):\nonumber\\
    &\max \sum_{e\in E} y_e\nonumber\\
    \text{s.t. }&y_e\leq x_{u} \quad\forall u \in e ~\forall e\in E\label{const:yexu}\\
    &\sum_{v\in V} x_v \leq 1\label{const:primal}\\
    &x_v,y_e\geq 0 \quad\forall v\in V, e\in E
\end{align}
\end{minipage}
\vline
\begin{minipage}{0.4\textwidth}
\begin{align}
    &{\bf  Dual}(H):\nonumber\\
    &\min D\nonumber\\
    \text{s.t. }&\sum_{u\in e}f_e(u)\geq 1 \quad \forall e\in E\label{const:loadsum}\\
    &\sum_{e \ni v} f_e(v) \leq D \quad\forall v\in V\\
    &f_e(u)\geq 0 \quad \forall u\in e~\forall e\in E
\end{align}
\end{minipage}

Think of $f_e(u)$ as a ``load'' that edge $e$ puts on node $u$. We can thus interpret Dual$(H)$ as a load balancing problem: each hyperedge needs to distribute a unit load among its vertices such that the maximum total load on a vertex due to all its incident edges is minimized. For each $v\in V$, define $\ell(v):=\sum_{e\ni v} f_e(v)$. Note that if for some feasible solution, some edge $e$ assigns $f_e(v)>0$ to some $v\in e$ and $\ell(v)>\ell(u)$ for some $u\in e\setminus\{v\}$, then we can ``transfer'' some positive load from $f_e(v)$ to $f_e(u)$ while maintaining constraint (\ref{const:loadsum}) and without exceeding the objective value.  Therefore, we can always find an optimal solution to Dual$(H)$ satisfying the following ``local'' property.
\begin{equation}\label{eq:localstability}
\forall e\in E: f_e(v)>0\Rightarrow \ell(v)\leq\ell(u)~\forall u\in e\setminus\{v\}    
\end{equation}
 
We verify that property (\ref{eq:localstability}) is also sufficient to get a {\em global} optimal solution to Dual$(H)$. 

Suppose $\{\hf_e(v): v\in V, e\in E\}$ satisfies property (\ref{eq:localstability}) and also constraint (\ref{const:loadsum}) with equality.\footnote{It is easy to see that in an {\em optimal} solution of Dual$(H)$, constraint (\ref{const:loadsum}) must hold with equality.} Define $\hD := \max_{v} \hl(v)$. Note that $\langle \hf, \hD\rangle$ is a feasible solution to Dual$(H)$ with objective value $\hD$. Since $\rho^*$ is the objective value of an optimal solution to Dual$(H)$, we have $\hD\geq \rho^*$. Now, we show that $\hD\leq \rho^*$. We have
\begin{align*}
 \hD |S|
 &= \sum_{v\in S} \hl(v)\\
 &= \sum_{v\in S}\sum_{e\ni v} \hf_e(v)\\
 &= \sum_{v\in S}\sum_{\substack{e\ni v:\\e\subseteq S}} \hf_e(v) &&\text{since by (\ref{eq:localstability}), $\hf_e(v) = 0$ if $\exists u\in e$ s.t. $u\not\in S$, i.e., $\hl(u)<\hD$}\\
 &= \sum_{e\subseteq S}\sum_{v\in e} \hf_e(v)\\
 &= |E(S)| &&\text{since constraint (\ref{const:loadsum}) holds with equality}
\end{align*}
Therefore, we get $\hD = |E(S)|/|S|\leq \rho^*$, and hence, $\hD=\rho^*$. 

Thus, maintaining the local property (\ref{eq:localstability}) gives us a global optimal solution. We now show in \Cref{thm:approx} that ``approximately'' maintaining property (\ref{eq:localstability}) (see constraint \eqref{const:slacketa}) gives us an approximate global optimal solution.

We define a system of equations Dual$(H,\eta)$ as follows.
\begin{align}
    &\ell(v)=\sum_{e\ni v} f_e(v) &&\forall v\in V\\
    &\sum_{u\in e}f_e(u)= 1 &&\forall e\in E\label{const:exacteqone}\\
    &\ell(v) \leq \ell(u) + \eta &&\forall u\in e\setminus\{v\},~\forall e\in E: f_e(v)>0\label{const:slacketa}\\   
    &f_e(u)\geq 0 &&\forall u\in e~\forall e\in E
\end{align}

\begin{theorem}\label{thm:approx}
Given a feasible solution $\langle \hf,\hl\rangle$ to Dual$(H,\eta)$, we have 
$\rho^*(1-\varepsilon)\leq\hD (1-\varepsilon)< \rho^*$,
where $\hD = \max_{v} \hl(v)$ and $\varepsilon = \sqrt{\frac{8\eta\log n}{\hD}}$. 
\end{theorem}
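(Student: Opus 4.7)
The plan is to split the chain into the two inequalities $\rho^* \le \hD$ and $\hD(1-\varepsilon) < \rho^*$. The first is immediate from weak LP duality: by constraint \eqref{const:exacteqone} we have $\sum_{u\in e}\hf_e(u) = 1 \ge 1$ for every $e$, and by the definition of $\hD$ we have $\hl(v) \le \hD$ for every $v$, so $\langle \hf, \hD\rangle$ is feasible for Dual$(H)$; since the optimum of Dual$(H)$ equals $\rho^*$ by LP duality, this gives $\rho^* \le \hD$ and hence the left inequality in the claim.

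The main work is the second inequality, for which I would exhibit a vertex set whose density exceeds $\hD(1-\varepsilon)$. Define the nested level sets
\[L_i := \{v \in V : \hl(v) \ge \hD - i\eta\} \qquad \text{for integers } i \ge 0,\]
and note that $L_0 \ne \emptyset$ since some vertex attains the maximum $\hD$. The slack condition \eqref{const:slacketa} has a clean geometric consequence: any hyperedge $e$ that puts positive load on some $v \in L_i$ must lie entirely in $L_{i+1}$, since $\hf_e(v) > 0$ forces $\hl(u) \ge \hl(v) - \eta \ge \hD - (i+1)\eta$ for every $u \in e \setminus \{v\}$. Combining this with \eqref{const:exacteqone} yields the central inequality
\[(\hD - i\eta)\,|L_i| \;\le\; \sum_{v \in L_i}\hl(v) \;=\; \sum_{v \in L_i}\sum_{e \ni v}\hf_e(v) \;\le\; \sum_{e \subseteq L_{i+1}}\sum_{v \in e}\hf_e(v) \;=\; |E(L_{i+1})|,\]
so $\rho^* \ge |E(L_{i+1})|/|L_{i+1}| \ge (\hD - i\eta)\,|L_i|/|L_{i+1}|$ for every $i$.

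It remains to select $i$ making this lower bound close to $\hD$. I would use a pigeonhole/geometric-mean argument: for a parameter $k$ to be tuned, the telescoping identity $\prod_{i=0}^{k-1}(|L_{i+1}|/|L_i|) = |L_k|/|L_0| \le n$ forces some $i^\star \in \{0,\dots,k-1\}$ to satisfy $|L_{i^\star+1}|/|L_{i^\star}| \le n^{1/k}$, equivalently $|L_{i^\star}|/|L_{i^\star+1}| \ge n^{-1/k} \ge 1 - (\ln n)/k$. Plugging in gives $\rho^* \ge (\hD - k\eta)\bigl(1 - (\ln n)/k\bigr) \ge \hD - k\eta - \hD(\ln n)/k$. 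Balancing the two error terms at $k \approx \sqrt{\hD\log n / \eta}$ produces a combined multiplicative loss of $\Theta(\sqrt{\eta\log n/\hD})$, which after absorbing constants into the factor $\sqrt{8}$ matches $\varepsilon = \sqrt{8\eta\log n/\hD}$ and delivers the desired bound $\hD(1-\varepsilon) < \rho^*$.

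The principal obstacle is the bookkeeping in this last step: $k$ must be an integer, the linearization $n^{-1/k} \ge 1 - (\ln n)/k$ must be tracked carefully to justify the $\sqrt{8}$ constant, and the strict inequality in $\hD(1-\varepsilon) < \rho^*$ should be argued from the slackness produced by rounding $k$ or by observing that the bound is only tight in the limit $\eta \to 0$. Degenerate cases such as $\hD$ being so small that $\varepsilon \ge 1$ (which makes the right inequality vacuous) should also be flagged, but they do not interfere with the main argument.
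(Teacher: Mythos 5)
Your proposal is correct and follows essentially the same route as the paper: the left inequality via feasibility of $\langle \hf,\hD\rangle$ for Dual$(H)$, and the right inequality via the same level sets ($L_i$ is the paper's $S_i$), the same observation that constraint \eqref{const:slacketa} forces any edge loading a vertex of $L_i$ into $L_{i+1}$, and the same density bound $(\hD-i\eta)|L_i|\le |E(L_{i+1})|$. The only (cosmetic) difference is how the good level is selected --- you use a telescoping-product pigeonhole over the first $k$ levels, whereas the paper takes the maximal prefix on which $|S_i|\ge(1+\gamma)|S_{i-1}|$ and uses the first level where this growth fails --- and both yield the $k\lesssim \log n/\gamma$ bound and the stated constant.
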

\begin{proof}
Since $\langle\hf,\hl\rangle$ is a feasible solution to Dual$(H,\eta)$, we see that $\langle \hf, \hD\rangle$ is a feasible solution to Dual$(H)$. Since $\rho^*$ is an optimal solution to Dual$(H)$, we have $\hD\geq \rho^*$  and the left inequality follows. 

Define $S_i := \{v: \hl(v) \geq \hD - \eta i\}$ for $i\geq 0$. For some parameter $0<\gamma<1$, let $k$ be the maximal number such that $|S_i|\geq (1+\gamma)|S_{i-1}|$ for all $i\in [k]$. Thus, $|S_{k+1}|<(1+\gamma)|S_k|$. 

For an edge $e$ incident on $v\in S_k$, consider $u\in e\setminus\{v\}$. We have
 \[u\not\in S_{k+1} \Rightarrow \hl(u)<\hD-\eta(k+1) \leq \hl(v)-\eta \Rightarrow  \hf_e(v) = 0\]
 where the last implication is by (\ref{const:slacketa}). Hence, we get the following.

\begin{observation}\label{obs:etaslack}
For $v\in S_k$, we have $\sum_{e\ni v} \hf_e(v) = \sum_{\substack{e\ni v:\\e\subseteq S_{k+1}}} \hf_e(v)$.
\end{observation}

We try to get a lower bound on $\rho(S_{k+1})$. We see that
\begin{align*}
 (\hD-&\eta k)|S_k|
 \leq \sum_{v\in S_k} \hl(v)
 = \sum_{v\in S_k}\sum_{e\ni v} \hf_e(v) = \sum_{v\in S_k}\sum_{\substack{e\ni v:\\e\subseteq S_{k+1}}} \hf_e(v) \\
 &\leq \sum_{v\in S_{k+1}}\sum_{\substack{e\ni v:\\e\subseteq S_{k+1}}} \hf_e(v)
 = \sum_{e\subseteq S_{k+1}}\sum_{v\in e} \hf_e(v)
 = |E(S_{k+1})|\,. 
\end{align*}
The second equality follows by \Cref{obs:etaslack} and the last one by  (\ref{const:exacteqone}). Therefore, by definition of $k$, we get 
\[\rho(S_{k+1}) = \frac{|E(S_{k+1})|}{|S_{k+1}|}\geq \frac{(\hD-\eta k)|S_k|}{|S_{k+1}|} > \frac{\hD-\eta k}{1+\gamma} > (\hD-\eta k)(1-\gamma)\,.\]

Again, since $|S_k|\geq (1+\gamma)^k|S_0|\geq (1+\gamma)^k$, we have $k\leq \log_{1+\gamma} |S_k| \leq \log_{1+\gamma} n \leq 2\log n/\gamma$. Therefore, we have
\[\rho(S_{k+1})> \left(\hD- \frac{2\eta\log n}{\gamma}\right)(1-\gamma) = \hD\left(1- \frac{2\eta\log n}{\gamma\hD}\right)(1-\gamma)\,.\]

We set $\gamma$ so as to maximize the RHS. Clearly, it is maximized when $\gamma = \frac{2\eta\log n}{\gamma \hD}$, and so, we set $\gamma := \sqrt{\frac{2\eta\log n}{\hD}}$. Hence, we get
\begin{equation}\label{eq:approxeq}
\rho^* \geq \rho(S_{k+1}) > \hD\left(1- \gamma\right)^2> \hD(1-2\gamma)=\hD\left(1- \sqrt{\frac{8\eta\log n}{\hD}}\right)\,,    
\end{equation}
which completes the proof.
\end{proof}
By \Cref{thm:approx}, we see that if we can find $\hD$, i.e., a feasible solution to Dual$(H,\eta)$, then we can get a $(1+\eps)$-approximation to $\rho^*$, where $\eps=\sqrt{\frac{32\eta\log n}{\hD}}$. This means that given $\eps$, we initially need to set $\eta=\frac{\eps^2\hD}{32\log n}$. But we do not know the value of $\hD$ initially, and in fact, that's what we are looking for. However, we shall initially have an estimate $\tD$ of $\hD$ such that $\tD\leq \hD\leq 2\tD$. We set $\eta:=\frac{\eps^2\tD}{32\log n}$.
Since $\tD\leq \hD$, we get $\sqrt{\frac{8\eta\log n}{\hD}}\leq \frac{\eps}{2}$, which, by \Cref{thm:approx} and \cref{eq:approxeq}, implies a $(1+\eps)$-approximation to UDSH. Thus, the next corollary follows.

\begin{corollary}\label{cor:correctness}
Given $0<\eps<1$ and $\tD\leq \hD$, setting $\eta=\frac{\eps^2\tD}{32\log n}$ ensures that
\begin{itemize}
\setlength\itemsep{0em}
    \item $S_{k+1}$ is a $(1+\eps)$-approximation to the densest subset
    \item $\hD(1-\frac{\eps}{2})$ is a $(1+\eps)$-approximation to the maximum density
\end{itemize}
where $\hD$ and $S_{k+1}$ are defined as above.
\end{corollary}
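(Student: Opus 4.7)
The plan is to derive the corollary directly from Theorem~\ref{thm:approx} (and specifically from inequality~(\ref{eq:approxeq}) in its proof) by evaluating the theorem's error parameter $\varepsilon$ at the prescribed choice $\eta = \eps^2 \tD / (32\log n)$ and then unpacking what this buys us for both the density estimate $\hD(1-\eps/2)$ and the vertex subset $S_{k+1}$.

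First, I would substitute the given $\eta$ into the formula $\varepsilon = \sqrt{8\eta\log n/\hD}$ from Theorem~\ref{thm:approx} and simplify: the $\log n$ factors cancel, the $8/32 = 1/4$ becomes a $1/2$ under the square root, leaving
\[
\varepsilon \;=\; \frac{\eps}{2}\sqrt{\frac{\tD}{\hD}} \;\leq\; \frac{\eps}{2},
\]
where the inequality uses the hypothesis $\tD \leq \hD$. This is precisely the reason the constant $32$ (rather than $8$) appears in the definition of $\eta$ — it absorbs the factor-of-two slack needed to convert a $(1-\varepsilon)$-guarantee into a $(1+\eps)$-approximation.

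Next, I would apply Theorem~\ref{thm:approx} and read off the two consequences. The theorem gives $\rho^* \leq \hD$ and $\hD(1-\varepsilon) < \rho^*$. Chaining with $\varepsilon \leq \eps/2$, we obtain $\hD(1-\eps/2) \leq \hD(1-\varepsilon) < \rho^*$, and from $\hD \geq \rho^*$ together with the elementary inequality $(1-\eps/2)(1+\eps) \geq 1$ for $\eps \in (0,1)$, we get $\hD(1-\eps/2) \geq \rho^*(1-\eps/2) \geq \rho^*/(1+\eps)$. Thus $\hD(1-\eps/2) \in [\rho^*/(1+\eps),\, \rho^*]$, which is the claimed $(1+\eps)$-approximation to $\rho^*$.

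For the subset $S_{k+1}$, I would reuse inequality~(\ref{eq:approxeq}) established inside the proof of Theorem~\ref{thm:approx}, namely $\rho(S_{k+1}) > \hD(1-\varepsilon)$. Combining with the chain in the previous paragraph yields $\rho(S_{k+1}) > \rho^*/(1+\eps)$, while $\rho(S_{k+1}) \leq \rho^*$ is by definition of $\rho^*$. So $S_{k+1}$ is also a $(1+\eps)$-approximation to the densest subset, completing the corollary. There is no real obstacle here — this is a parameter-tuning specialization of Theorem~\ref{thm:approx}; the only point requiring slight care is verifying that $(1-\eps/2) \geq 1/(1+\eps)$ on $(0,1)$ so that a one-sided $(1-\eps/2)$-bound legitimately implies a two-sided $(1+\eps)$-approximation.
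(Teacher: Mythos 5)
Your proposal is correct and follows essentially the same route as the paper: the paper likewise substitutes $\eta=\frac{\eps^2\tD}{32\log n}$ into $\sqrt{8\eta\log n/\hD}$, uses $\tD\leq\hD$ to bound this by $\eps/2$, and then invokes Theorem~\ref{thm:approx} and inequality~(\ref{eq:approxeq}) to conclude both the density and subset guarantees. Your explicit verification that $(1-\eps/2)\geq 1/(1+\eps)$ on $(0,1)$ is a detail the paper leaves implicit, but the argument is the same.
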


Thus, we focus on finding a feasible solution to Dual$(H,\eta)$, where $\eta=\frac{\eps^2\tD}{32\log n}$ for a given estimate $\tD$ satisfying $\tD\leq \hD\leq 2\tD$. Note that if we have $\eta\geq 1$, then we can maintain constraint (\ref{const:slacketa}) with some positive slack while having integer loads on the vertices. This means that we are allowed to simply assign the unit load of an edge $e$ entirely on some vertex $u\in e$. 
Assume that we know a lower bound $w^\star$ on the max-multiplicity of a hyperedge in the graph. If $w^\star\geq 64r\eps^{-2}\log n$, then it already implies that $\eta\geq 1$ since $\hD\geq  \rho^* \geq 64\eps^{-2}\log n$ and hence, $\tD\geq \hD/2 \geq 32\eps^{-2}\log n$. Otherwise, we duplicate each hyperedge $\lceil(64r\eps^{-2} \log n)/w^\star\rceil$ times, so that we are ensured that $\rho^*\geq 64\eps^{-2}\log n$, implying $\eta\geq 1$ as before. Once we have $\eta\geq 1$ and are allowed to assign the entire load of an edge on a single node in it, our problem reduces to the following hypergraph ``orientation'' problem. 

{\bf Problem} (Hypergraph Orientation). Given an unweighted multi-hypergraph $H=(V,E)$ and a parameter $\eta\geq 1$, for each edge $e\in E$, assign a vertex $v\in e$ as its head $h(e)$, such that
\begin{equation}\label{eq:indegeta}
   \forall e\in E: h(e)=v \Rightarrow \din(v) \leq \din(u) + \eta~\forall u\in e\setminus\{v\}
\end{equation}
where $\din(v):= |\{e\in E: h(e) = v\}|$.

Call the operation that reassigns the head of an edge to a different vertex as an {\em edge rotation}. The next lemma (proof deferred to the supplement) shows that we can design a dynamic algorithm that solves the hypergraph orientation problem, thus getting a $(1+\eps)$-approximation to UDSH, by making a small number of hyperedge rotations per insertion/deletion. 
\begin{lemma}\label{lem:recourse}
There exists a dynamic algorithm that solves the hypergraph orientation problem by making $\hD/\eta$ hyperedge rotations per insertion/deletion, where $\hD:= \max_{v\in V} \din(v)$.
\end{lemma}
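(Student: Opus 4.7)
The plan is to implement the orientation dynamically via a ``rotation cascade.'' On inserting an edge $e$, orient $e$ towards the vertex $v \in e$ minimizing $\din(v)$ and increment $\din(v)$; on deleting an edge $e$ with $h(e) = v$, decrement $\din(v)$. These primary operations may introduce violations of~\eqref{eq:indegeta}, which I repair by repeatedly picking any violating edge $e'$ and re-orienting it to $\arg\min_{x \in e'} \din(x)$. The goal is to show that at most $\hD/\eta$ such rotations suffice per update.

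The crux is the structural observation that throughout the cascade, violations lie on a single chain of ``affected'' vertices, one per rotation. For an insertion, let $v_0$ be the new head; inductively, after each rotation the only indegree that differs from its pre-insertion value is that of the current affected vertex, which is $+1$, so any fresh violation must lie at an edge currently headed there. Rotating the violating edge $e_i$ to $v_i := \arg\min_{x \in e_i}\din(x)$ reverts $\din(v_{i-1})$ to its pre-insertion value while raising $\din(v_i)$ by $1$. Because the violation required $\din(v_i) + \eta < \din(v_{i-1})$ pre-rotation, the post-rotation value $d_i := \din(v_i)$ satisfies $d_i \le d_{i-1} - \eta$. Thus $d_0, d_1, d_2, \ldots$ decreases by at least $\eta$ per step, and since $d_i \ge 0$ and $d_0 \le \hD$, the cascade halts within $\hD/\eta$ rotations.

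The deletion case runs symmetrically in the opposite direction. Let $v_0$ be the decremented vertex; any violating edge $e_1 \ni v_0$ must have head $u_1$ with $\din(u_1) > \din(v_0) + \eta$. The pre-deletion invariant forces every other $x \in e_1$ to satisfy $\din(x) \ge \din(u_1) - \eta > \din(v_0)$, so $v_0$ is the unique minimum-indegree vertex of $e_1$. Rotating $e_1$ to $v_0$ restores $\din(v_0)$ and drops $\din(u_1)$ by $1$, effectively shifting the ``deficit'' from $v_0$ to $u_1$. Setting $v_i := u_i$ and iterating, the sequence $D_i := \din(v_i)$ just before $v_i$ is perturbed satisfies $D_i \ge D_{i-1} + \eta$, since the violation triggering rotation $i$ forces $D_i \ge (D_{i-1} - 1) + \eta + 1 = D_{i-1} + \eta$. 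Because $D_i \le \hD$, the cascade again terminates within $\hD/\eta$ rotations.

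The main subtlety I expect is establishing the single-chain structure rigorously---showing that after each rotation, all edges not incident to the unique affected vertex remain non-violating. The argument traces carefully which indegrees were perturbed and in which direction, and crucially leans on integrality of $\din$ together with $\eta \ge 1$, so that a unit-size rotation consumes exactly the unit of slack produced by the preceding step. A separate engineering concern---how to detect violating edges and locate them quickly enough to match the rotation bound---is orthogonal to the lemma and will be handled in the data-structural part that follows.
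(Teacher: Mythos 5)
Your proposal is correct and follows essentially the same route as the paper: repair violations by cascading rotations toward the minimum-indegree vertex, and bound the cascade length by observing that the head indegrees along the chain change monotonically by at least $\eta$ per rotation (decreasing for insertions, increasing for deletions), giving at most $\hD/\eta$ rotations. Your explicit bookkeeping of which single indegree is perturbed at each step is, if anything, slightly more detailed than the paper's ``chain of tight hyperedges'' phrasing, but the underlying argument is identical.
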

\begin{proof}
Call a hyperedge $e$ tight if $h(e)=v$ and $\exists u \in e$ such that $\din(v) = \din(u) +\eta$. Suppose that we have maintained (\ref{eq:indegeta}) until an edge $e$ is inserted. We assign the vertex $v$ with minimum $\din$ as its head $h(e)$. Suppose this violates (\ref{eq:indegeta}) for some edge $e'$. Therefore, before the insertion of $e$, the hyperedge $e'$ was tight with $h(e')=v$. We do rotate$(e')$, which might violate (\ref{eq:indegeta}) for some edge $e''$ that was tight before. We then perform rotate$(e'')$ and continue this to obtain a chain of tight hyperedges going ``outwards'' from $v$. Note that the chain terminates because $\din$ of the heads strictly decreases along the chain. Thus, rotating the last hyperedge in this chain doesn't cause a violation of (\ref{eq:indegeta})  for any edge, and hence, we maintain (\ref{eq:indegeta}). 

Note that the number of rotations we perform is same as the size of this maximal chain of tight hyperedges, which is at most $\hD/\eta = O(\eps^{-2}\log n)$ since $\din(v)\leq \hD$ and $\din(h(e_2))\leq \din(h(e_1))-\eta$ for two successive hyperedges $e_1$ and $e_2$ in the chain. Finally, recall that we duplicate each edge $O(r\eps^{-2}\log n)$ times when it is inserted, and hence, we perform at most $O(r\eps^{-2}\log n \cdot \eps^{-2}\log n) = O(r\eps^{-4}\log^2 n)$ rotations in total per hyperedge-insertion. 
The case of deletion is handled similarly, except that the maximal chain of tight hyperedges is oriented ``inwards'' to $v$.      
\end{proof}
Hence, for $\eta=\frac{\eps^2\tD}{32\log n}$, where $\hD\leq 2\tD$, we make at most $O(\eps^{-2}\log n)$ edge rotations. Next, we design a data structure that actually performs these rotations and solves the hypergraph orientation problem with an update time that incurs only a small factor over the number of edge rotations.

\mypar{Implementation details}Given a parameter $\tD$, we construct a data structure HOP($\tD$) that maintains the ``oriented'' hypergraph satisfying (\ref{eq:indegeta}). We describe it in detail in Data Structure \ref{ds:hop}. 

The following lemmas give the correctness and runtime guarantees of the data structure.
\begin{lemma}\label{lem:lazystabmaintain}
After each insertion/deletion, the data structure HOP$(\tD)$ maintains constraint (\ref{eq:indegeta}).
\end{lemma}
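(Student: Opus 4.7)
The plan is to establish the result by induction on the sequence of updates, with the inductive hypothesis being that constraint (\ref{eq:indegeta}) holds immediately before each update is processed. The base case is vacuous since initially $E = \emptyset$. For the inductive step, we show separately that HOP$(\tD)$ restores (\ref{eq:indegeta}) after an insertion and after a deletion, by arguing that the data structure faithfully implements the chain-of-rotations strategy described in the proof of \Cref{lem:recourse}, augmented with lazy bookkeeping that still preserves the invariant.

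For an insertion of a hyperedge $e$, the data structure selects the vertex $v \in e$ with minimum $\din(v)$ as $h(e)$. Since $v$ had the minimum in-degree, assigning $h(e) = v$ cannot violate (\ref{eq:indegeta}) for $e$ itself. The increment of $\din(v)$ by $1$ can, however, break (\ref{eq:indegeta}) for some other hyperedge $e'$ with $h(e') = v$: such $e'$ was previously tight, meaning $\din(v) = \din(u') + \eta$ for some $u' \in e'$ before the insertion. The data structure detects such $e'$ and calls rotate$(e')$, redirecting $h(e')$ toward a currently minimum-load vertex in $e'$. This may cascade, producing the chain of tight rotations analyzed in \Cref{lem:recourse}, and the chain terminates because $\din$ of the successive heads strictly decreases by at least $\eta$. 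At termination, every rotated edge satisfies (\ref{eq:indegeta}), and no previously untouched edge can have become violated because no vertex besides those on the chain saw its $\din$ increase.

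For a deletion of a hyperedge $e$ with $h(e) = v$, only $\din(v)$ decreases (by $1$), so (\ref{eq:indegeta}) can only be violated at edges $e'$ with $h(e') = w$ for some $w \ne v$ that contain $v$ and were previously tight in the sense $\din(w) = \din(v) + \eta$. The data structure detects these and applies rotate$(e')$ to reassign their head to the (now strictly smaller-load) vertex $v$. This in turn can expose further tight edges whose heads had the old $\din(w)$, and so on, but the chain must terminate: the head-loads along the chain strictly decrease, mirroring the argument of \Cref{lem:recourse} in reverse (oriented inward toward $v$). Once the chain finishes, (\ref{eq:indegeta}) holds at every rotated edge, and all other edges are unaffected.

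The main obstacle is verifying that HOP$(\tD)$'s \emph{detection} mechanism actually identifies every candidate violation despite lazy round-robin informing: nodes that have changed $\din$ do not immediately notify every hyperedge with them as a non-head member. Here the slack $\eta \ge 1$ does the heavy lifting. By the invariants that the data structure maintains on the lazy counters (each node's stored ``communicated'' load differs from its true load by strictly less than $\eta$, and round-robin notification visits one pending edge per operation), any as-yet-uncommunicated change is absorbed into the $+\eta$ slack of (\ref{eq:indegeta}). Thus although some edges may carry stale views of their members' in-degrees, every such edge still satisfies the posted constraint, and a genuine violation at an edge $e'$ cannot be hidden from the chain-walking routine because the walk consults the true $\din$ values of the rotated edges' endpoints. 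Combining these observations with the inductive hypothesis completes the proof.
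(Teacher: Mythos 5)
Your high-level structure (induction over updates, reduce to a single increment/decrement at the end of the rotation chain, then worry about detection under lazy informing) matches the paper's, and you correctly identify the lazy round-robin informing as the crux. But the way you dispose of that crux is where the proof breaks. You assert an invariant that each node's communicated load differs from its true load by ``strictly less than $\eta$'' and conclude that any uncommunicated change is ``absorbed into the $+\eta$ slack.'' This does not close the argument: the detection routines themselves already consume part of the slack. Concretely, \textsc{tightinedge} and \textsc{tightoutedge} only report an edge as tight when the observed gap is at least $\eta/2$, so an edge with a true gap just below $\eta/2$ is silently tolerated; if on top of that the observed values can be stale by up to $\eta$, the true gap after the update could exceed $\eta$, violating constraint (\ref{eq:indegeta}). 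The slack budget has to be split between the detection threshold and the staleness, and your proposal never does this accounting.

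The paper's proof is exactly this accounting, with specific constants. Because \textsc{increment}/\textsc{decrement} inform the next $4\din(v)/\eta$ edges of $\mathrm{In}(v)$ per call, every edge in $\mathrm{In}(v)$ is refreshed (and, in the insertion case, probed by \textsc{tightinedge}) at least once every $\eta/4$ changes to $\din(v)$; hence the staleness satisfies $\din(v)\leq \din^{(u)}(v)+\eta/4$, not merely $<\eta$. The two violation cases are then closed by combining a $\eta/2$ term (the non-tightness witnessed at the last probe or at the failed \textsc{tightoutedge} check) with one or two $\eta/4$ terms (increments to $\din(v)$ since the last probe, and staleness at the time $\din(u)$ was decremented), which sum to at most $3\eta/4$ or exactly $\eta$. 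Your claim that ``the walk consults the true $\din$ values'' is also inaccurate for deletions, where \textsc{tightoutedge} reads the stale priority-queue keys $\din^{(v)}(h(e))$, and for insertions \textsc{tightinedge} probes only a $4\din(v)/\eta$-size window of $\mathrm{In}(v)$, so a tight edge can indeed go undetected in a given round --- the point of the proof is to show the resulting drift is still bounded by $\eta$. Without the $\eta/2$ and $\eta/4$ bookkeeping, the lemma is not established.
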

\begin{proof}
 Similar to \Cref{lem:recourse}, we can show that due to the chain of rotations, each insertion/deletion leads to only a single increment/decrement. Let $h(e)=v$ and $u$ be any vertex in $e\setminus \{v\}$. The constraint $\din(v)\leq \din(u) + \eta$ can be violated either due to a decrement to $\din(u)$ or an increment to $\din(v)$. 
 
 Consider the case when we have a decrement to $\din(u)$. Observe that this happens only when $u$ cannot find a tight ``out-edge''. Hence, $\din^{(u)}(v) < \din(u) + \eta/2$ at that time. Now, note that each vertex $w$ informs its $\din$ value to all vertices in the hyperedges in In$(w)$ every $\eta/4$ iterations. Hence, there are at most $\eta/4$ updates to $\din(v)$ between two successive iterations where it informs its $\din$ value to $u$. Therefore, at any point, $\din(v)\leq \din^{(u)}(v) + \eta/4$, i.e., at the time $\din(u)$ is decremented, we have $\din(v)\leq \din(u) + 3\eta/4$, implying that the constraint is maintained.    
 
 Now we turn to the other case when we have an increment to $\din(v)$. This happens only when $v$ cannot find a tight ``in-edge'' among the $4\din(v)/\eta$ edges it probes. Consider the last $\eta/4$ updates where $\din(v)$ was incremented. There exists an update among these where the edge $e$ was probed. Hence, at that time, we had $\din(v)< \din(u)+\eta/2$. If there has been no decrement to $\din(u)$ in the following updates, then, since there has been at most $\eta/4$ successive increments to $\din(v)$, we must have $\din(v)< \din(u)+3\eta/4$, implying that the constraint is maintained. Otherwise, consider the last update where $\din(u)$ was decremented. As noted before, we had $\din(v)\leq \din(u) + 3\eta/4$ at that time. There has been at most $\eta/4$ increments to $\din(v)$ since then, and hence, we have $\din(v)\leq \din(u) + \eta$. Thus, in any case, the constraint  (\ref{eq:indegeta}) is maintained after each insertion/deletion.  
\end{proof}

\begin{lemma}\label{lem:datastrcorrectness}
If $\tD\leq \hD$, then the operations querysubset and querydensity of HOP$(\tD)$ return a $(1+\eps)$-approximation to the densest-subset and max-density queries respectively.
\end{lemma}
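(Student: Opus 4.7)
The plan is to lift the integral orientation maintained by HOP$(\tD)$ into a feasible solution of Dual$(H,\eta)$, and then invoke \Cref{cor:correctness}. Specifically, I would set, for every hyperedge $e$, $\hf_e(h(e)) = 1$ and $\hf_e(u) = 0$ for all $u \in e \setminus \{h(e)\}$, and correspondingly $\hl(v) = \din(v)$ for every vertex $v$. By construction, constraint \eqref{const:exacteqone} holds with equality, and the non-negativity constraints are trivially satisfied.

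Next I would verify constraint \eqref{const:slacketa}. By \Cref{lem:lazystabmaintain}, after every insertion/deletion HOP$(\tD)$ maintains \eqref{eq:indegeta}; whenever $\hf_e(v) > 0$ we must have $h(e) = v$, and then \eqref{eq:indegeta} yields $\hl(v) = \din(v) \le \din(u) + \eta = \hl(u) + \eta$ for every $u \in e \setminus \{v\}$, which is exactly \eqref{const:slacketa}. Thus $\langle \hf, \hl \rangle$ is feasible for Dual$(H,\eta)$, and $\hD = \max_v \hl(v) = \max_v \din(v)$ coincides with the quantity appearing in \Cref{thm:approx} and \Cref{cor:correctness}.

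Before applying the corollary I also need to ensure that the setting $\eta = \eps^2 \tD / (32 \log n)$ we feed to HOP$(\tD)$ is at least $1$, so that integral orientations are legitimate. This follows from the preprocessing step: if $w^\star \geq 64 r \eps^{-2} \log n$ then $\rho^* \geq 64 \eps^{-2} \log n$, and otherwise we duplicated each hyperedge $\lceil 64 r \eps^{-2} \log n / w^\star \rceil$ times to guarantee the same bound; combined with the hypothesis $\tD \le \hD$ and $\hD \ge \rho^*$, this gives $\tD \ge 32 \eps^{-2} \log n$ and hence $\eta \ge 1$. With this in hand, \Cref{cor:correctness} applies verbatim: $S_{k+1}$ (the set of vertices $v$ with $\hl(v) \ge \hD - \eta(k+1)$, for the $k$ defined in the proof of \Cref{thm:approx}) is a $(1+\eps)$-approximation to the densest subset, and $\hD(1 - \eps/2)$ is a $(1+\eps)$-approximation to $\rho^*$.

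It then only remains to argue that querysubset and querydensity of HOP$(\tD)$ actually return these objects. I would appeal to the description of Data Structure~\ref{ds:hop}: by design, querydensity reports $\max_v \din(v) \cdot (1 - \eps/2) = \hD(1-\eps/2)$, while querysubset enumerates the appropriate level set of $\din$ corresponding to $S_{k+1}$. The main subtlety in this last step is that the data structure maintains $\din$ values that may be slightly stale because of the lazy round-robin informing, but the proof of \Cref{lem:lazystabmaintain} already shows that the true $\din$ values (not the informed copies) satisfy \eqref{eq:indegeta}, so the above argument goes through for the real orientation state, which is what querysubset and querydensity actually read. I expect this bookkeeping — making sure the quantities exposed by the data structure are exactly the $\hD$ and $S_{k+1}$ used in the analysis — to be the only delicate piece; the rest is a direct translation through \Cref{cor:correctness}.
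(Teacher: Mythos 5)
Your proposal is correct and follows exactly the route the paper intends: the paper's own proof is the one-liner ``Follows directly from \Cref{lem:lazystabmaintain} and \Cref{cor:correctness}'', and your argument simply spells out the implicit steps (lifting the integral orientation to a feasible solution of Dual$(H,\eta)$, checking \eqref{const:exacteqone} and \eqref{const:slacketa}, and matching the outputs of querydensity and querysubset to the quantities in \Cref{cor:correctness}). No gaps; your version is just a more explicit rendering of the same argument.
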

 
\begin{proof}
Follows directly from \Cref{lem:lazystabmaintain} and \Cref{cor:correctness}
\end{proof}


\begin{lemma}\label{lem:updatetime}
If $\hD\leq 2\tD$, then the operations insert and delete of HOP$(\tD)$ take $O(r\eps^{-4}\log^2 n)$ and $O(r\eps^{-2}\log n))$ time respectively. The operation querydensity takes $O(1)$ time and querysubset takes $O(\beta+\log n)$ time, where $\beta$ is the solution-size.
\end{lemma}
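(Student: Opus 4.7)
The plan is to combine the bound on the number of hyperedge rotations per operation from Lemma~\ref{lem:recourse} with a per-rotation time cost delivered by the concrete data structures inside HOP$(\tD)$, and then check the two query operations against the state that is maintained.

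First I would pin down the rotation count under the hypothesis $\hD\le 2\tD$. Since $\eta=\eps^{2}\tD/(32\log n)$, the chain-length bound $\hD/\eta$ from Lemma~\ref{lem:recourse} evaluates to $O(\eps^{-2}\log n)$. An insertion triggers the full duplication step ($O(r\eps^{-2}\log n)$ copy-insertions, to force $\eta\ge 1$), each followed by its own chain, giving $O(r\eps^{-4}\log^{2} n)$ rotations in total. A deletion, by contrast, needs only a single chain of rotations per deleted hyperedge, which when combined with the $O(r)$ cost of locating the head and re-indexing the $r-1$ co-endpoints in the incidence lists yields the target $O(r\eps^{-2}\log n)$.

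Next I would bound the per-rotation work. The key data structures in HOP$(\tD)$ are: (i) a bucketed priority structure over $V$ keyed by $\din$, supporting $O(1)$ access to the min- and max-loaded endpoints of any hyperedge of rank at most $r$; (ii) for each $v$, a linked list $\text{In}(v)$ of hyperedges $e$ with $h(e)=v$, so that tight in-edges can be probed cheaply when $\din(v)$ must be decremented; and (iii) the lazy round-robin informing of~\cite{SawlaniW20}, where a vertex $w$ propagates its updated $\din(w)$ to the $r-1$ other endpoints of its incident hyperedges only once every $\eta/4$ updates, rather than immediately. A potential-function argument then charges the probing and informing work against the rotations that caused them, leaving each rotation with $O(1)$ amortized cost. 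Multiplying by the rotation counts above gives the stated insert and delete bounds.

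For querydensity, the value $\hD=\max_{v}\din(v)$ is maintained on the fly as the top of the bucket in~(i), so by Corollary~\ref{cor:correctness} the algorithm returns $\hD(1-\eps/2)$ in $O(1)$ time. For querysubset, Corollary~\ref{cor:correctness} asks for $S_{k+1}$, where $k$ is the first index with $|S_{k+1}|<(1+\gamma)|S_k|$ and $\gamma=\sqrt{2\eta\log n/\hD}=\Theta(\eps)$; since $k=O(\gamma^{-1}\log n)=O(\log n)$ and vertices are bucketed by $\din$-value, the correct cut $k$ can be identified by a single scan over the top buckets in $O(\log n)$ time, after which the $\beta=|S_{k+1}|$ vertices are enumerated in $O(\beta)$ time by traversing the corresponding prefix of buckets, giving $O(\beta+\log n)$ overall.

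The hard part will be justifying the $O(1)$ amortized per-rotation cost. The naive bound coming out of the proof of Lemma~\ref{lem:lazystabmaintain} already allows a single rotation to probe up to $4\din(v)/\eta=O(\eps^{-2}\log n)$ in-edges and to propagate a $\din$-change to $r-1$ co-endpoints, which summed rotation-by-rotation is far too expensive. The lazy round-robin informing of~\cite{SawlaniW20} is precisely what rescues the bound, and adapting their potential-function accounting from the rank-$2$ setting to hypergraphs, where each informing step touches $r-1$ vertices instead of one and the factor $r$ must be kept out of the chain length, is the technical crux of the implementation.
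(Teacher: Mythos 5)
Your accounting for the insert bound does not match what the data structure actually does, and the mismatch conceals a real gap. The duplication of an inserted hyperedge into $O(r\eps^{-2}\log n)$ copies is performed by the outer UDSHP wrapper, not inside HOP's insert; \Cref{lem:updatetime} is a statement about a \emph{single} call to HOP.insert (its bound is later \emph{multiplied} by the duplication factor in the proof of \Cref{thm:unweighted}). The paper's $O(r\eps^{-4}\log^2 n)$ arises from one chain of $O(\hD/\eta)=O(\eps^{-2}\log n)$ while-loop iterations, each of which costs $O(r\hD/\eta)=O(r\eps^{-2}\log n)$ in the worst case because tightinedge scans up to $4\din(v)/\eta$ edges of In$(v)$ and spends $O(r)$ per edge to locate its minimum-$\din$ vertex. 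You instead obtain $\eps^{-4}\log^2 n$ as (number of duplicates) times (chain length) with $O(1)$ amortized work per rotation --- a product that happens to evaluate to the same expression (you likely took it from the rotation count in \Cref{lem:recourse}) but describes a different cost structure. The inconsistency surfaces in your delete analysis, where you silently drop the duplication factor; under your own accounting, delete would also pick it up.

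The more serious issue is the $O(1)$ amortized per-rotation cost, which you correctly identify as the crux and then leave unproved. No such amortization is used, and it cannot be the right mechanism here: the lemma and \Cref{thm:unweighted} claim \emph{worst-case} update time, which a potential-function charging argument would not deliver. The paper's proof is a direct worst-case count, and the insert/delete asymmetry is explained by the data structures rather than by amortization: Out$(v)$ is a max-priority queue keyed by $\din^{(v)}(h(e))$, so tightoutedge is $O(1)$ and each delete-side iteration costs only $O(r)$ (for the rotate), giving $O(r\hD/\eta)=O(r\eps^{-2}\log n)$ for deletion; whereas In$(v)$ is only a list probed round-robin, so each call to tightinedge costs $O(r\hD/\eta)$, and it is this per-iteration probing --- not duplication --- that supplies the extra $\eps^{-2}\log n$ factor for insertion. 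Your treatment of querydensity and querysubset is consistent with the paper's.
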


\begin{proof}
An insert operation first takes $O(r)$ time to find the vertex in $e$ with minimum $\din$. Adding the label $e$ to In$(v)$, where $v=h(e)$, and to Out$(u)$ for each $u\in e\setminus\{v\}$ takes $O(r)$ time. Next, each iteration of the while loop finds a tight edge using tightinedge, locates its minimum $\din$ vertex, and performs a rotate operation. The tightinedge operation probes at most $4\din(v)/\eta=O(\hD/\eta)$ edges, finding a min $\din$ vertex in each of these edges in $O(r)$ time, and a rotate operation takes $O(r)$ time. Hence, each iteration of the while loop takes $O(r\hD/\eta)$ time. Observe that the number of iterations of the while loop is the size of a maximal chain of tight edges which is at most $O(\hD/\eta)$ by similar argument as in the proof of \Cref{lem:recourse}. Therefore, the while loop takes time $O(r(\hD/\eta)^2)$. Finally, it performs a single increment operation, which probes $4\din(v)/\eta=O(\hD/\eta)$ edges and updates the $\din^{(u)}(v)$ values for each node $u$ in those edges. Thus, it takes $O(r\hD/\eta)$ time. Therefore, the total time an insert operation takes is dominated by the while loop, which is $O(r(\hD/\eta)^2)=O(r\eps^{-4}\log^2 n)$. 

The analysis of the delete operation is similar, except that a call of the tightoutedge operation takes $O(1)$ time because we obtain Out$(v)$.max from a max-priority queue, implying that each iteration of the while loop takes $O(r)$ time (due to the rotate operation). Thus, the while loop takes $O(r\hD/\eta)=O(r\eps^{-2}\log n)$ time in total, same as a single decrement operation. Therefore, a delete operation takes $O(r\eps^{-2}\log n)$ time. 

The operation querydensity just extracts the max element from the BST Indegrees, taking $O(1)$ time. The operation querysubgraph calls densestsubg$(\gamma)$ that keeps on growing the solution set by traversing the BST Indegrees and reads $\beta$ elements in total where $\beta$ is the solution. The time taken is $O(\beta+\log n)$. 
\end{proof}

\newgeometry{margin=0.5in}
\captionof{Data Structure}{The algorithms for HOP$(\tD)$ that solves the hypergraph orientation problem}\label{ds:hop}\fbox{
\parbox{\linewidth}{
\begin{multicols}{2}
[{\bf Input:} Unweighted hypergraph $H=(V,E)$, parameters $\eps, \tD$
]
$n\leftarrow |V|$
\vspace{2mm}

$\eta\leftarrow \frac{\eps^2\tD}{32\log n}$
\vspace{2mm}

Indegrees $\leftarrow$ Balanced binary search tree for $\{\din(v): v\in V\}$
\vspace{2mm}

Each hyperedge $e$ maintains a list of vertices that it contains and has a pointer $h(e)$ to the head vertex.

Each vertex $v$ maintains the following data structures:
\begin{itemize}
    \item $\din(v)$: Number of hyperedges $e$ such that $h(e) = v$
    \item In$(v)$: List of hyperedges (labels only) where $v$ is the head
    \item Out$(v)$: Max-priority queue of $\{e\in E: h(e)\neq v\}$, indexed by \hspace*{1cm}$\din^{(v)}(h(e))$ 
      \item $\din^{(v)}(u)$ ($\forall e \in$ Out$(v)~\forall u\in e$): $\din(u)$ from $v$'s perspective
\end{itemize}
\end{multicols}
\begin{multicols}{3}

\begin{algorithm}[H]
\begin{algorithmic}[1]
    \STATE $z\leftarrow h(e)$
    \STATE remove $e$ from In$(z)$
    \STATE remove $e$ from Out$(u)$ for each $u\in e\setminus \{z\}$
    \STATE $h(e)\leftarrow v$; \quad add $e$ to In$(v)$
    \STATE add $e$ to Out$(u)$ for each $u\in e\setminus \{v\}$
\end{algorithmic}
\caption{{\sc rotate}$(e,v)$\label{alg:rotate}}
\end{algorithm}

\begin{algorithm}[H]
\begin{algorithmic}[1]
\FOR{$e\in$ \{next $4\din(v)/\eta$ edges in In$(v)$\}}
\STATE $u\leftarrow \text{arg}\min_{z\in e} \din(z)$
\IF{$\din(u)\leq \din(v)-\eta/2$} 
\STATE return $e$
\ENDIF
\ENDFOR
\STATE return null
\vspace{5mm}
\end{algorithmic}
\caption{{\sc tightinedge}$(v)$\label{alg:tightinedge}}
\end{algorithm}

\begin{algorithm}[H]
\begin{algorithmic}[1]
\STATE $v\leftarrow \text{arg}\min_{z\in e} \din(z)$
\STATE $h(e)\leftarrow v$
\STATE add $e$ to In$(v)$
\STATE add $e$ to Out$(u)$ for each $u\in e\setminus \{v\}$
\WHILE{tightinedge$(v) \neq$ null}
\STATE $f \leftarrow$ tightinedge$(v)$
\STATE $v\leftarrow \text{arg}\min_{z\in f} \din(z)$
\STATE rotate$(f,v)$
\ENDWHILE
\STATE increment$(v)$
\vspace{2mm}
\end{algorithmic}
\caption{{\sc insert}$(e)$\label{alg:hopinsert}}
\end{algorithm}

\begin{algorithm}[H]
\begin{algorithmic}[1]
\STATE $\din(v)\leftarrow \din(v) + 1$\\
\STATE Update $\din(v)$ in Indegrees 
\FOR{$e \in$ \{next $4\din(v)/\eta$ edges in 
In$(v)$\}}
\FOR{$u\in e$}
\STATE $\din^{(u)}(v) \leftarrow \din(v)$
\ENDFOR
\ENDFOR
\end{algorithmic}
\caption{{\sc increment}$(v)$\label{alg:increment}}
\end{algorithm}

\begin{algorithm}[H]
\begin{algorithmic}[1]
\STATE $e \leftarrow$ Out$(v)$.max
\IF{$\din^{(v)}(h(e))\geq \din(v)+\eta/2$} 
\STATE return $e$
\ENDIF
\STATE return null
\end{algorithmic}
\vspace{0.9cm}
\caption{{\sc tightoutedge}$(v)$\label{alg:tightoutedge}}
\end{algorithm}

\begin{algorithm}[H]
\begin{algorithmic}[1]
\STATE $v\leftarrow h(e)$
\STATE remove $e$ from In$(v)$
\STATE remove $e$ from Out$(u)$ for each $u\in e\setminus \{v\}$
\WHILE{tightoutedge$(v) \neq$ null}
\STATE $f \leftarrow$ tightoutedge$(v)$
\STATE $w\leftarrow h(f)$
\STATE rotate$(f,v)$; \quad $v\leftarrow z$
\ENDWHILE
\STATE decrement$(v)$
\vspace{7mm}
\end{algorithmic}
\caption{{\sc delete}$(e)$\label{alg:hopdelete}}
\end{algorithm}

\begin{algorithm}[H]
\begin{algorithmic}[1]
\STATE $\din(v)\leftarrow \din(v) - 1$
\STATE Update $\din(v)$ in Indegrees
\FOR{$e \in$ \{next $4\din(v)/\eta$ edges in In$(v)$\}}
\FOR{$u\in e$} 
\STATE $\din^{(u)}(v) \leftarrow \din(v)$
\ENDFOR
\ENDFOR
\end{algorithmic}
\caption{{\sc decrement}$(v)$\label{alg:decrement}}
\end{algorithm}

\begin{algorithm}[H]
\begin{algorithmic}[1]
\STATE $\hD\leftarrow$ Indegrees.max
\STATE $A \leftarrow \{v: \din(v) \geq \hD\}$
\STATE $B \leftarrow \{v: \din(v) \geq \hD - \eta \}$
\WHILE{$|B|/|A|\geq 1+\gamma$}
\STATE $\hD\leftarrow \hD-\eta$; \quad $A \leftarrow B$
\STATE $B \leftarrow \{v: \din(v) \geq \hD - \eta\}$
\ENDWHILE
\STATE return $B$
\end{algorithmic}
\vspace{4mm}
\caption{{\sc densestsubset}$(\gamma)$\label{alg:densestsub}}
\end{algorithm}

\begin{algorithm}[H]
\begin{algorithmic}[1]
\STATE $\hD \leftarrow$ Indegrees.max
\STATE $\gamma \leftarrow \sqrt{2\eta \log n/\hD}$
\STATE return densestsubset$(\gamma)$
\end{algorithmic}
\caption{{\sc querysubset}$()$\label{alg:querysubset}}
\end{algorithm}

\begin{algorithm}[H]
\begin{algorithmic}[1]
\STATE return (Indegrees.max)$\cdot(1-\frac{\eps}{2})$
\end{algorithmic}
\caption{{\sc querydensity}$()$\label{alg:querydensity}}
\end{algorithm}
\end{multicols}
}
}
\newgeometry{margin=1in}
\setlength{\parskip}{0.5em}


\mypar{Completing the Algorithm}We built the data structure HOP$(\tD)$ and showed that we obtain the claimed bounds on the approximation factor and the update time as long as $\tD\leq \hD\leq 2\tD$. Now, to get access to such an estimate, we keep parallel data structures HOP$(\tD)$ for $O(\log m)$ guesses of $\tD$ in powers of $2$. Then, we  show that we can maintain an ``active'' copy of HOP corresponding to the correct guess, from which the solution is extracted. Thus, we incur only an $O(\log m)$ overhead on the total update time for an edge insertion/deletion. This part is very similar to Algorithm 3 of  \cite{SawlaniW20} and we sketch it below.

The data structure UDSHP keeps $O(\log m)$ copies of the data structure HOP: for $i\in [\lceil\log m\rceil]$, we guess that $\tD = 2^{i-1}$. The copy where we have $\tD\leq \hD\leq 2\tD$ is called the \emph{active} copy. The max density and densest subgraph queries are always answered by querying this copy. The final algorithm is given as follows. 

We initialize the index of the active copy as $0$. On an edge insertion, we duplicate and insert it\\ $\max\{(64r\eps^{-2}\log n)/w^*,1\}$ times as mentioned before. We insert each edge to all copies of the data structure whose index is greater than the active index. Note that we can afford to do it for these copies since the length of the maximal chain is bounded by $\hD/\eta(\tD)\leq 2\tD/\eta(\tD)\leq O(\eps^{-2}\log n)$ as desired. For the active copy $i$, we query its max density and then insert the edge only if it turns out to be smaller than $2^i$. Otherwise, we increment the active index by $1$. Now, for the copies where we haven't inserted the edge yet, we first query the $\din$ of the vertices contained in the edge, and if we see that the minimum of these values is less than the threshold of $2\tD$ for that copy, we insert the edge. Otherwise we keep it in a {\em pending} list corresponding to that copy. 

On an edge deletion, we delete all copies of the edge we had added during its insertion from all copies whose index is greater than the active index. For the active copy, we query its max density and then delete the edge only if it's larger than $2^{i-1}$. Otherwise, we decrement the active index by $1$. Now, for the remaining copies, we first check if this edge is present in the pending list for that copy, and if so, we delete one copy of the edge from that list. Otherwise, we check if there's any edge in the pending list that contains the head of this edge, and if so, after deleting the former edge from the graph, we insert the pending edge. This ensures that if a copy becomes active after some deletions, the pending edges are inserted and the copy processes the entire graph. Therefore, we always maintain the active index correctly. We answer the max density and densest subgraph queries from the active copy only.

\thmunwtd*

\begin{proof}
 The correctness follows from \Cref{lem:datastrcorrectness} and the discussion above.
 
For the runtime, observe that for each edge insertion, UDSHP runs the HOP.insert operation\\ $\max\{(18r\eps^{-2}\log n)/w^*,1\}$ times by duplicating it that many times. We do this for each of the $O(\log m)$ copies where we insert the edge. By \Cref{lem:updatetime}, each HOP.insert operation takes  $O(r\eps^{-4}\log^2 n)$ time. Hence, in total, an edge insertion takes $O(\max\{(r\eps^{-2}\log n)/w^*,1\}\cdot r\eps^{-4}\log^2 n\log m)$ time. Similarly, an edge deletion calls the HOP.delete operation $O(\max\{(18r\eps^{-2}\log n)/w^*,1\}\cdot\log m)$ times and with each such HOP.delete operation, there might be an HOP.insert operation. By \Cref{lem:updatetime}, we see that the runtime of a HOP.insert operation dominates that of a HOP.delete operation, and hence, the total update time for a deletion is $O(\max\{(18r\eps^{-2}\log n)/w^*,1\}\cdot r\eps^{-4}\log^2 n\log m)$. The query times follow immediately from \Cref{lem:updatetime} since we query only the active copy.      
\end{proof}

\section{Experiments}
\label{sec:exp}
In this section, we present extensive experimental evaluations of our algorithm. We consider weighted and unweighted hypergraphs in both insertion-only and fully dynamic settings, leading to a total of four combinations. We call our algorithms \udshp and \wdshp for the unweighted and weighted settings respectively. For each combination, we compare the accuracy and efficiency of \udshp or \wdshp to that of the baseline algorithms.
Furthermore, we study the trade-off between accuracy and efficiency for \udshp and \wdshp in various settings.

\mypar{Datasets}
For evaluating \udshp, 
we collect real-world temporal hypergraph datasets, as we describe below. See~\Cref{tab:dataset_desc} for a summary of these hypergraphs.

\emph{Publication datasets.} We consider two publication datasets:
DBLP~\cite{dblp} and Microsoft Academic Graph (MAG) with the geology tag~\cite{Sinha-2015-MAG,Benson-2018-simplicial}\footnote{\label{source1}Source:~\url{https://www.cs.cornell.edu/~arb/data/}}. We encode each author as a vertex and each publication as a hyperedge with the publication year serving as the timestamp. In the fully dynamic case, we maintain a sliding window of $10$ years, by removing hyperedges that are older than $10$ years. We treat multiple papers by the same set of authors as a single hyperedge  and report densest subgraph at the end of each year.

\emph{Tag datasets.} We consider 3 tag datasets: math exchange~\cite{Benson-2018-simplicial}\cref{source1}, stack-overflow~\cite{Benson-2018-simplicial}\cref{source1}, and ask-ubuntu~\cite{Benson-2018-simplicial}\cref{source1}. In each of these datasets, a vertex corresponds to a tag, and a hyperedge corresponds to a set of tags on a post or a question in the respective website. In the fully dynamic model, we maintain a sliding window of 3 months. In both insertion only and dynamic settings, we report the densest subgraph at an interval of 3 months. 

\emph{Drug Abuse Warning Network(DAWN) dataset.} This dataset\cref{source1} is generated from the national health surveillance system that records drug abuse related events leading to an emergency hospital visit across  USA~\cite{Benson-2018-simplicial}. We construct a hypergraph where the vertices are the drugs and
a hyperedge corresponds to a combination of drugs taken together at the time of abuse. We maintain the most recent 3 months records in fully dynamic and insertion only settings. The reporting of the maximum density is done at an interval of 3 months.

\mypar{Weighted Datasets} Each of the datasets described above are unweighted. We are not aware of any publicly available weighted temporal hyperagphs. For our weighted settings, we transform the unweighted temporal hypergraph into a weighted temporal hypergraph by the following process. For each edge, we assign it an integer weight sampled uniformly at random from $[1,100]$.

\begin{table}[h]
\centering
  \caption{Description of our dataset with the key parameters, \#vertices($n$), \#hyperedges($m$), maximum size (\#hyperedges) in the dynamic setting ($m_{\Delta}$), \#rank($r$). 
}
  \begin{tabular}{ccccc}
    \toprule
    Dataset  &$n$ & $m$ &  $m_{\Delta}$ & $r$ \\
    \midrule
    dblp-all & 2.56M & 3.16M & 1.99M & 449   \\
    tag-math-sx & 1.6K & 558K & 21.3K & 5 \\
    tag-ask-ubuntu & 3K & 219K & 10.4K & 5 \\
    tag-stack-overflow & 50K & 12.7M & 50.6K & 5 \\
    dawn & 2.5K & 834K & 11.5K & 16 \\
    coauth-MAG-geology & 1.25M & 960K & 216.3K & 25 \\
    \bottomrule
  \end{tabular}
  \label{tab:dataset_desc}
\end{table}

\mypar{Implementation Details}
The implementation details of our algorithm are given in Data Structure \ref{ds:hop}.\footnote{Our code is available (anonymously): \href{https://drive.google.com/drive/folders/1JOepPhEaHP8nlbf8wm0D7J5gw7o_ydMh?usp=sharing}{Link to Code Repo}} In implementing \Cref{alg:densestsub}, we consider all potential 
subsets $B$ by ignoring the  condition on line 3, and report the subset with the largest density among these choices. We implement all algorithms in C++ and run our experiments on a workstation with  256 GB DDR4 DRAM memory and Intel Xeon(R) 5120 CPU 2.20GHz processor running Ubuntu 20 operating system.

\mypar{Baseline Algorithms}
We consider two main baselines algorithms. 
\begin{inparaenum}[\bfseries (1)]
\item The first one is an exact algorithm, denoted as \exact, that
computes the exact value of the densest subhypergraph at every reporting interval of the dataset. We use google OR-Tools to implement an LP based solver for the densest subhypergraph~\cite{hu2017maintaining,ortools} in all the four settings we consider. 
\item The second one is the dynamic algorithm 
for maintaining densest subhypergraph by Hu~\etal~\cite{hu2017maintaining}; we call it \hwc.
It takes $\eps_H$ as an input accuracy parameter and produces a $(1+\eps_H)r$ and $(1+\eps_H)r^2$-approximate densest subhypergraph in the insertion only and fully dynamic model, respectively. For the weighted hypergraphs we modify the \hwc implementation -- each edge with weight $w_e$ is processed by creating $w_e$ many copies of that edge.
\end{inparaenum}

\mypar{Parameter Settings} 
Both \hwc and our algorithms \udshp and \wdshp take an accuracy parameter $\eps$ as part of the input. 
However, it is important to note that the accuracy parameter $\eps$ for both the algorithms are not directly comparable. 
\udshp or \wdshp guarantees to  maintain an $(1+ \eps)$-approximate solution in both insertion only and fully dynamic settings, where as \hwc maintains  $(1+\eps)r$ and $(1+\eps)r^2$ approximate solutions for insertion and fully dynamic settings respectively. 
Thus, for a  fair comparison between the algorithms, we run \udshp (or \wdshp) and \hwc with different values of $\eps$ such that their accuracy is comparable. 
We use $\eps_H$ to denote the parameter for \hwc to make this distinction clear. 
For various settings and datasets, we use different parameters and specify them in the corresponding plots. 
We emphasis here that the motivation behind the choices of the parameters is to compare the update time of \udshp and \wdshp to that of \hwc while in the {\em small} approximation error regime.

\mypar{Accuracy and Efficiency Metrics}  To measure the accuracy of \udshp, \wdshp, and \hwc, we use \emph{relative error percentage} with respect to \exact.  It is defined as $\frac{|\rho(\textsc{Alg}, t) - \rho(\exact, t)|}{\rho(\exact, t)} \times 100\%$, where $\rho(X, t)$ is the density estimate by algorithm $X$ at time interval~$t$. We also compute the average relative error of an algorithm by taking the average of the relative errors over all the reporting intervals. Measuring efficiency is quite straightforward: we compare the average wall-clock time taken over the operation during each reporting interval. We also use the overall average time (taken over all the reporting intervals) as an efficiency comparison tool.

\subsection{Fully Dynamic Case}
\label{subsec:fully}
In this subsection, we consider the fully dynamic settings where the hyperedges can be both inserted and deleted. 
We perform experiments for hypergraphs with unweighted as well as weighted edges.
For both the cases, we first compare the accuracy and the efficiency of our algorithm against the baselines. And then we analyze the accuracy vs efficiency trade-off of \udshp and \wdshp.

\mypar{\textsc{Unweighted Hypergraphs}} We first discuss our findings for unweighted hypergraphs.
\begin{figure}[H]
    \centering
    \includegraphics[width=\textwidth]{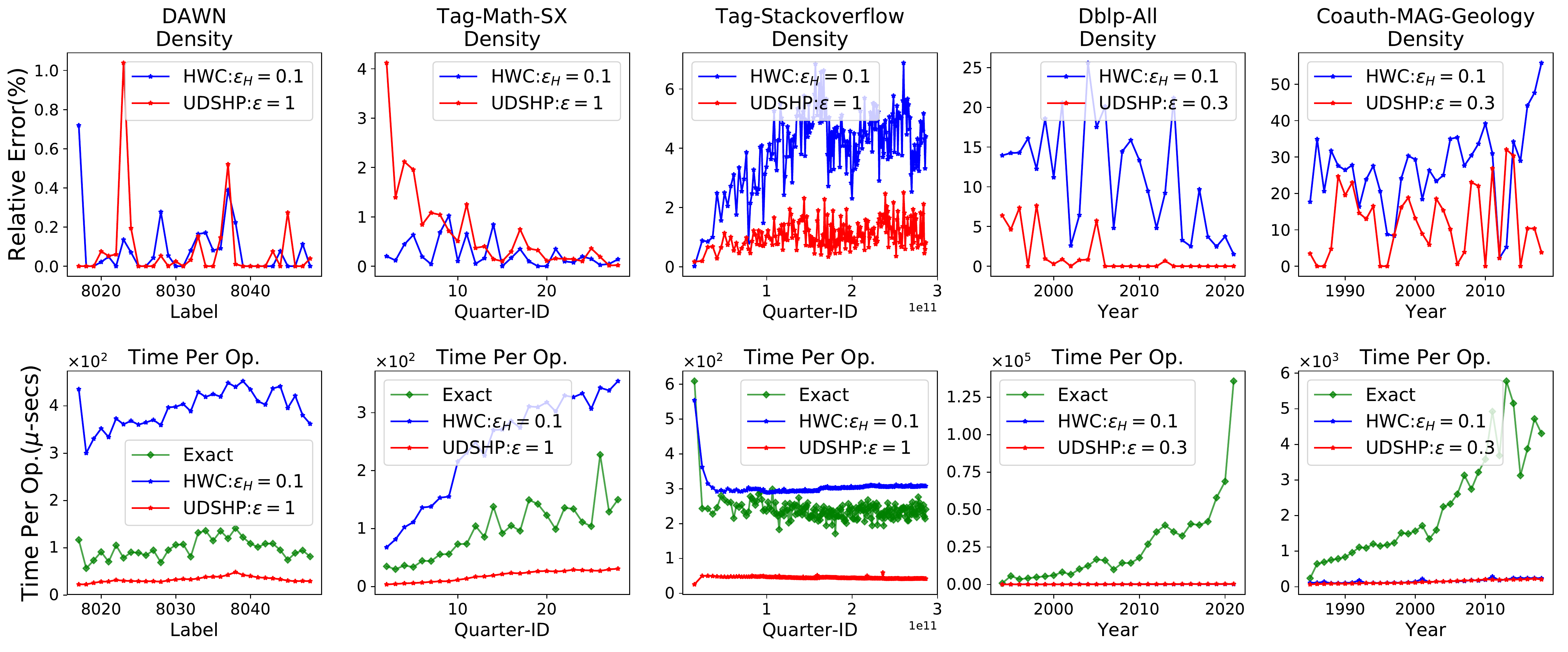}
    \caption{Accuracy and Efficiency Comparison for Unweighted Dynamic Hypergraphs: The top row shows the relative error in the reported maximum density by \udshp and \hwc with respect to \exact when run with the specified parameters. The bottom row plots the average update time taken by \udshp, \hwc, and \exact for each reporting intervals. For each dataset (column), the parameter settings are identical.}
    \label{fig:dyn_unweighted_error_time}
\end{figure}

\begin{figure}[H]
\centering
\includegraphics[scale=0.3]{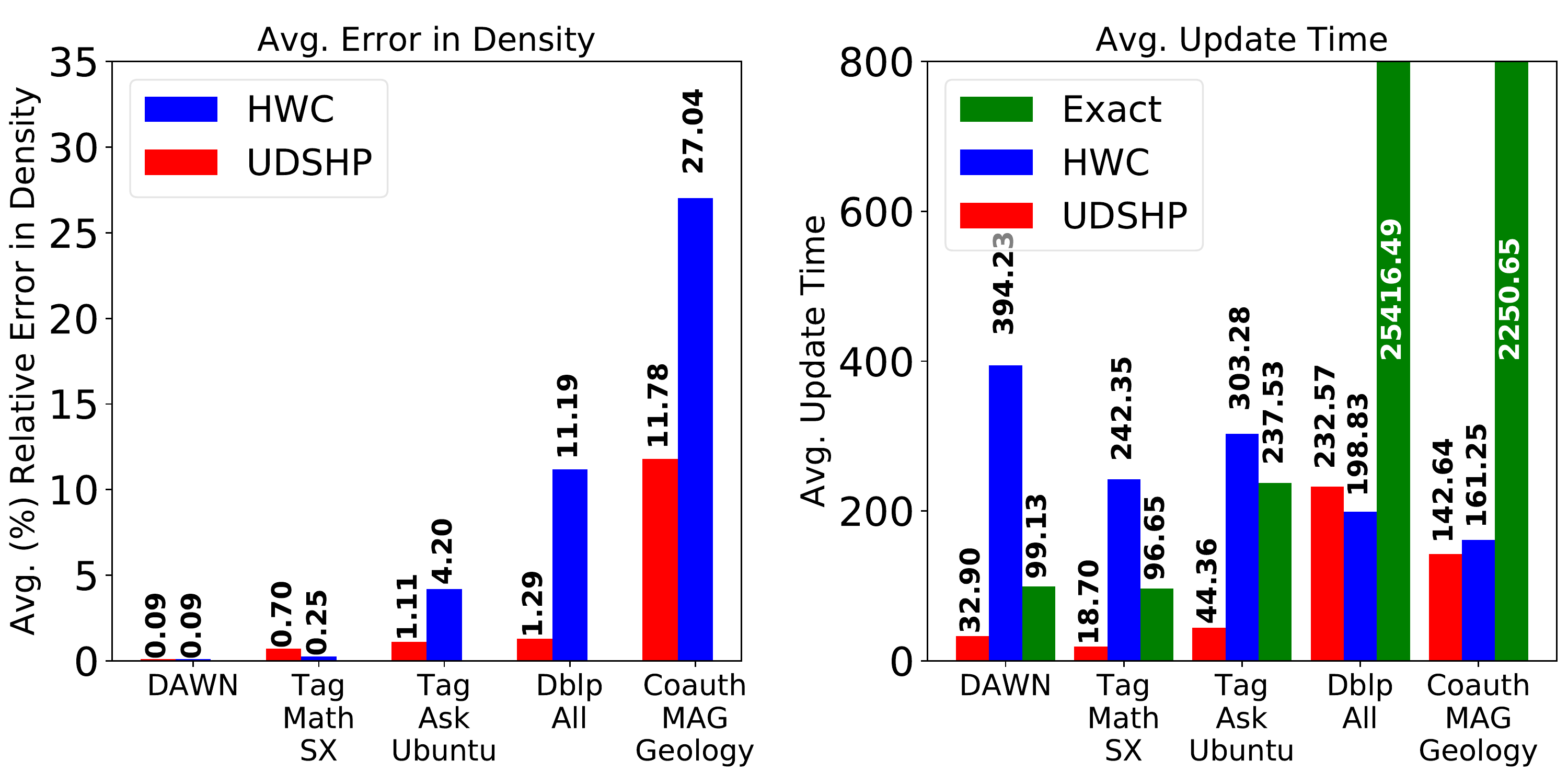}
\caption{Average Accuracy and Efficiency Comparison for fully dynamic unweighted hypergraphs: on the left, we plot the average relative error of \udshp and \hwc over all the reporting intervals for each dataset, and on the right, we compare the average update time of \udshp, \hwc, and \exact over the entire duration.}
\label{fig:avg_time_error_dyn_unweighted}
\end{figure}

\mypar{Accuracy and Efficiency Comparison against Baselines}
In~\Cref{fig:dyn_unweighted_error_time}, we compare the accuracy and the efficiency of \udshp against the baselines for the unweighted hypergraphs. 
In the top row, we first compare the accuracy of \udshp and \hwc in terms of relative error percentage with respect to \exact. In the bottom row, we plot the average time taken per operation by \exact, \udshp, and \hwc during each reporting interval. For each dataset, the parameters are identical for the top row and bottom row plots. We reiterate that the input parameters for \udshp and \hwc are chosen to compare \udshp and \hwc in the low relative error regime. We highlight our main findings below.

We observe that for smaller hypergraphs (\dawn, \tagm), \udshp and \hwc achieve impressive accuracy, however \udshp is consistently more than 10x faster than \hwc. In fact, \hwc is several times slower compared to \exact. On the other hand, \udshp is 3x-5x times faster compared to \exact. 
As the sizes of the hypergraphs increase, \exact gets much slower compared to \udshp and \hwc as LP solvers are known to have scaling issues.\footnote{Note that, although the {\tagsf} hypergraph has overall more edges than the \cmag hypergraph, the reporting interval for the latter is much longer than the former. Thus at any given interval, the latter hypergraph contains more edges leading to larger update times.}
For larger datasets, \udshp maintains a clear edge in terms of accuracy over \hwc even when their update times are almost identical or better for \udshp, as demonstrated by the last three columns. In~\Cref{sec:addlexp}, we show results for some more parameter settings (\Cref{fig:dyn_unweighted_error_time_2_eps}).  

To further quantify the gain of \udshp, in~\Cref{fig:avg_time_error_dyn_unweighted}, we compare the performances of \udshp against \hwc and \exact in terms of average relative error and average update time, where the average is taken over all the reporting intervals. We make several interesting observations. (1)
\udshp is 3x-5x faster than \exact for small hypergraphs; the gain becomes massive 10x-15x for larger graphs.
(2) Compared against \hwc, the avg. update time for \udshp can be 10x-12x smaller (\dawn and \tagm) while maintaining almost the same average relative error of less than 1\%. (3) At the other end of the spectrum, for almost the same average update time, \udshp can offer 55\%-90\% improvement in accuracy over \hwc (\cmag and \dblpall). (4) \hwc performs worse than \exact for smaller datasets, being slower by 3x-5x factors (\dawn and \tagm). 

\mypar{Efficiency vs Accuracy trade offs for UDSHP}In~\Cref{fig:dyn_unweighted_time_max_avg_density}, we plot {\em average update time}, and {\em average and max relative error} for \udshp for different values of $\epsilon\in \{1.0, 0.7, 0.5\}$. The max relative error is the maximum of the relative error over all the reporting intervals. 
As expected, when $\epsilon$ decreases, the update time increases and the average and maximum relative error incurred by \udshp decreases. 
\begin{figure}[H]
    \centering
    \includegraphics[scale=0.3]{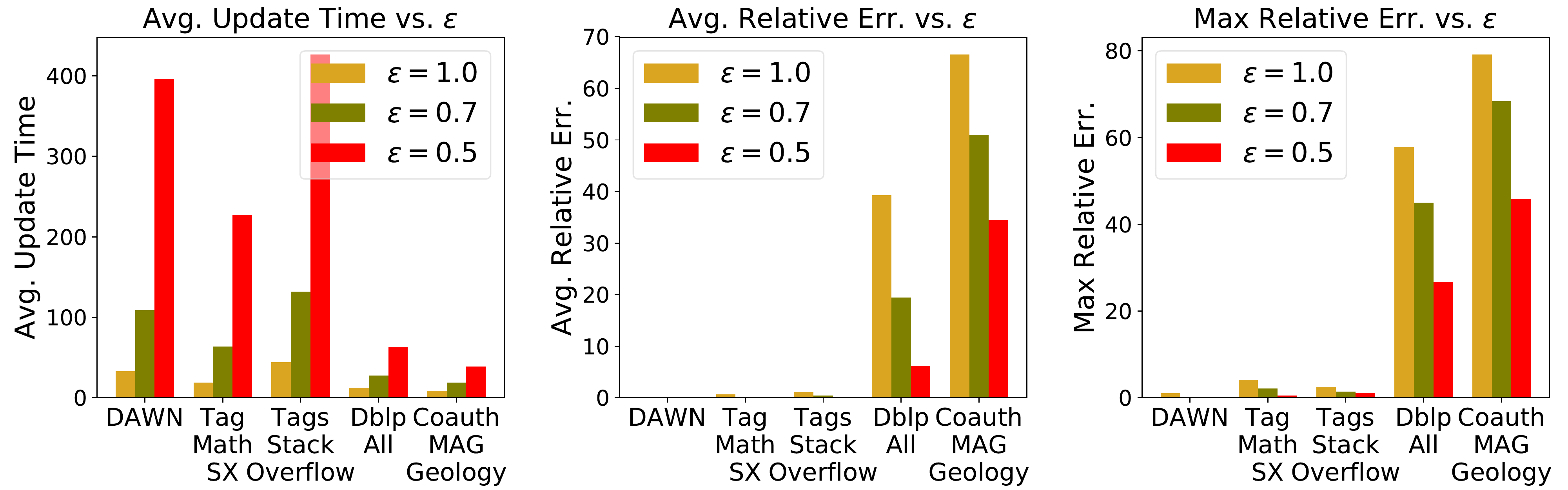}
    \caption{Accuracy vs Efficiency Trade-off for \udshp (for Unweighted Dynamic Hypergraphs): On the left, we plot the average update time for different settings of $\eps$. On the center and the right, we show the effect of $\eps$ on the average relative error and maximum relative error (over the reporting intervals).}
    \label{fig:dyn_unweighted_time_max_avg_density}
\end{figure}

We observe that for the hypergraphs (such as \dawn, \tagm, \tagsf) with high density values ($\Omega(\log n)$), the average and maximum relative errors are quite low ($<2-5\%$).
Thus, we recommend to use \udshp with larger values of $\epsilon$ (like  $\eps=1$) for these hypergraphs. Note that reduction in update time is quite dramatic ($\sim$8x) when increasing $\eps$ from 0.5 to 1.0 for these graphs.
For the hypergraphs with low density values ($o(\log n)$) the relative errors can go well above $30\%-40\%$ for larger values of $\eps$. 
Thus, we recommend using \udshp with smaller values of $\epsilon$ (like  $\eps=0.3$) for more accurate solutions, as for hypergraphs like \cmag, reducing $\eps$ from 1.0 to 0.5 reduces the average relative error from $70\%$ to $30\%$ (albeit at the expense of 3-fold increase in average update time).

\mypar{Weighted Hypergraphs}We give an account of our evaluations for weighted hypergraphs below.

\begin{figure*}[!h]
    \centering
    \includegraphics[width=\textwidth]{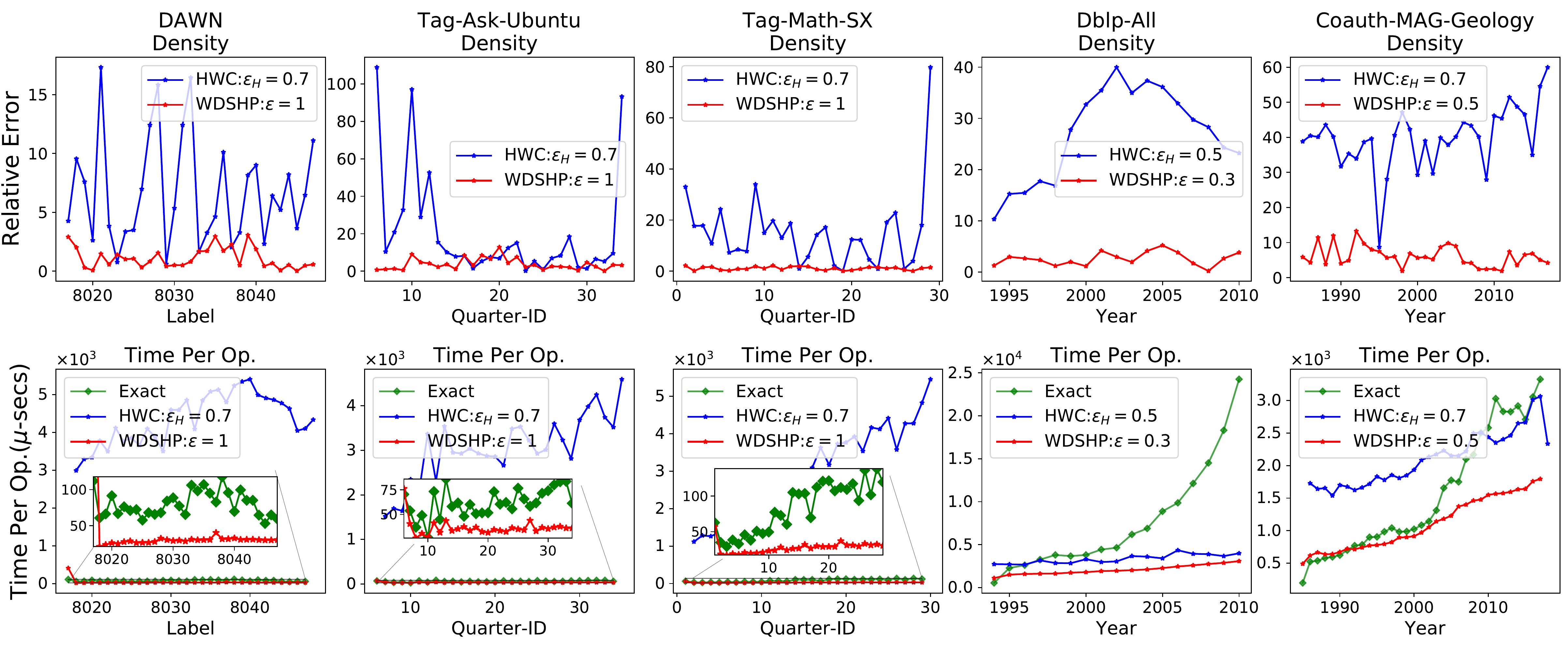}
    \caption{Accuracy and Efficiency Comparison for Weighted Dynamic Hypergraphs: The top row shows the relative error in the reported maximum density by \wdshp and \hwc with respect to \exact when run with the specified parameters. The bottom row plots the average update time taken by \wdshp, \hwc, and \exact for each reporting intervals. For each dataset (column), the parameter settings are identical.}
    \label{fig:dyn_weighted_error_time}
\end{figure*}
\mypar{Accuracy and Efficiency Comparison against Baselines}
    In~\Cref{fig:dyn_weighted_error_time}, we consider a similar setting as that of~\Cref{fig:dyn_unweighted_error_time} for the weighted case. In the top row, we compare the relative error percentage of \wdshp and \hwc. In the bottom row, we plot the average update times of \wdshp, \hwc, and \exact with same parameters (for each hypergraph). 
    
    We observe that, for most of the hypergraphs, the relative error of \hwc fluctuates quite a lot whereas that of the \wdshp remains quite stable. For \dblpall and \cmag, \wdshp outperforms \hwc quite remarkably in terms of accuracy. 
    In terms of average update time, \wdshp and \exact are comparable for smaller datasets, however \exact runs into scalability issue as the size increases (\cmag and \dblpall) and \wdshp easily outperforms \exact. \wdshp is remarkably better than \hwc in terms of update time  across all the datasets and over all the time points.
    This is perhaps expected, as in \hwc an edge with weight $w_e$ is processed by creating $w_e$ many copies of that edge.
    In \Cref{sec:addlexp}, we provide more results for different parameter settings (\Cref{fig:dyn_weighted_error_time_2_eps}).
    
    \begin{figure}
\centering
\includegraphics[scale=0.3]{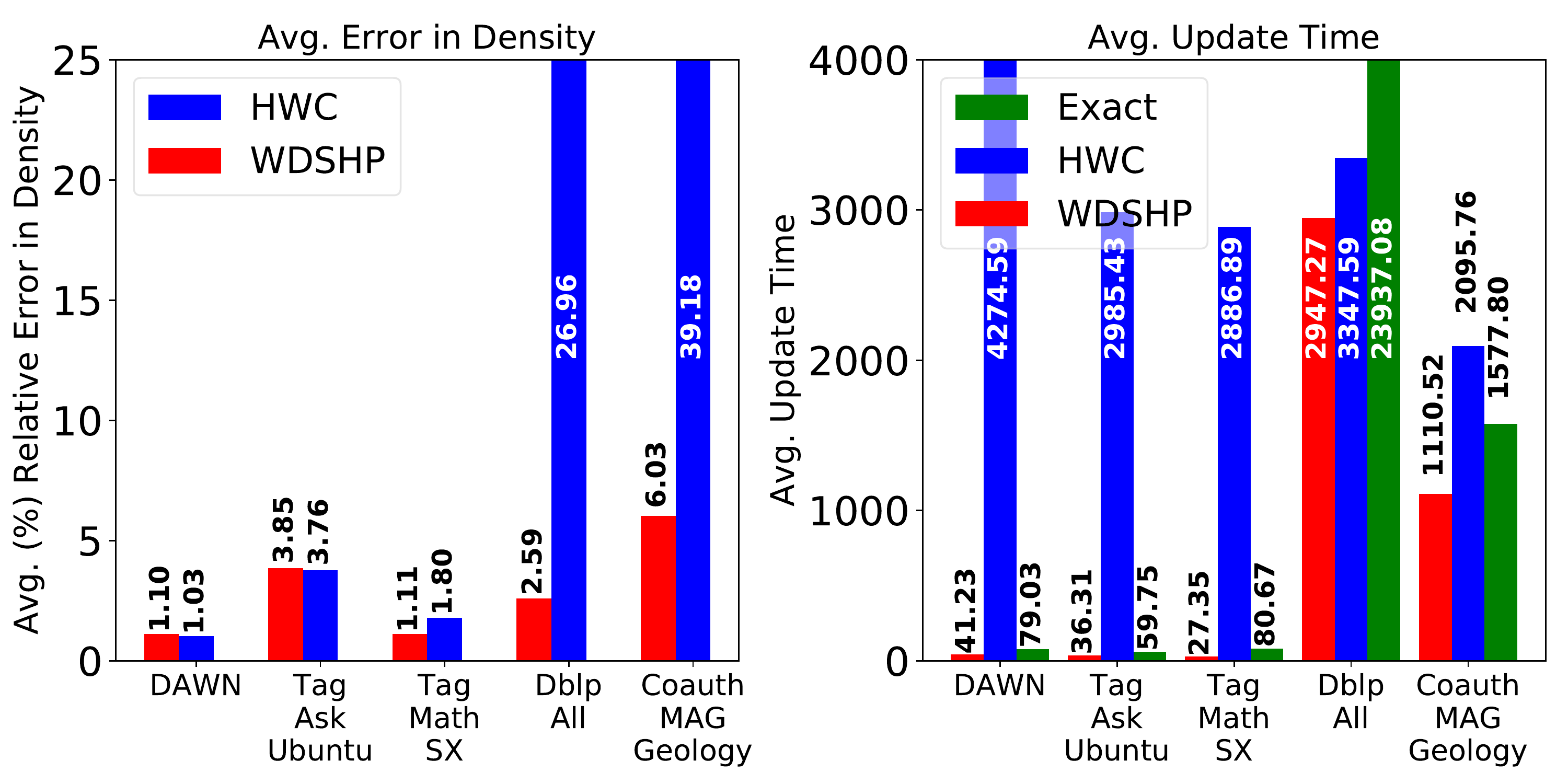}
\caption{Average Accuracy and Efficiency Comparison for fully dynamic weighted hypergraphs: on the left, we plot the average relative error of \wdshp and \hwc over all the reporting intervals for each dataset, and on the right, we compare the average update time of \wdshp, \hwc, and \exact over the entire duration.}
\label{fig:avg_time_error_dyn_weighted}
\end{figure}

    Analogous to the unweighted case, in~\Cref{fig:avg_time_error_dyn_weighted}, we quantify the average gain of \wdshp over the entire duration. Our main observations are as follows. (1) While for smaller datasets, \wdshp is 1.4x-3x faster on average compared to \exact, for larger datasets the gain can be as large as 8x (while maintaining $<6\%$ average error). This is remarkable considering that edge weights can be as large as 100. (2) For smaller hypergraphs (\dawn,\tagu,\tagm), to achieve a comparable accuracy of less than 5\%, \hwc is 100x times slower on average compared to \wdshp. (3) For larger datasets (\cmag, \dblpall), the average update times between \wdshp and \hwc are comparable; however \wdshp provides massive 80-90\% improvements over \hwc in terms of relative error. 
    
    \begin{figure}[h]
    \centering
        \includegraphics[scale=0.3]{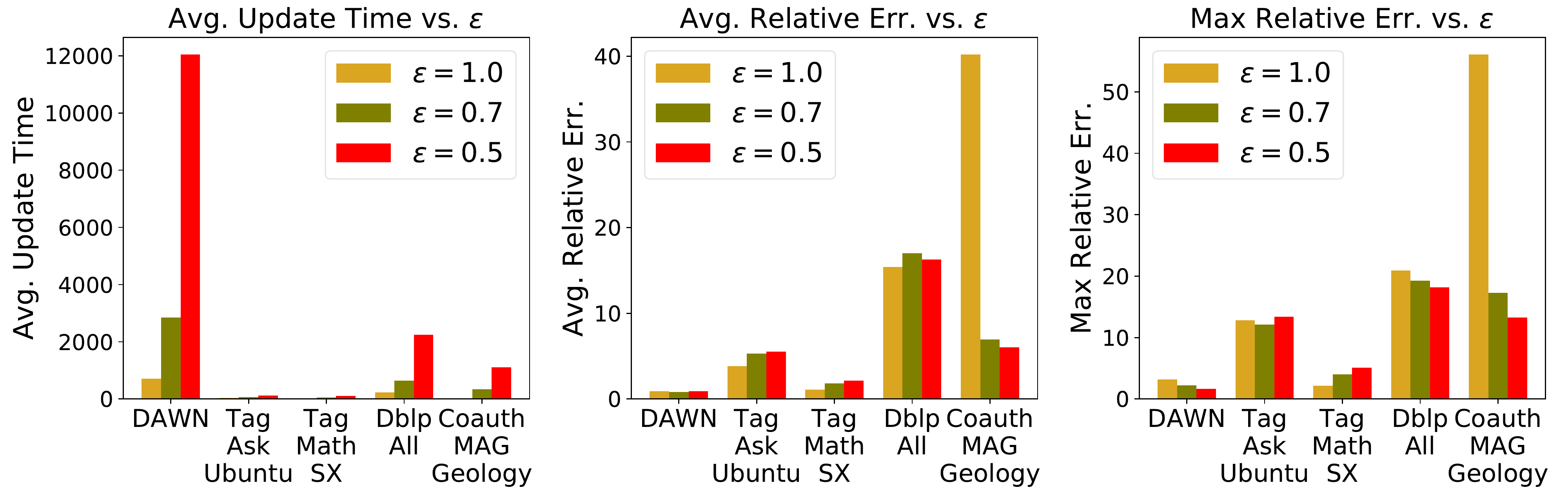}
        \caption{Accuracy vs Efficiency Trade-off for \wdshp (for Weighted Dynamic Hypergraphs): On the left, we plot the average update time for different settings of $\eps$. On the center and the right, we show the effect of $\eps$ on the average relative error and maximum relative error (over the reporting intervals).}
        \label{fig:dyn_weighted_time_max_avg_density}
    \end{figure}
    
    \mypar{Efficiency vs Accuracy trade offs for \wdshp}
    As similar to the unweighted setting,in~\Cref{fig:dyn_weighted_time_max_avg_density} we analyze the change in the {\em average update time} and the {\em average and max relative error} for \wdshp for different values of $\eps\in \{1.0, 0.7, 0.5\}$. 
    We observe that for hypergraphs (\dawn, \tagm, \tagu, \dblpall) with high density values ($\Omega(\log n)$), the relative error is less sensitive to the changes in $\eps$. 
    Addition of weights increases the density value of all the datasets, thus the relative error (both average and maximum) is not affected by a lot with changes in $\eps$ where as update time improves dramatically with increasing $\eps$.
    Thus, as similar to that in dynamic unweighted case, we recommend using \wdshp with higher values of $\eps$ (like $\eps=1$) for hypergraphs with high density values, and using low values of $\eps$ (like $\eps=0.3$) for hypergraphs with low density values ($o(\log n)$).

\subsection{Insertion-only Case}
\label{subsec:insertion_only}
In this subsection, we consider the insertion-only hyperedge stream where the hyperedges are added following their timestamp. We perform extensive experiments with both unweighted and weighted edges. 

\mypar{Unweighted Hypergraphs} Let us first focus on the unweighted case.
\begin{figure*}[!ht]
    \centering
    \includegraphics[width=\textwidth]{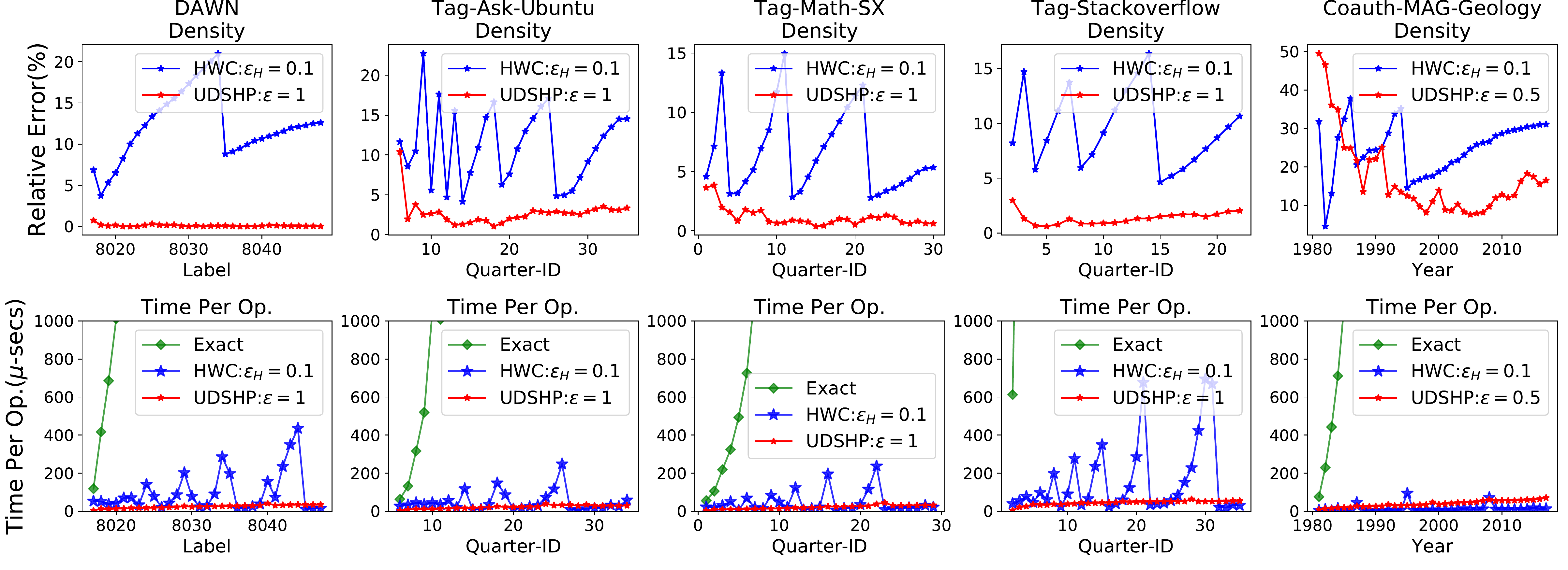}
    \caption{Accuracy and Efficiency Comparison for Unweighted Insertion-only Hypergraphs: The top row shows the relative error in the maximum density by \udshp and \hwc with respect to \exact when run with the specified parameters. The bottom row plots the average update time taken by \udshp, \hwc, and \exact for each reporting intervals. For each dataset (column), the parameter settings are identical.}
    \label{fig:inc_unweighted_error_speed}
\end{figure*}

\mypar{Accuracy and Efficiency Comparison against Baselines} Similar to the dynamic settings, 
    we first compare the accuracy of \udshp and \hwc with respect to \exact in~\Cref{fig:inc_unweighted_error_speed} top row. In the bottom row, we plot the average time taken per operation by \exact, \udshp, and \hwc during each reporting interval, maintaining the same parameter settings as that of top row. To further quantify the gain of \udshp, in~\Cref{fig:avg_time_error_inc_unweighted}, we compare the performances of \udshp against \hwc and \exact in terms of average relative error and average update time, where the average is taken over all the reporting intervals. We highlight some of our main findings below.

(1) Performance of \hwc fluctuates quite a lot over time as evident from the {\em saw-tooth} behaviour in the relative error and the update time curves for \hwc in~\Cref{fig:inc_unweighted_error_speed}. Thus, even when the average case update time for \hwc is low, the worst-case update time could be very high. In contrast, \udshp exhibits a much more stable behavior over time, making it more suitable practically. Note that this is consistent with the theoretical results for the respective algorithms since \hwc only guarantees small \emph{amortized} update time while \udshp guarantees small \emph{worst-case} update time.
In~\Cref{sec:addlexp}, we show results for more parameter settings (\Cref{fig:inc_unweighted_error_speed_2_eps}).
    
    \begin{figure}[H]
    \centering
    \includegraphics[scale=0.3]{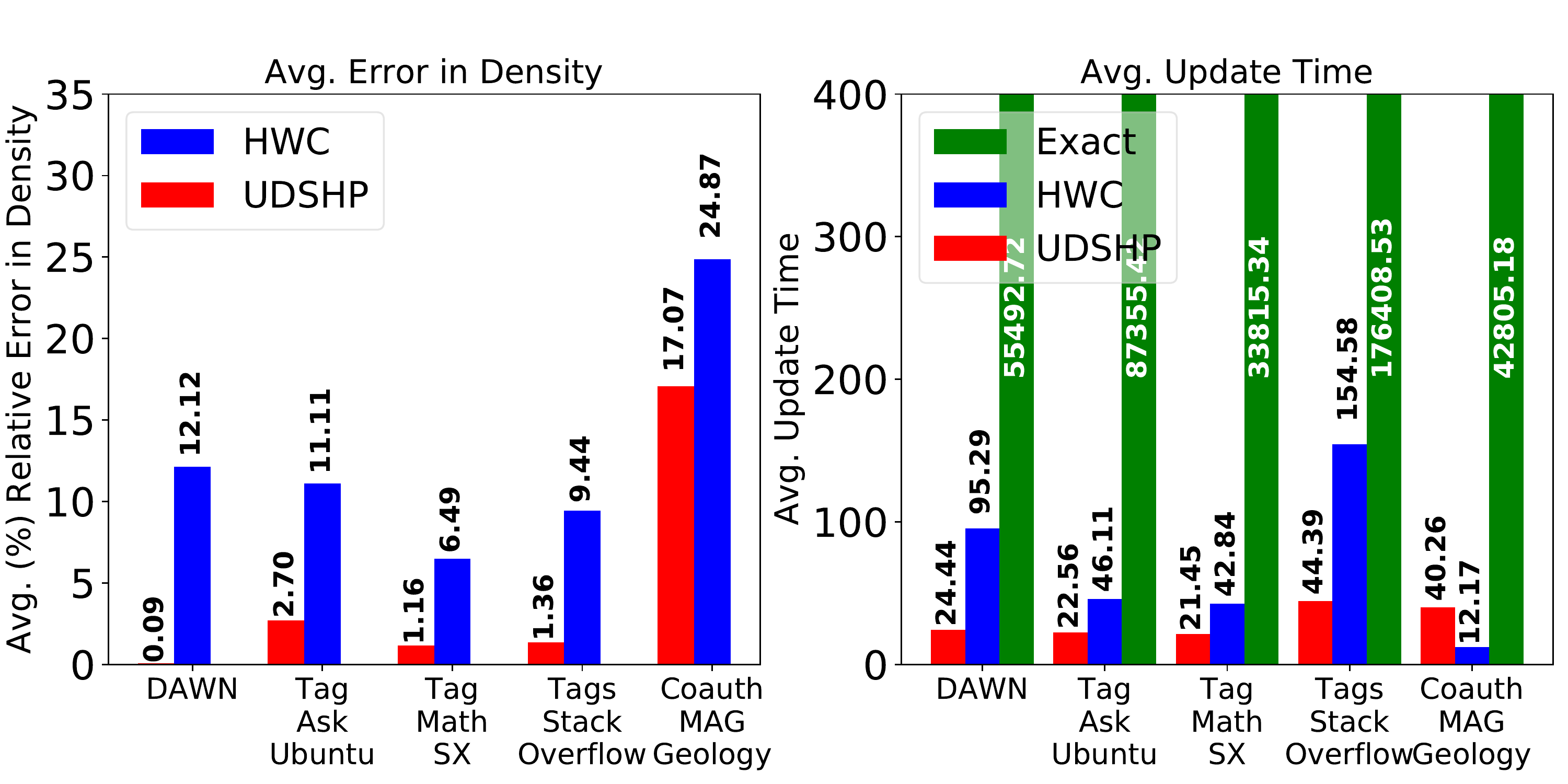}
    \caption{Average Accuracy and Efficiency Comparison for Unweighted Incremental Setting: On the left, we plot the average relative error of \udshp and \hwc, and on the right, we compare the average update time of \udshp, \hwc, and \exact. The average is taken over the entire duration.}
    \label{fig:avg_time_error_inc_unweighted}
\end{figure}
    (2) For the first four datasets, on average \udshp has 70\% better accuracy while being 2x-4x faster (on average) compared to \hwc (\Cref{fig:avg_time_error_inc_unweighted}). For the largest dataset \cmag, \hwc indeed has an edge over \udshp in terms of average update time while both incurring comparable loss in accuracy (\Cref{fig:avg_time_error_inc_unweighted}). However, as we noted before, the {\em saw-tooth} behavior of \hwc implies a higher worst-case update time for \hwc compared to \udshp (\Cref{fig:inc_unweighted_error_speed}).
    
    (3) \exact performs extremely poorly in the incremental settings, as one would expect. The sizes of the hypergraphs are much larger compared to the dynamic settings, making \exact extremely unsuitable for any practical purpose.
    
    \mypar{Efficiency vs Accuracy trade offs} As similar to that in the dynamic setting, in ~\Cref{fig:inc_unweighted_time_max_avg_error}, we analyze the change in the {\em average update time} and the {\em average and max relative error} for \udshp for different values of $\eps\in \{1.0, 0.7, 0.5\}$. 
    We observe that even if the update time is sensitive to change in $\eps$, the average and maximum relative error for all the high density ($\Omega(\log n)$) hypergraphs (\dawn, \tagu, \tagm, \tagsf) is low ($<10\%$). 
    And thus we recommend, using \udshp with high value of $\eps$ (like $\eps=1$) for these hypergraphs.
    On the other hand for the low density ($o(\log n)$) hypergraphs (like \cmag), we recommend using \udshp with low value of $\eps$ (like $\eps=0.5$ or $0.3$).
    \begin{figure}[!ht]
    \centering
    \includegraphics[scale=0.3]{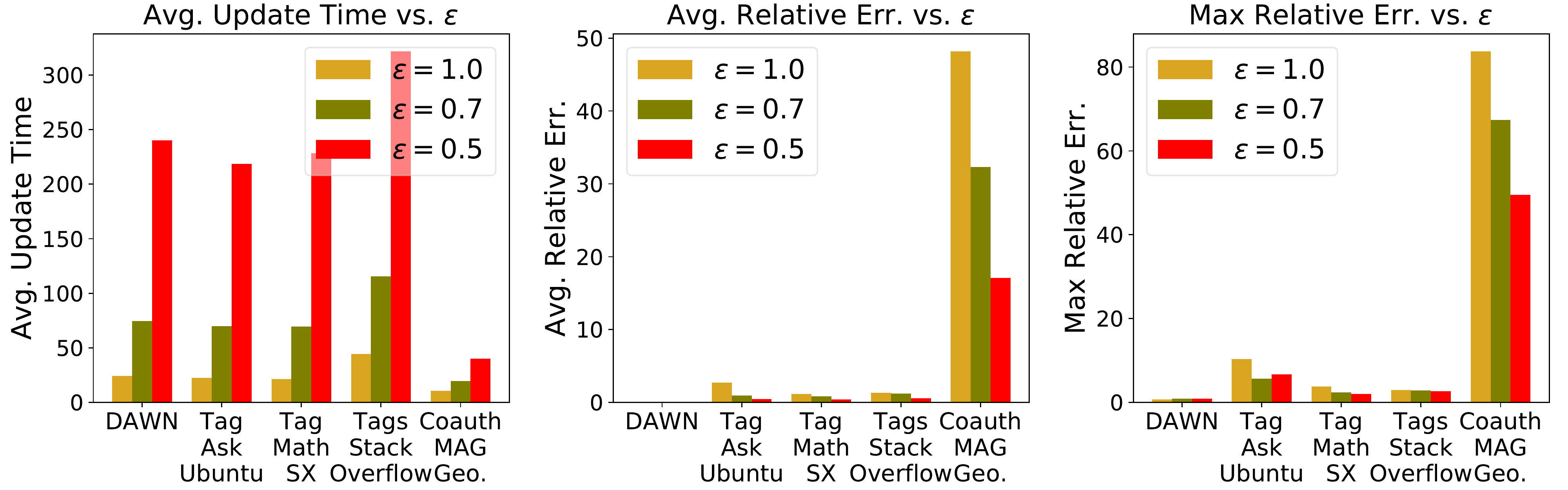}
    \caption{Accuracy vs Efficiency Trade-off for Unweighted Incremental Hypergraphs (\udshp): We plot the average update time (left), average relative error (middle), and maximum relative error (right) over the reporting intervals for different settings of $\eps$.}
    \label{fig:inc_unweighted_time_max_avg_error}
\end{figure}

\mypar{Weighted Hypergraphs}
\label{sec:incremental_weighted_expts}
We now turn to our experimental evaluations for the weighted case.

\begin{figure}[!ht]
    \centering
    \includegraphics[width=\textwidth]{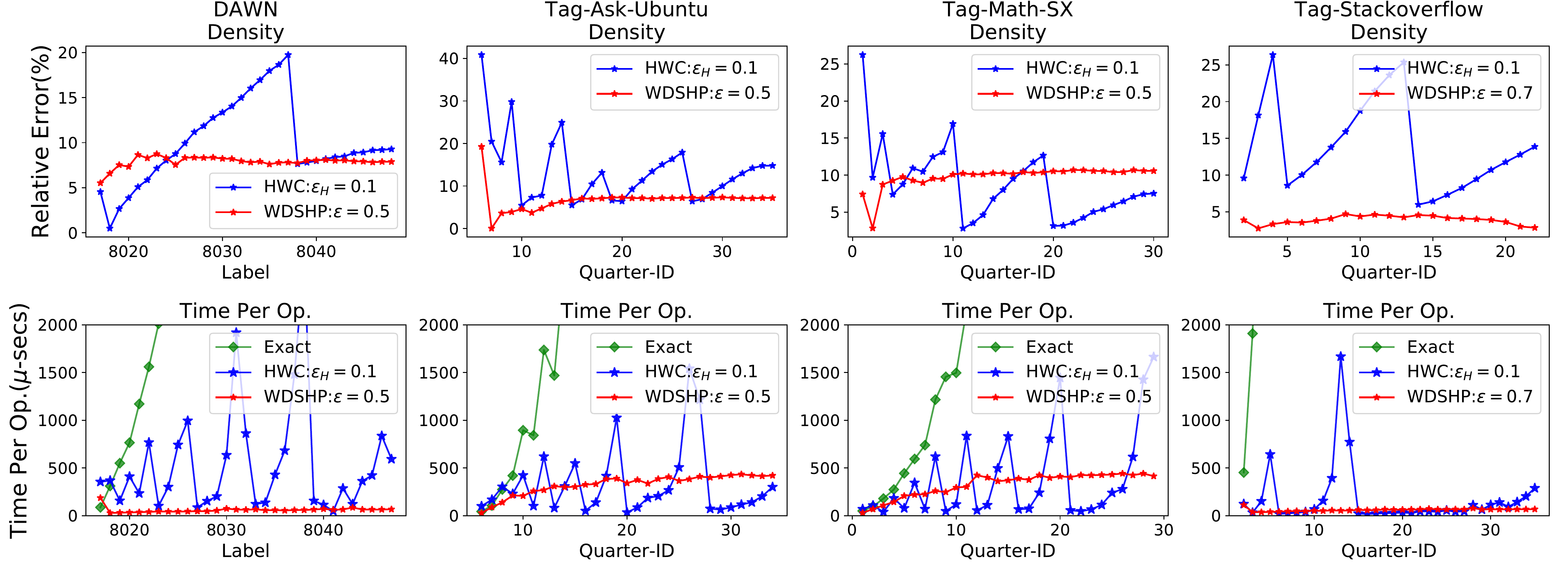}
    \caption{Accuracy and Efficiency Comparison for Weighted Insertion-only Hypergraphs: The top row shows the relative error in the maximum density by \wdshp and \hwc with respect to \exact when run with the specified parameters. The bottom row plots the average update time taken by \wdshp, \hwc, and \exact for each reporting intervals. For each dataset (column), the parameter settings are identical.}
    \label{fig:inc_weighted_error_time}
\end{figure}
\begin{figure}[!ht]
    \centering
    \includegraphics[scale=0.3]{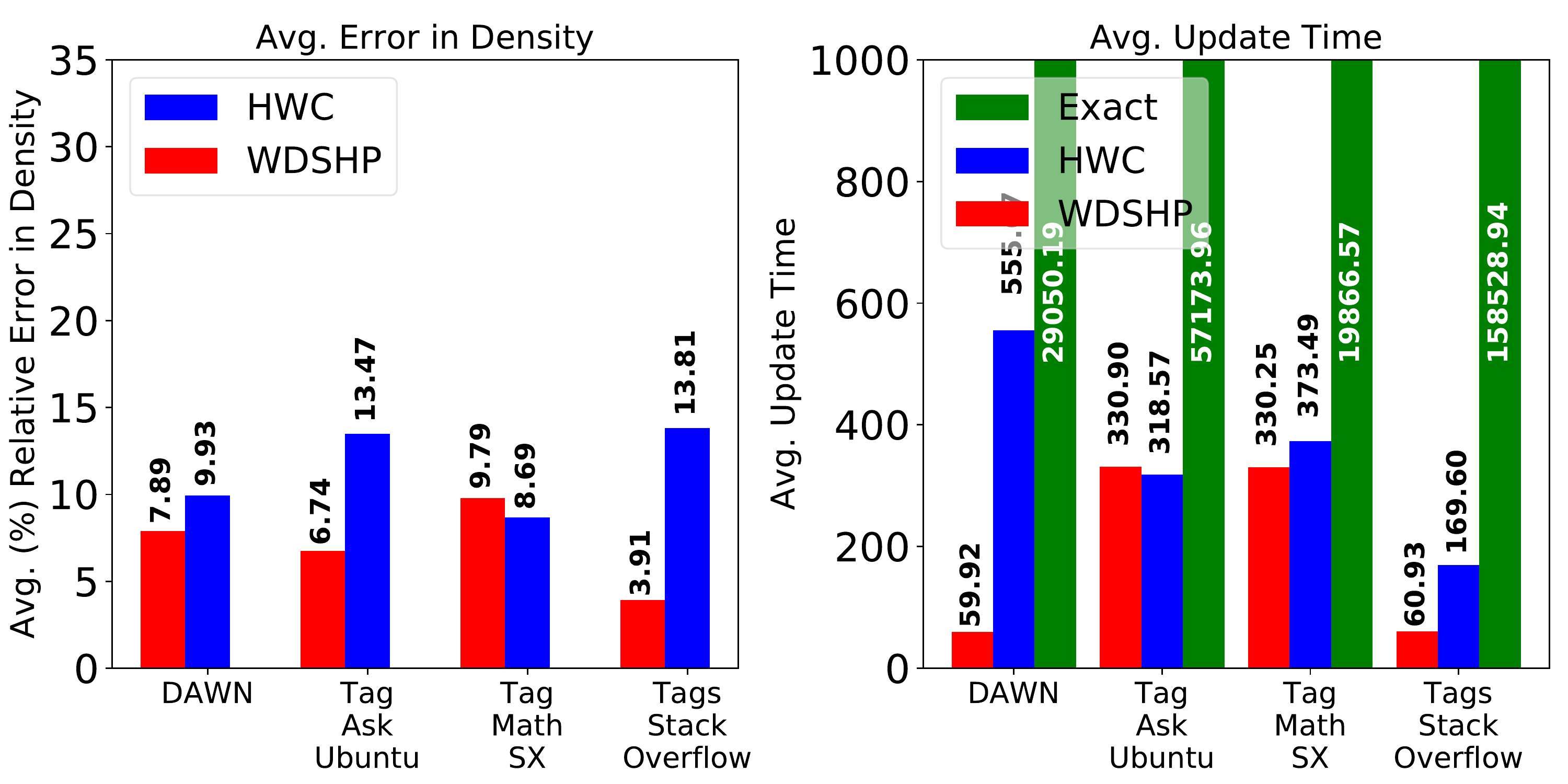}
    \caption{Average Accuracy and Efficiency Comparison for Weighted Incremental Setting: On the left, we plot the average relative error of \wdshp and \hwc, and on the right, we compare the average update time of \wdshp, \hwc, and \exact. The average is taken over the entire duration.}
    \label{fig:avg_time_error_inc_weighted}
\end{figure}
\vspace{-0.5mm}
\mypar{Accuracy and Efficiency Comparison against Baselines}
    Similar to the incremental unweighted setting, here as well we first compare the accuracy of \wdshp and \hwc with respect to \exact in~\Cref{fig:inc_weighted_error_time} top row. In the bottom row, we plot the average time taken per operation by \exact, \wdshp, and \hwc during each reporting interval, maintaining the same parameter settings as that of top row. 
    In \Cref{sec:addlexp}, we show results for some more parameter settings (\Cref{fig:inc_weighted_error_time_2_eps}).
    
    To further quantify the gain of \udshp, in~\Cref{fig:avg_time_error_inc_weighted}, we compare the performances of \udshp against \hwc and \exact in terms of average relative error and average update time, where the average is taken over all the reporting intervals. Following are some of our main findings:
    
   
    (1) Performance of \hwc fluctuates quite a lot over time as evident from the {\em saw-tooth} behaviour in the relative error and the update time curves for \hwc in~\Cref{fig:inc_weighted_error_time}. 
    Thus, even when the average case update time for \hwc is low, the worst case update time could be very high. 
    In contrast, \udshp exhibits a much more stable behavior over time, making it more suitable for practical use.
    Additionally, as the size of the hypergraph increases, we observe that the worst case update time for \hwc increases significantly as well, as compared to that of \wdshp.
    
    (2) As similar to that of the incremental unweighted setting, here as well, \exact performs extremely poorly as expected.
    And the similar reason holds here, i.e. the sizes of the hypergraphs are much larger compared to the dynamic settings, making \exact extremely unsuitable for any practical purpose.
    
    (3) Across all these datasets, \wdshp performs better than \hwc in terms of average relative error except for the \tagm dataset, where \hwc is around 12\% better.
    And in terms of average update time, \wdshp is either comparable to or significantly outperforms \hwc.
    
\begin{figure}[h]
    \centering
    \includegraphics[scale=0.3]{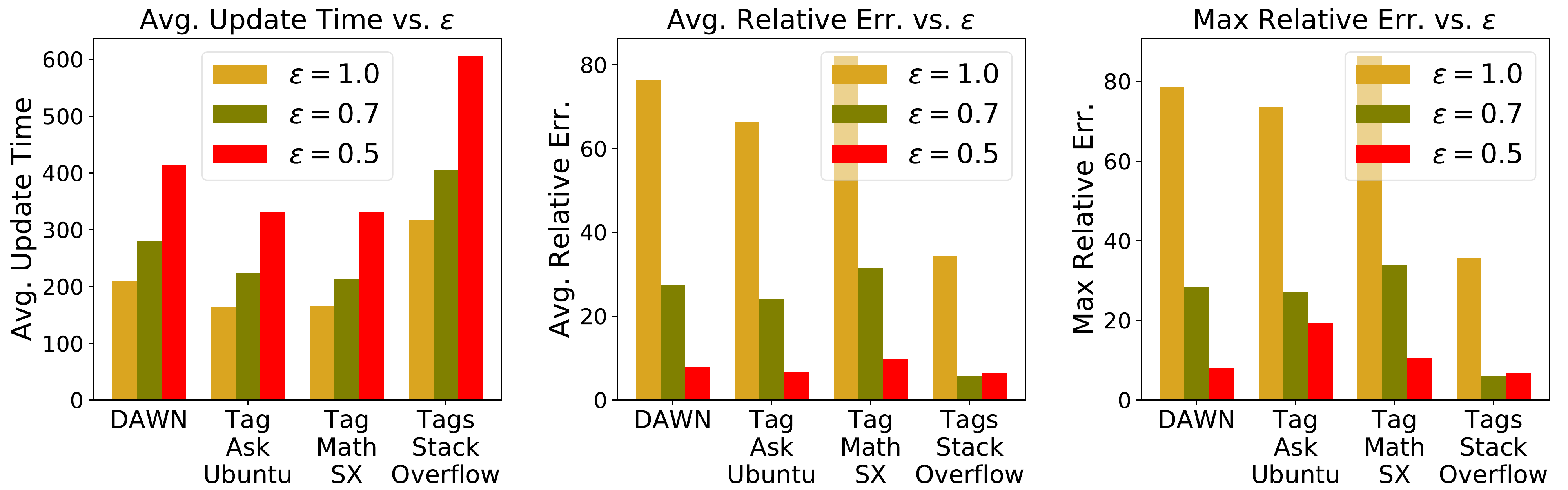}
    \caption{Accuracy vs Efficiency Trade-off: We give trade-offs for \wdshp for Weighted Insertion only Hypergraphs: On the left, we plot the average update time for different settings of $\eps$. In the middle and right, we show the effect of $\eps$ on the average relative error and maximum relative error (over the reporting intervals).}
    \label{fig:inc_weighted_time_max_avg_error}
\end{figure}

\mypar{Efficiency vs Accuracy trade offs for \wdshp}
In~\Cref{fig:inc_weighted_time_max_avg_error}, we show the performance of \wdshp in terms of average update time and average and maximum relative error incurred for different values of $\epsilon \in \{0.5, 0.7, 1.0\}$.
Here we observe that the gain in relative error is high as compared to that in average update time for change in $\epsilon$ from $1$ to $0.5$.  
For example, for the \dawn dataset, changing $epsilon$ from $1$ to $0.5$ increases the average update time by 2x times, whereas, the improvement in average and maximum relative error is $\sim80\%$.


\section{Conclusions}
\label{sec:conclusions}
In this work, we give a new dynamic algorithm for the
densest subhypergraph problem that maintains an $(1+\eps)$-approximate 
solution with $O(\text{poly}(r,\log n,\eps^{-1}))$ worst-case update time. 
Our approximation ratio is independent of the rank $r$, a significant improvement 
over previous works. 
Through extensive experiments, we analyze the accuracy and efficiency of our algorithm and find that it significantly outperforms the state of the art ~\cite{hu2017maintaining} both in terms of accuracy and efficiency. While our update time is near-optimal in terms of $r$, the asymptotic dependency on $\eps^{-1}$ is rather larger. We leave the question of improving this dependency as an interesting future direction.

\bibliographystyle{alpha}
\bibliography{refs}

\appendix

\section{Additional Experimental Results}\label{sec:addlexp}

In this section, we provide some additional experimental results for various settings of the parameters $\eps$ and $\eps_H$. They follow similar trends as described for each case in \Cref{sec:exp}.

\mypar{Fully Dynamic Setting: Unweighted and Weighted}
\begin{figure*}[!ht]
    \centering
    \includegraphics[width=\textwidth]{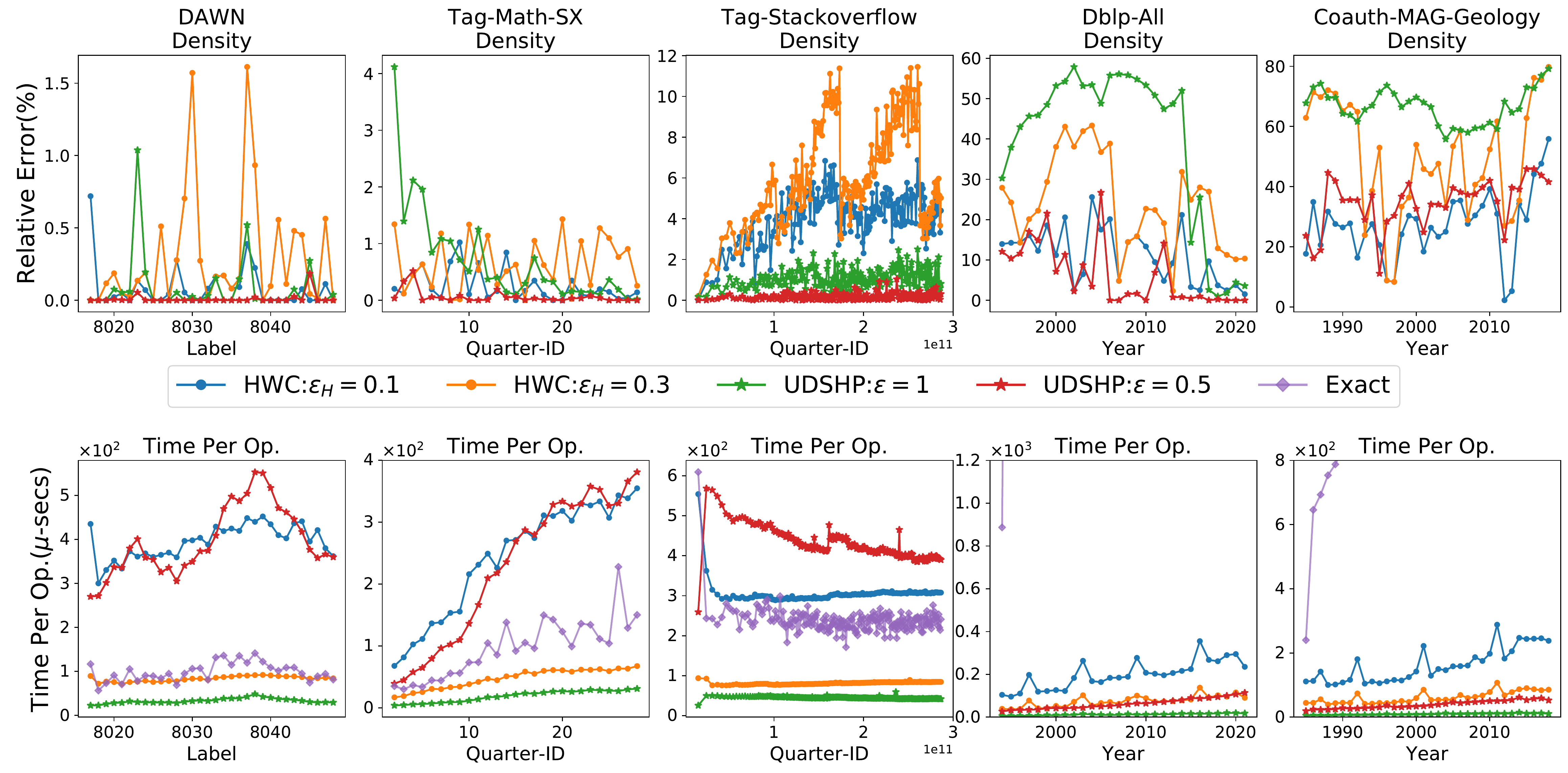}
    \caption{Accuracy and Efficiency Comparison for Unweighted Dynamic Hypergraphs: The top row shows the relative error in the reported maximum density by \udshp and \hwc with respect to \exact when run with the specified parameters. The bottom row plots the average update time taken by \udshp, \hwc, and \exact for each reporting intervals. For each dataset (column), the parameter settings are identical.}
    \label{fig:dyn_unweighted_error_time_2_eps}
\end{figure*}

\begin{figure}[H]
    \centering
    \includegraphics[width=\textwidth]{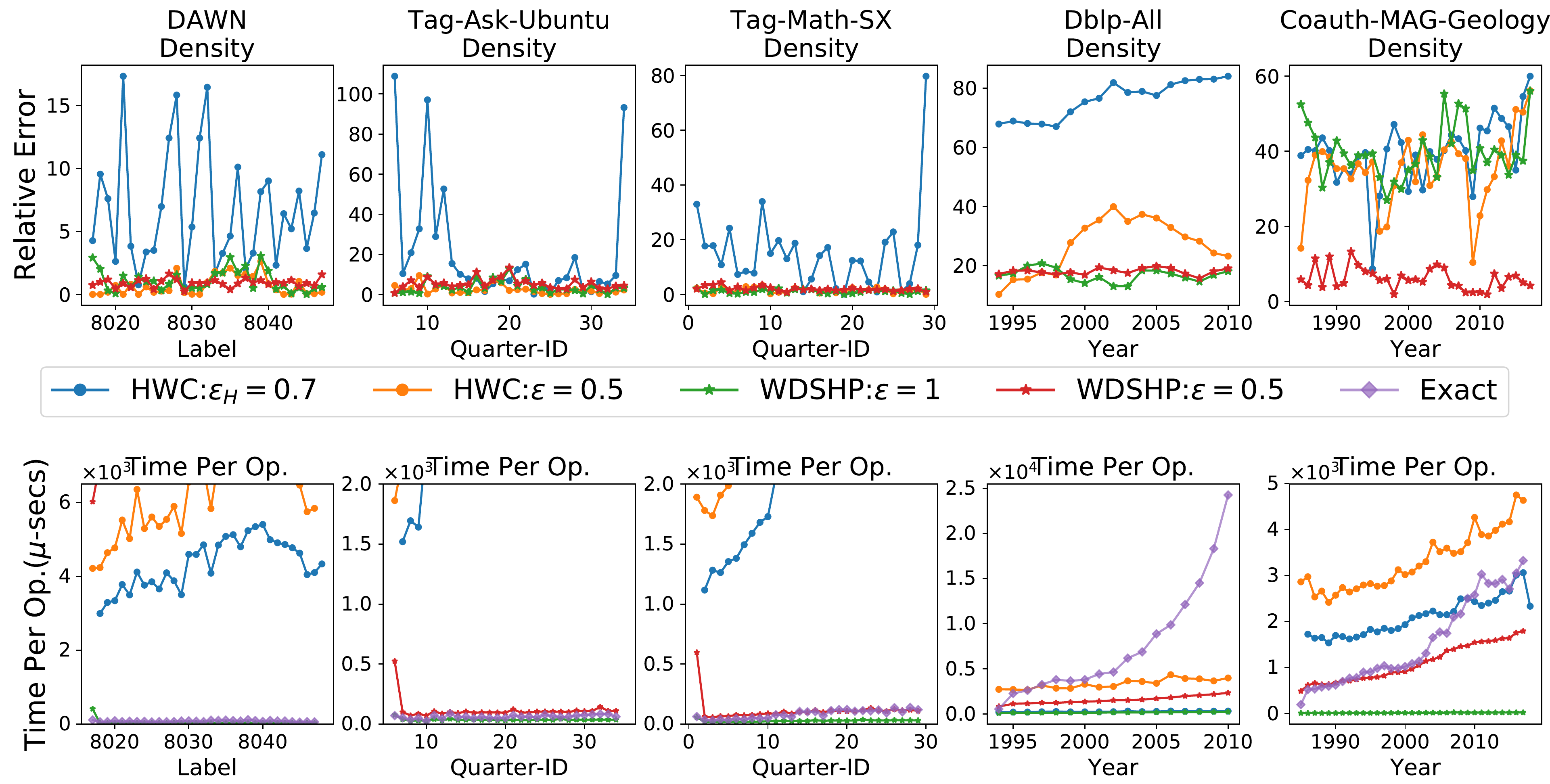}
    \caption{Accuracy and Efficiency Comparison for Weighted Dynamic Hypergraphs: The top row shows the relative error in the reported maximum density by \udshp and \hwc with respect to \exact when run with the specified parameters. The bottom row plots the average update time taken by \udshp, \hwc, and \exact for each reporting intervals. For each dataset (column), the parameter settings are identical.}
    \label{fig:dyn_weighted_error_time_2_eps}
\end{figure}

\mypar{Insert-only Setting: Unweighted and Weighted}
\begin{figure}[H]
    \centering
    \includegraphics[width=\textwidth]{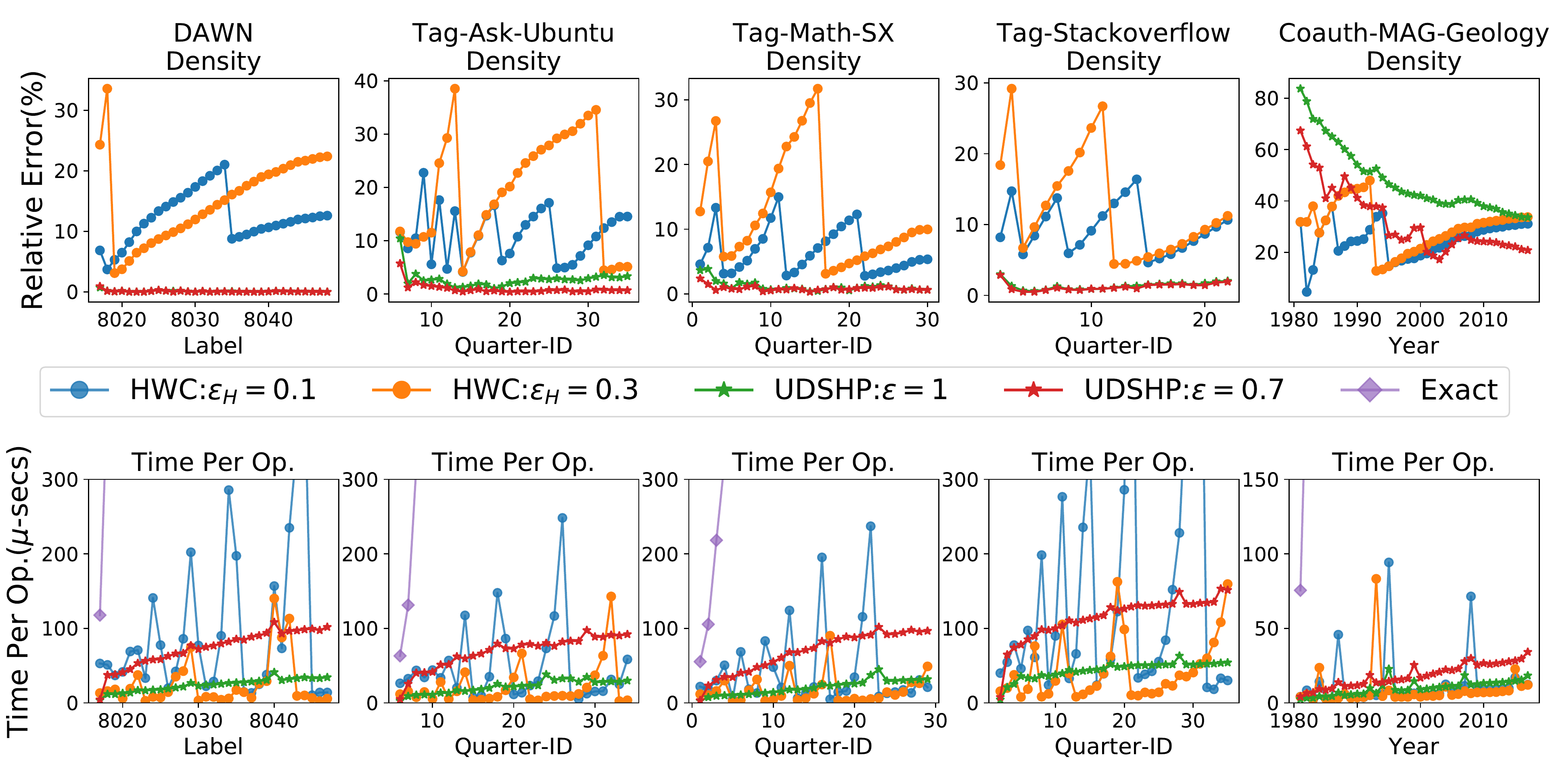}
    \caption{Accuracy and Efficiency Comparison for Unweighted Insertion-only Hypergraphs: The top row shows the relative error in the maximum density by \udshp and \hwc with respect to \exact when run with the specified parameters. The bottom row plots the average update time taken by \udshp, \hwc, and \exact for each reporting intervals. For each dataset (column), the parameter settings are identical.}
    \label{fig:inc_unweighted_error_speed_2_eps}
\end{figure}

\begin{figure}[H]
    \centering
    \includegraphics[width=\textwidth]{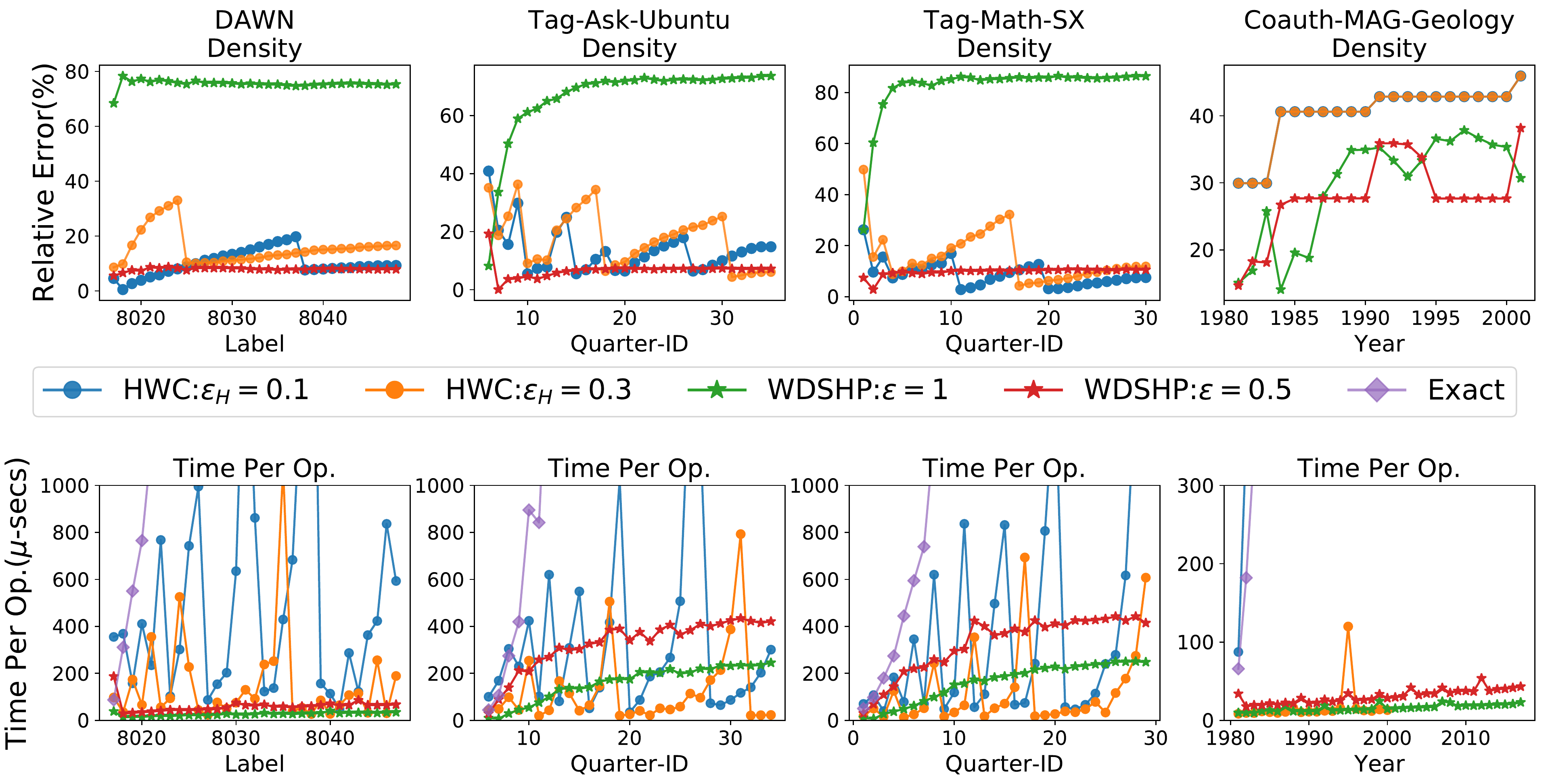}
    \caption{Accuracy and Efficiency Comparison for Weighted Insertion-only Hypergraphs: The top row shows the relative error in the maximum density by \wdshp and \hwc with respect to \exact when run with the specified parameters. The bottom row plots the average update time taken by \wdshp, \hwc, and \exact for each reporting intervals. For each dataset (column), the parameter settings are identical.}
    \label{fig:inc_weighted_error_time_2_eps}
\end{figure}

\end{document}